\newcommand{\p}{{\em P}\xspace}
\newcommand{\np}{{\em NP}\xspace}
\newcommand{\nphard}{\np-hard\xspace} 
\newcommand{\apx}{{\em APX}\xspace}
\DeclareMathOperator*{\Exp}{E}
\newcommand{\E}[2][{}]{\ensuremath{{\textstyle \Exp_{#1}}\left[#2\right]}}
\newtheorem{theorem}{Theorem}[section]
\newtheorem{lemma}[theorem]{Lemma}
\newtheorem{claim}[theorem]{Claim}
\newtheorem{corollary}[theorem]{Corollary}
\newtheorem{remark}[theorem]{Remark}}
\def\blksquare{\rule{2mm}{2mm}}
\def\qedsymbol{\blksquare}
\newcommand{\bg}[1]{\medskip\noindent{\bf #1}}
\newcommand{\ed}{{\hfill\qedsymbol}\medskip}
\newenvironment{proof}{\bg{Proof : }}{\ed}
\newenvironment{proofnobox}{\bg{Proof : }}{\medskip}
\newcommand{\R}{\ensuremath{\mathbb R}}
\newcommand{\T}{\ensuremath{\mathcal T}}
\newcommand{\Pc}{\ensuremath{\mathcal P}}
\newcommand{\Sc}{\ensuremath{\mathcal S}}
\newcommand{\V}{\ensuremath{\mathcal V}}
\newcommand{\OPT}{\ensuremath{\mathit{OPT}}}
\newcommand{\sm}{\ensuremath{\setminus}}
\newcommand{\es}{\ensuremath{\emptyset}}
\newcommand{\size}[1]{\ensuremath{\left|#1\right|}}
\newcommand{\ceil}[1]{\ensuremath{\left\lceil#1\right\rceil}}
\newcommand{\poly}{\operatorname{poly}}
\newcommand{\polylog}{\operatorname{polylog}}
\newcommand{\junk}[1]{}
\newcommand{\sse}{\subseteq}
\newcommand{\union}{\cup}
\newcommand{\assign}{\ensuremath{\leftarrow}}
\newcommand{\al}{\ensuremath{\alpha}}
\newcommand{\sg}{\ensuremath{\sigma}}
\newcommand{\ld}{\ensuremath{\lambda}}
\newcommand{\Om}{\ensuremath{\Omega}}
\newcommand{\dt}{\ensuremath{\delta}}
\newcommand{\Dt}{\ensuremath{\Delta}}
\newcommand{\e}{\ensuremath{\epsilon}}
\newcommand{\ve}{\ensuremath{\varepsilon}}
\newcommand{\gm}{\ensuremath{\gamma}}
\newcommand{\Gm}{\ensuremath{\Gamma}}
\newcommand{\ov}[1]{\ensuremath{\overline{#1}}}
\newcommand{\del}{\ensuremath{\partial}}
\newcommand{\lb}{\ensuremath{\mathit{lb}}}
\newcommand{\ub}{\ensuremath{\mathit{ub}}}
\newcommand{\vol}{\ensuremath{\mathrm{vol}}}
\newcommand{\mc}{\ensuremath{\mathsf{MC}}\xspace}
\newcommand{\kmc}{\ensuremath{k}-\mc}
\newcommand{\nmc}{\ensuremath{\mathsf{NMC}}\xspace}
\newcommand{\knmc}{\ensuremath{k}-\nmc}
\newcommand{\edknmc}{\ensuremath{\mathsf{ED}}-\knmc}
\newcommand{\ndknmc}{\ensuremath{\mathsf{ND}}-\knmc}
\newcommand{\mwc}{\ensuremath{\mathsf{MWC}}\xspace}
\newcommand{\kmwc}{\ensuremath{k}-\mwc}
\newcommand{\mic}{\ensuremath{(s,t)\text{-}\mathsf{Cut}}\xspace}
\newcommand{\mfip}{\ensuremath{\mathsf{MFIP}}\xspace}
\newcommand{\kmic}{\ensuremath{k}-\mic}
\newcommand{\mcalg}{\ensuremath{\mathsf{MCAlg}}\xspace}
\newcommand{\nmcalg}{\ensuremath{\mathsf{NMCAlg}}\xspace}
\newcommand{\twomcalg}{\ensuremath{2\text{-}}\mcalg}
\newcommand{\kmcalg}{\ensuremath{k\text{-}}\mcalg}
\newcommand{\edtwonmcalg}{\ensuremath{\mathsf{ED}\text{-}2\text{-}}\nmcalg}
\newcommand{\ndknmcalg}{\ensuremath{\mathsf{ND}\text{-}k\text{-}}\nmcalg}
\newcommand{\ssve}{\ensuremath{\mathsf{SSVE}}\xspace}
\newcommand{\dks}{\ensuremath{\mathsf{D}k\mathsf{S}}\xspace}
\newcommand{\mindks}{\ensuremath{\mathsf{Min}}\dks}
\newcommand{\minrep}{\ensuremath{\mathsf{MinRep}}\xspace}
\newcommand{\optset}{\ensuremath{O^*}}
\newcommand{\bG}{\ensuremath{\ov{G}}}
\newcommand{\bE}{\ensuremath{\ov{E}}}
\newcommand{\tG}{\ensuremath{\tilde G}}
\newcommand{\cl}{\ensuremath{\sg}}
\newcommand{\ignore}[1]{}
\title{Improved Region-Growing and Combinatorial Algorithms for $k$-Route Cut Problems}
\author{
    Guru Guruganesh\thanks{{\tt  ggurugan@cs.cmu.edu}.
    Department of Computer Science, Carnegie Mellon University, Pittsburgh, USA.
    Work done while the author was an undergraduate research assistant at the University of Waterloo under Prof. Sanita.}
\and
    Laura Sanita\thanks{{\tt sanita@uwaterloo.ca}.
    Dept. of Combinatorics and Optimization, Univ. Waterloo, Waterloo, ON N2L 3G1.
    }
\and
    Chaitanya Swamy\thanks{{\tt cswamy@math.uwaterloo.ca}.  
    Dept. of Combinatorics and Optimization, Univ. Waterloo, Waterloo, ON N2L
    3G1. Supported in part by NSERC grant 327620-09, an NSERC Discovery Accelerator
    Supplement Award and an Ontario Early Researcher Award.}  
}
\date{} 
\begin{document}

\maketitle

\begin{abstract}
We study the {\em $k$-route} generalizations of various cut problems, 
the most general of which is \emph{$k$-route multicut} (\kmc) problem, wherein we have $r$
source-sink pairs and the goal is to delete a minimum-cost set of edges to 
reduce the edge-connectivity of every source-sink pair to below $k$.
The $k$-route extensions of multiway cut (\kmwc), and the minimum $s$-$t$ cut problem
(\kmic), are similarly defined.  
We present various approximation and hardness results for  \kmc, \kmwc, and \kmic that
improve the state-of-the-art for these problems in several cases.
Our contributions are threefold.
\begin{list}{$\bullet$}{\itemsep=0ex \addtolength{\leftmargin}{-2ex}}
\item For {\em $k$-route multiway cut}, we devise simple, but surprisingly effective, 
combinatorial algorithms that yield bicriteria approximation guarantees 
that markedly improve upon the previous-best guarantees. 

\item For {\em $k$-route multicut}, we design algorithms that improve upon
the previous-best approximation factors by roughly an $O(\sqrt{\log r})$-factor, when
$k=2$, and for general $k$ and unit costs and any fixed violation of the connectivity
threshold $k$. 
The main technical innovation is the definition of a new, powerful
\emph{region growing} lemma that allows us to perform region-growing in a recursive
fashion even though 
the LP solution yields a {\em different metric} for each source-sink pair, and 
{\em without incurring an $O(\log^2 r)$ blow-up} in the cost that is inherent in some
previous applications of region growing to $k$-route cuts. 
We obtain the same benefits as~\cite{EvenNRS00} do
in their divide-and-conquer algorithms, and thereby obtain an 
$O(\ln r\ln\ln r)$-approximation to the cost.
We also obtain some extensions to $k$-route node-multicut problems.

\item We complement these results by showing that the {\em $k$-route $s$-$t$ cut}
problem is at least as hard to approximate as the {\em densest-$k$-subgraph} (\dks)
problem on uniform hypergraphs. In particular, this implies that one cannot avoid a
$\poly(k)$-factor if one seeks a unicriterion approximation, without improving the
state-of-the-art for \dks on graphs, and proving the existence of a family of one-way
functions. Previously, only \nphard{}ness of \kmic was known.
\end{list}
\end{abstract}

\section{Introduction}
The problem of finding minimum size cuts for a given graph has a rich history in the field of combinatorial optimization,
with a wide range of applications in logistics, transportation and telecommunication systems.
One key problem of interest is that of disconnecting a given set of node pairs in a
network by removing edges at minimum cost. 
Formally, in the \emph{multicut} problem, we are given
an undirected graph $G=(V,E)$ with nonnegative edge costs $\{c_e\}_{e\in E}$ and
pairs of nodes $(s_1,t_1),\ldots,(s_r,t_r)$ called source-sink pairs or commodities, and
we seek a minimum-cost set of edges whose removal disconnects every $s_i$-$t_i$ pair.
Two special cases of this problem have by themselves attracted widespread attention:
(i) the celebrated \emph{minimum $s$-$t$ cut} problem, which is the special case when $r=1$; 
and (ii) the \emph{multiway cut} problem~\cite{DahlhausJPSY92}, 
where every pair of nodes from a given set $T\sse V$ of terminals forms a commodity.
These cut problems and their variants have been widely studied in terms of hardness and
approximation (see, e.g.,~\cite{Vazirani01,WilliamsonS10}), 
have numerous direct applications (e.g., identifying bottlenecks in a network), 
and algorithms for 
them serve as important 
primitives in the design of divide-and-conquer algorithms (see, e.g.,~\cite{EvenNRS00,Shmoys97}) 
and find application in diverse settings such as image segmentation, VLSI design and network routing 
(see, e.g.,~\cite{LeightonR99, MalikS00, BhattL84, Racke02}). 

We study a natural generalization of the above cut problems 
motivated by the fact that in various settings,
we are not interested in a complete disconnection of our terminals but rather in reducing
their connectivity below a certain threshold. 
Specifically, in the {\em $k$-route multicut} (\kmc) problem, the input is a multicut
instance and an integer $k\geq 1$; 
the goal is to find a minimum-cost set $F\sse E$ of edges so that there are at most
$(k-1)$ edge-disjoint $s_i$-$t_i$ paths in $(V,E\sm F)$ for all $i=1,\ldots,r$. 
We define the {\em $k$-route multiway cut} (\kmwc), and the {\em $k$-route $(s,t)$-cut}
(\kmic) problems analogously.

The study of $k$-route cut problems can be motivated from various perspectives.
One motivation comes from the fact that $k$-route cuts are dual objects to 
{\em $k$-route flows}~\cite{Kishimoto96}, 
which can be seen as a robust or fault-tolerant version of flows 
where we seek to send traffic along tuples of $k$ edge-disjoint paths. 
A $k$-route cut establishes an upper bound on the value (suitably defined) of the
maximum $k$-route flow, and can thus 
be seen as identifying the bottleneck in a network when we seek a certain level of
robustness. 
$k$-route cut problems can also be directly motivated as 
abstracting the problem of an
attacker who seeks to reduce connectivity in a given network 
while incurring minimum cost. Viewed from this perspective, $k$-route cut problems
are closely related to {\em network interdiction} problems, which typically consider the
complementary objective of minimizing source-sink connectivity subject to a budget
constraint on the edge-removal cost~\cite{Phillips93,Wood93,Zenklusen10}.  

The $k$-route cut problems are at least as hard as their 1-route counterparts.
Multicut and multiway cut are \apx-hard~\cite{DahlhausJPSY92}, with the former 
not admitting any constant-factor approximation assuming the unique-games
conjecture~\cite{ChawlaGR08}, and \kmic is \nphard; 
hence, we focus on approximation algorithms. 
Moreover, as highlighted in~\cite{ChekuriK08,BarmanC10,KolmanS11,ChuzhoyMVZ12,KolmanS12},  
$k$-route cut problems turn out to be much more challenging than their 1-route
counterparts, especially for non-constant $k$, so (as in~\cite{ChuzhoyMVZ12}) we consider  
bicriteria approximation guarantees. (This is further justified by our hardness result for
\kmic in Section~\ref{kstcut}.)  
We say that a solution $F\sse E$ is an 
$(\alpha,\beta)$-approximation for the given \kmc instance if $\sum_{e\in F}c_e$ is at most
$\beta$ times the optimal value, and $(V,E\setminus F)$ contains at most 
$\alpha(k -1)$ edge-disjoint $s_i$-$t_i$ paths for all $i=1,\ldots,r$.

\vspace{-1ex}
\paragraph{Our results.} 
We develop various approximation and hardness results for  \kmc, \kmwc\ and \kmic that
improve upon 
the current-best approximation and hardness results in several cases. 

In Section~\ref{kmwc}, we consider the $k$-route multiway cut problem. We devise an
$\bigl(O(1),O(1)\bigr)$-approximation for \kmwc with unit costs (Theorem~\ref{unitthm}),
and an $\bigl(O(1),O(\log r)\bigr)$-approximation with general costs
(Theorem~\ref{genthm}), where $r=|T|$. 
The previous-best guarantees for \kmwc (for general $k$) are those that
follow from the results of Chuzhoy et al.~\cite{ChuzhoyMVZ12} for \kmc, namely, 
an $\bigl(O(1),O(\log^{1.5}r)\bigr)$-approximation for unit costs and an 
$\bigl(O(\log r),O(\log^3 r)\bigr)$-approximation for general costs. 
Thus, our guarantees 
constitute a significant improvement in the state-of-the-art for \kmwc.
We also show that the special case where $T=V$, which we call 
{\em $k$-route all-pairs cut}, is \apx-hard for $k\geq 3$
(Appendix~\ref{append-allpairs}). (For $k=1, 2$, it is easy to see that all-pairs
$k$-route cut is polytime solvable.)

In Section~\ref{kmc}, we 
design algorithms for the $k$-route multicut problem.
We achieve approximation ratios of 
$O(\ln r\ln\ln r)$ for 2-\mc, and 
$\bigl(\gm,O(\frac{\gm}{(\sqrt{\gm}-1)^2}\ln r\ln\ln r)\bigr)$ 
for \kmc with unit costs. 
In contrast, Chuzhoy et al.~\cite{ChuzhoyMVZ12} obtain approximation ratios of
$O(\log^{1.5}r)$ for 2-\mc, and $\bigl(\gm,O(\frac{\log^{1.5}r}{\min\{1,\gm-1\}})\bigr)$
for \kmc with unit costs.
Thus, for any fixed $\gm$ (i.e., independent of $k$ and $r$), our results
improve upon the previous-best guarantees for these cases in~\cite{ChuzhoyMVZ12} by
roughly an $O(\sqrt{\log r})$-factor. 
(Setting $\gm=\frac{k}{k-1}$, our guarantee and the one in~\cite{ChuzhoyMVZ12}
become unicriterion approximations that are incomparable.)
In contrast to the algorithms in~\cite{ChuzhoyMVZ12}, which rely
on approximations to suitable variants of sparsest cut, we devise
{\em rounding} algorithms for a natural LP-relaxation for \kmc, 
and our guarantees therefore also translate to integrality-gap results.
In Section~\ref{node-extn}, we consider some extensions to $k$-route {\em node-multicut}  
problems. 

Complementing the above results, we show in Section~\ref{kstcut} that 
\kmic is at least as hard as the 
densest $k$-subgraph (\dks) problem: a $\rho$-approximation for \kmic yields a
$(2\rho^\ld)$-approximation for \dks on $\ld$-uniform hypergraphs (Theorem~\ref{kmichard}).
The latter problem is hard to approximate within an $n^{\e_0}$-factor, for some constant
$\e_0$, for all $\ld\geq 3$, unless a certain family of one-way functions
exists~\cite{Applebaum11}. This implies that obtaining a unicriterion
$O\bigl(k^{\e_0}\polylog(n)\bigr)$-approximation (even) for \kmic for some constant $\e_0$
would improve the state-of-the-art for the notoriously hard densest $k$-subgraph problem
on graphs, 
and imply the existence of certain one-way functions.  
Previously, 
only \nphard{}ness of \kmic was known, as a consequence of the fact that certain
\nphard unbalanced graph partitioning problems~\cite{HayrapetyanKPS05, LiZ13} can be cast
as special cases of \kmic. 

\vspace{-1ex}
\paragraph{Our techniques.}
Our algorithms 
for \kmwc are combinatorial, and 
rely on the following simple, but quite useful observation: if $F\sse E$ is
feasible, then $\bG=(V,E\sm F)$ has a multiway cut with at most $(k-1)(r-1)$ edges
(Claim~\ref{kmwc-prop}). 
Using this, we show that we can identify a terminal $t_i\in T$, a $t_i$-isolating cut, and a
set of edges of cost $O\bigl(\frac{\OPT}{|T|}\bigr)$ whose removal causes the
$t_i$-isolating cut to have $O(k-1)$ edges. 
We include these edges, drop $t_i$ from $T$, and repeat, which naturally
yields an $O(\log r)$-approximation in the cost. The improvement for unit costs stems 
from the stronger property that either the minimum multiway-cut in $G$ has cost $O(\OPT)$,
or there is some $t_i$-isolating cut of value $O(k-1)$; thus, we may now drop terminals incurring
{\em zero cost}, which results in an improved $O(1)$ cost-approximation.

Interestingly, \cite{BarmanC10} 
use a similar approach to obtain an $\bigl(O(1),O(1)\bigr)$-approximation for
single-source \kmc with unit costs 
and they remark that such an approach is unlikely to work for \kmwc because there are
examples where every pair of terminals is $2(k-1)$-edge connected but the optimal multiway
cut value is $\Omega(r)\cdot\OPT$.
Thus, a useful insight to emerge from our work is that whereas a $2$-factor violation in 
the pairwise terminal 
connectivity does not ensure that the multiway cut value is
$O(\OPT)$, a $(2+\ve)$-factor violation in connectivity does, for any $\ve>0$.

\smallskip
Our algorithms for \kmc are based on rounding an optimal solution to a natural
LP-relaxation of the problem. This is technically the most sophisticated part of the
paper. 
The main technique that we use is {\em region growing}. 
The idea is to view the LP
solution as a metric, grow a suitable ball in this metric and prove 
a region-growing lemma showing 
that the cost of the ball-boundary edges can be charged to the ball-volume, where
volume measures the contribution to the LP objective from the edges 
inside the ball. 
This was introduced by~\cite{LeightonR99,GargVY96} in the context of the sparsest cut and
multicut problems, and Even et al.~\cite{EvenNRS00}, building upon the work of
Seymour~\cite{Seymour95}, extended the technique to obtain improved guarantees for
various divide-and-conquer algorithms that involve recursive applications of 
region growing.  
However, in contrast with various applications of region growing considered
in~\cite{LeightonR99,GargVY96,EvenNSS98,EvenNRS00}, the difficulty in the $k$-route multicut problem
stems from the fact that an LP-solution yields a {\em different metric for each source-sink
pair} instead of a single common metric that can be applied in the region-growing
process. (In particular, \kmc does not fall into the divide-and-conquer framework of Even
et al.~\cite{EvenNRS00}.) 
Although~\cite{BarmanC10,KolmanS11,KolmanS12} adapted the region-growing lemma
in~\cite{LeightonR99,GargVY96} to the 2-route, 3-route, and the $k$-route single-source
settings, their approach seems incapable of obtaining any thing better than an 
$O(\log^2 r)$-approximation---one loses one $\log$-factor due to region growing and
another due to recursion---which is worse than the guarantees 
in~\cite{ChuzhoyMVZ12}. 
(In fact, \cite{ChuzhoyMVZ12} abandoned the region-growing approach and 
used a greedy set-cover strategy to obtain their improvements
over~\cite{BarmanC10,KolmanS11,KolmanS12}.)  

Our chief technical novelty is to prove a region-growing lemma (see Lemmas~\ref{reggrow}
and~\ref{cor2}) applicable to settings with different metrics, that is inspired by, but 
more general, than the analogous lemma in~\cite{EvenNRS00}, and much more
sophisticated than the one used in~\cite{BarmanC10,KolmanS11,KolmanS12}. 
This lemma, coupled with a subtle insight about the metrics derived from the LP solution, 
allows us to obtain the same kind of savings in our recursive region-growing algorithm
that Even et al.~\cite{EvenNRS00} obtain (via their region-growing lemma) in their
divide-and-conquer algorithms; this yields our improved approximation guarantees. 
We believe that our region-growing lemma and its application in the context of different
metrics are tools of independent interest that will find further application in the study
of cut problems.  

\smallskip
The hardness proof for \kmic dovetails the hardness proof in~\cite{ChuzhoyMVZ12} for the  
{\em vertex-connectivity} version of \kmic 
(where we want to decrease the $s$-$t$ {\em vertex connectivity} to below $k$), who reduce
from the {\em small-set vertex expansion} (\ssve) problem, which they show is \dks-hard.
We observe that this reduction immediately implies the same hardness for \kmic on a 
{\em directed graph}, 
and combine this with a useful trick from~\cite{ChakrabartyKLN13} that allows us to move
from digraphs to undirected graphs. 
The idea is to take the digraph used in the hardness proof, remove edge directions, and
add some extra nodes and expensive edges 
so that the \emph{residual digraph} obtained after sending a partial $s$-$t$ flow along 
the expensive edges 
essentially coincides with the digraph used in the hardness proof.

\vspace{-1ex}
\paragraph{Related work.}
Standard (i.e., 1-route) cut problems have been extensively studied; we refer the reader
to the textbooks~\cite{AhujaMO93,Vazirani01,WilliamsonS10} for more information.

The study of $k$-route flow and $k$-route cut problems was initiated by
Kishimoto~\cite{Kishimoto96}, and has since received much attention 
in the theoretical Computer Science
community~\cite{BruhnCHKS08,ChekuriK08,BarmanC10,KolmanS11,KolmanS12,ChuzhoyMVZ12}.   
Bruhn et al.~\cite{BruhnCHKS08} gave a $2(k-1)$-approximation for single-source \kmc 
with unit costs, whereas~\cite{ChekuriK08,BarmanC10,KolmanS11} obtained efficient
polylogarithmic approximation results for \kmc with small values of $k$. 
Subsequently, Chuzhoy et al.~\cite{ChuzhoyMVZ12} obtained the first non-trivial results
for \kmc with arbitrary $k$ in the form of bicriteria approximation
guarantees. Independently, Kolman and Scheideler~\cite{KolmanS12} obtained an
$O\bigl(\exp(k)\polylog(r)\bigr)$-approximation for single-source \kmc (with general
costs). As shown by our hardness result for \kmic in Section~\ref{kstcut}, the move
to bicriteria approximations is necessary unless one incurs a $\poly(k)$-factor in the
approximation.  

As noted earlier, $k$-route cut problems 
and {\em network interdiction} problems (see,
e.g.,~\cite{Phillips93,Wood93,Zenklusen10,DinitzG13} and the references therein) can be
viewed as complementary problems. 
For instance, in the {\em maximum-flow interdiction problem} (\mfip) 
we are given {\em edge capacities} in addition to edge costs, and the goal is to minimize
the maximum $s$-$t$ flow in the graph remaining after removing edges of total cost at most
a given budget.  
\mfip with unit capacities is thus complementary to \kmic, and bicriteria guarantees for one
translate to the other. Unit-capacity \mfip is known to be polytime solvable for planar
graphs~\cite{Phillips93,Zenklusen10}. 
Dinitz and Gupta~\cite{DinitzG13} propose a general framework for attacking 
{\em packing interdiction} problems. However, their results do not quite apply to \mfip
(since phrasing max-flow in terms of edge-flows destroys the packing property, and
phrasing it in terms of path-flows yields an interdiction problem where one removes
paths).

\section{\boldmath A simple combinatorial algorithm for $k$-route multiway cut} \label{kmwc}

Recall that in the $k$-route multiway cut (\kmwc) problem, we are given a set
$T=\{t_1,\ldots,t_r\}\sse V$ of terminals and we seek to remove a minimum-cost set of
edges so that the edge-connectivity  
between any two terminals is less than $k$. 
The case $k=1$ is the multiway cut problem, which 
is known to be \apx-hard~\cite{DahlhausJPSY92} even with unit edge costs.
We devise an $\bigl(O(1),O(1)\bigr)$-approximation for \kmwc with unit costs, and an
$\bigl(O(1),O(\log r)\bigr)$-approximation with general costs. 
These improve upon the previous-best guarantees (for general $k$) of
$\bigl(O(1),O(\log^{1.5}r)\bigr)$ for unit costs, and $O\bigl(O(\log r),O(\log^3 r)\bigr)$
for general costs due to~\cite{ChuzhoyMVZ12}.
Remark~\ref{intgap} shows that our guarantees also translate to integrality-gap bounds for
a suitable LP-relaxation. 

Let $\optset$ denote the optimal set of edges, and let $\overline G = (V, E\sm\optset)$ be the
remainder graph. 
Let $k'=k-1$.
Our algorithms are quite easy to describe and analyze.
We first prove a simple claim about $\bG$. 
 
\begin{claim} \label{kmwc-prop}
There is a set $\overline E$ of edges of $\bG$ with $\size{\bE}\leq k'(r-1)$ such that 
$\optset \union \overline{E}$ is a multiway cut in $G$.
\end{claim}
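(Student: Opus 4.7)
The plan is to prove the claim by induction on $r$, using the fact that in $\bG$ every pair of terminals has edge-connectivity at most $k'=k-1$ (by feasibility of $\optset$). The key observation is that a single minimum cut separating two terminals in $\bG$ not only removes at most $k'$ edges, but also splits off the terminal set into two (or more) non-trivial pieces, giving us something to recurse on.

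More concretely, I would prove the following statement by induction on $|T|$: for any graph $H$ and any terminal set $T\sse V(H)$ in which every pair of terminals has edge-connectivity at most $k'$ in $H$, there is a set $F\sse E(H)$ with $|F|\le k'(|T|-1)$ such that no two terminals lie in the same component of $H\sm F$. The base case $|T|=1$ is trivial. For the inductive step, pick any terminal $t\in T$. If no other terminal lies in the same component of $H$ as $t$, remove $t$ from $T$ and apply induction directly. Otherwise, pick a second terminal $t'\in T$ in the same component as $t$, and let $C$ be a minimum $t$-$t'$ cut in $H$, so $|C|\le k'$. Removing $C$ splits $H$ into components that partition the terminals into non-empty subsets $T_1,\ldots,T_p$ with $p\ge 2$ and $\sum_j |T_j|=|T|$. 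Applying induction to each component, with its terminal set $T_j$ (the pairwise connectivity assumption is preserved, as removing edges only reduces connectivity), yields edge sets $F_j$ with $|F_j|\le k'(|T_j|-1)$, and the set $F:=C\cup\bigcup_j F_j$ has size at most
\[
k' + \sum_{j=1}^p k'(|T_j|-1) \;=\; k'(|T|-p+1) \;\le\; k'(|T|-1),
\]
which completes the induction.

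To finish the claim, apply this statement with $H=\bG$ and $T=\{t_1,\dots,t_r\}$. Since $\optset$ is a feasible \kmc solution, every terminal pair has at most $k'$ edge-disjoint paths in $\bG$, so the hypothesis holds. The resulting edge set $\bE\sse E(\bG)$ with $|\bE|\le k'(r-1)$ disconnects all terminals in $\bG\sm\bE$, so $\optset\cup\bE$ is a multiway cut of $G$ as required.

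The only subtle point is verifying that an inductive argument is actually available: one must pick the recursion so that the cut removed genuinely reduces the problem size, which is why I recurse on the connected components after removing $C$ (each containing strictly fewer terminals) rather than on the whole graph. No region-growing or LP machinery is needed; this is purely a structural observation about graphs with bounded pairwise terminal connectivity.
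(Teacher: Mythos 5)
Your proof is correct and takes essentially the same approach as the paper: recursively compute a minimum cut between two terminals (at most $k'$ edges, by feasibility of $\optset$), recurse on the resulting components, and count cuts. You simply made the paper's terse recursion explicit as an induction with the bookkeeping $k' + \sum_j k'(|T_j|-1) \leq k'(|T|-1)$ spelled out.
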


\begin{proof}
Compute a minimum $t_1$-$t_2$ cut $F$ in $\bG$, where $F$ is a set of edges. 
By the definition of $\overline G$, $|F| \leq k'$.
Removing $F$ from $\overline G$ creates at least two components. 
We can now recurse in each connected component, 
and after computing at most $r-1$ min cuts, each terminal will be in a different connected
component. 
\end{proof}

The idea behind the algorithm for unit costs is the following. 
Claim~\ref{kmwc-prop} shows that the optimal multiway cut in $G$ would be a good
approximation to \kmwc if $|\overline{E}|=O\bigl(|\optset|\bigr)$. Otherwise, 
there is a multiway cut in $G$ of cost $O(k'r)$, and so there is some terminal (in fact
$\Omega(r)$ terminals) that has an isolating cut in $G$ of size $O(k')$; we simply remove
this terminal from $T$ and repeat this process. 

\begin{theorem} \label{unitthm}
There is a $\bigl(\gm, \frac{2\gm}{\gm-2}\bigr)$-approximation algorithm for
\kmwc with unit costs for any $\gm>2$. 
\end{theorem}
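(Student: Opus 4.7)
The plan is to exploit a simple ``free pruning'' observation: if some terminal $t$ in the current terminal set $T' \sse T$ already has a cut of at most $\gm k'$ edges in $G$ separating $t$ from $T' \sm \{t\}$, then the $\gm$-violation feasibility requirement is automatically met between $t$ and every other terminal in $T$, because the min $t$-$v$ cut in $G$ is at most $\gm k'$ and can only shrink in any subgraph $G \sm F$. Thus we may simply drop $t$ from $T'$ at zero cost. The algorithm iterates this: repeatedly remove any $t \in T'$ whose minimum $T'$-isolating cut (min cut in $G$ separating $t$ from $T' \sm \{t\}$) has size at most $\gm k'$; when no more can be removed, output $F = \es$ if $|T'| \le 1$, and otherwise output a $(2 - 2/|T'|)$-approximate multiway cut $M$ of $T'$ in $G$ produced by the standard isolating-cut heuristic for multiway cut.

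Feasibility for the $\gm$-violation bound follows terminal-by-terminal. Surviving pairs in $T'$ are disconnected by $M$. For a pair $(t,v)$ with $t$ discarded and $v \neq t$ any other terminal in $T$, consider the moment $t$ was pruned; at that moment the active terminal set $T^*$ contained $t$ and either also contained $v$ (if $v$ is a survivor, or is discarded later than $t$) or did not (if $v$ was pruned earlier than $t$). In the former case, the witness cut used to prune $t$ has size at most $\gm k'$ and separates $t$ from $v$ in $G$. In the latter case, $v$'s own witness cut, whose active set at that earlier time still contained $t$, places $t, v$ on opposite sides of a cut of size at most $\gm k'$ in $G$. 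In either case, passing from $G$ to $G \sm F$ can only shrink the cut, so $t$-$v$ connectivity in $G \sm F$ is at most $\gm k'$.

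For the cost bound, the only nontrivial case is when the else branch fires with $|T'| \ge 2$. Every $t \in T'$ then has its minimum $T'$-isolating cut in $G$ strictly greater than $\gm k'$. Combining this with the standard sum-over-components inequality --- if $\{P_t\}_{t \in T'}$ are the terminal-components of an optimal multiway cut of $(G, T')$, each boundary $\dt(P_t)$ is a valid $t$-isolating cut and $\sum_t |\dt(P_t)| = 2 \cdot \mathsf{mwc}(G, T')$ (each cut edge counted twice) --- yields $\gm k' |T'| < 2 \cdot \mathsf{mwc}(G, T')$. Separately, Claim~\ref{kmwc-prop} applied to the subset $T' \sse T$ (valid since $\optset$ is in particular a $k$-route cut for $T'$) gives $\mathsf{mwc}(G, T') \le \OPT + k'(|T'|-1)$. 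Combining the two inequalities gives $k'(|T'|-1) < \frac{2\OPT}{\gm-2}$, whence $|F| \le 2 \cdot \mathsf{mwc}(G, T') \le 2\OPT + 2 k'(|T'|-1) \le \frac{2\gm}{\gm-2} \OPT$, the claimed bound.

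The step I expect to require the most care is the feasibility argument for pairs of discarded terminals, since one might worry that omitting the pruning-witness edges from $F$ leaves unwanted connectivity between two dropped terminals. The resolution is exactly the earlier-discarded terminal's own witness cut: it was verified to exist in $G$ at a moment when the later-discarded terminal was still active, and therefore separates the two in $G$, and hence (as $G \sm F \sse G$) also in $G \sm F$.
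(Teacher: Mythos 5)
Your proposal is correct and follows essentially the same approach as the paper: iteratively prune terminals whose current isolating cut is at most $\gm k'$ at zero cost, then use Claim~\ref{kmwc-prop} (applied to the surviving terminal set) together with the isolating-cut lower bound on the multiway cut to bound the cost of the final output by $\frac{2\gm}{\gm-2}\OPT$. The only cosmetic difference is that you spell out the pairwise feasibility argument for pairs of pruned terminals, which the paper leaves implicit.
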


\begin{proof}
For all $i$, compute a minimum $t_i$-isolating cut $F_i$. 
It is well known that, even with non-unit costs, $\sum_{i=1}^r c(F_i)$ is a
2-approximation to the minimum multiway cut~\cite{DahlhausJPSY92}. In particular,
$C=\sum_{i=1}^r|F_i|\leq 2|\optset|+2|\bE|$ (by Claim~\ref{kmwc-prop}).
If $C\geq\gm k'r$, then we have 
$|\optset|\geq\frac{C}{2}-k'r\geq C\bigl(\frac{1}{2}-\frac{1}{\gm}\bigr)$, 
so taking the union of the $F_i$s yields a $\frac{2\gm}{\gm-2}$-approximation. 
Otherwise, there is some $t_i$ such that $|F_i|<\gm k'$, so we can simply remove $t_i$ 
from $T$ and decrease $r$, and repeat. 

We remark that the number of iterations can be reduced to $\log_2 r$ at
the expense of increasing the connectivity to $2\gm k'$, since must be at least 
$r/2$ terminals such that $|F_i|\leq 2\gm k'$.
\end{proof}

\begin{remark}
The condition $\gm>2$ above is tight. This follows from an example in~\cite{BarmanC10}
where every pair of terminals is $2k'$-edge connected but the minimum multiway cut yields
an $\Omega(r)$-approximation.
\end{remark}

To generalize this algorithm to general edge costs, assume for now that we know
$\OPT=c(\optset)$. Unlike in the unit edge-cost case where we could make progress by
dropping terminals while incurring zero cost, here we will need to 
incur cost $O\bigl(\frac{\OPT}{r}\bigr)$ to drop a terminal 
(or incur cost $O(\OPT)$ to drop $r/2$ terminals). 
This naturally leads to an $O(\log r)$-approximation in the cost. Let
$H_r:=1+\frac{1}{2}+\ldots+\frac{1}{r}=O(\log r)$. 

\begin{theorem} \label{genthm}
There is a $\bigl(\gm,\frac{2\gm}{\gm-2}H_r\bigr)$-approximation algorithm for \kmwc with
general edge costs, for any $\gm>2$. 
\end{theorem}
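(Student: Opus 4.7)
The plan is to generalize the iterative terminal-dropping scheme of Theorem~\ref{unitthm} by scaling the ``drop'' threshold by the current number of terminals $r'$ and summing the resulting harmonic series. First, I would guess $\OPT$ to within a constant factor via binary search over powers of $2$ up to $\sum_e c_e$, losing only a constant factor in the approximation ratio. Each iteration then maintains a current terminal set $T$ of size $r'$ and a current graph $G'$ (obtained from $G$ by removing edges already added to the solution), computes the minimum-cost $t_i$-isolating cut $F_i$ in $G'$ for every $t_i \in T$, and branches as in Theorem~\ref{unitthm}.

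If $\sum_i |F_i| \geq \gamma k' r'$, I would output $\bigcup_i F_i$ as the final (connectivity-zero) solution: the size bound of Claim~\ref{kmwc-prop} combined with the standard $2$-approximation via isolating cuts gives a cost bound analogous to Theorem~\ref{unitthm}, absorbing this terminating step into a $\tfrac{2\gamma}{\gamma-2}\OPT$ term. Otherwise, I would apply a simultaneous averaging argument on $\sum_i |F_i|$ and $\sum_i c(F_i)$: since neither quantity exceeds its threshold, by a joint Markov/pigeonhole bound (using, e.g., a $3\times$-average threshold on each, so that the two failure sets each occupy at most $\tfrac{r'}{3}$ terminals), at least one terminal $t_i$ satisfies both $|F_i| < \gamma k'$ and $c(F_i) = O(\OPT/r')$. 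I would add $F_i$ to the solution, drop $t_i$ from $T$, and iterate.

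For the cost, summing the per-iteration contribution $O(\OPT/r')$ over $r' = r, r-1, \ldots, 1$ yields $O(\OPT \cdot H_r)$, with constants tuned to give exactly $\tfrac{2\gamma}{\gamma-2} H_r \cdot \OPT$. For the connectivity, each dropped terminal $t_i$ has an isolating cut of size $< \gamma k'$ added to the solution, so in the final graph $t_i$ is connected to every other terminal by fewer than $\gamma k'$ edge-disjoint paths, meeting the $(\gamma, \cdot)$-criterion.

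The chief obstacle will be ensuring that the joint size-and-cost averaging step actually exhibits a single terminal satisfying both conditions. In the unit-cost setting the two quantities coincide and this is free, but here it requires (i) tightening the Markov thresholds on both $|F_i|$ and $c(F_i)$ so that their failure sets do not together cover all of $T$, and (ii) handling the $c(\bar E)$ term appearing in the cost bound $\sum_i c(F_i) \leq 2\,\OPT + 2\,c(\bar E)$ derived from the 2-approximation of multiway cut---which, unlike the size version $|\bar E| \leq k'(r-1)$ of Claim~\ref{kmwc-prop}, is not a priori bounded in general. I would address (ii) by taking $\bar E$ to be a minimum-\emph{cost} multiway cut of $\bar G$ (rather than a minimum-\emph{size} one) and using the size condition $|F_i|$-averaging to control its cost in the relevant ``drop'' case, so that the cost dichotomy parallels the size dichotomy of Theorem~\ref{unitthm}.
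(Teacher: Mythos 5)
Your high-level plan — iteratively dropping terminals with a harmonic sum over $r', r'-1, \dots, 1$, and using guess-and-check to eliminate knowledge of $\OPT$ — matches the paper. However, there is a genuine gap: you try to transplant the two-branch dichotomy of Theorem~\ref{unitthm} (either output $\bigcup_i F_i$, or find a terminal with both small $|F_i|$ and small $c(F_i)$), and both branches break in the general-cost setting for the same underlying reason, which you half-identify but do not actually resolve: the quantity $c(\bE)$ is uncontrolled. The bound $\sum_i c(F_i) \leq 2\OPT + 2c(\bE)$ is useless without a bound on $c(\bE)$, and Claim~\ref{kmwc-prop} only bounds $|\bE|$, not its cost. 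Your proposed fix — take $\bE$ to be a \emph{min-cost} rather than min-size multiway cut of $\bG$ — does not help: Claim~\ref{kmwc-prop} proves existence of a multiway cut of bounded \emph{size}; the min-cost multiway cut of $\bG$ has no a priori bound on either its size or its cost relative to $\OPT$. Consequently, your terminating branch cannot be charged to $O(\OPT)$, and your joint Markov argument has no handle on $\sum_i c(F_i)$, so it cannot exhibit a terminal that is simultaneously cheap and has few cut edges.

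The missing idea is cost truncation. The paper's proof has no terminating branch at all: in every iteration it sets $c'_e := \min\{c_e, \tfrac{\al\OPT}{k'r'}\}$ with $\al = \tfrac{2}{\gm-2}$, and computes min $c'$-cost isolating cuts. Truncation is precisely what couples the two quantities you are trying to control separately. On one hand $c'(\bE) \leq |\bE|\cdot\tfrac{\al\OPT}{k'r'} \leq \al\OPT$, so $\sum_t c'(F_t) \leq 2(1+\al)\OPT$ and averaging yields a cheap terminal. On the other hand, any edge of $F_t$ with true cost exceeding the truncation threshold contributes at least $\tfrac{\al\OPT}{k'r'}$ to $c'(F_t)$, so the number of such ``expensive'' edges is less than $\tfrac{2(1+\al)k'}{\al} = \gm k'$; these expensive edges are deliberately \emph{not} added to the solution, which is exactly where the $\gm$-factor connectivity violation comes from, and the cheap edges cost at most $c'(F_t) = O(\OPT/r')$. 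A single truncated cost function thus does the work that your joint-Markov scheme over two incompatible objectives (one needing a min-size cut, the other a min-cost cut) cannot do. Without truncation, you also cannot keep Claim~\ref{kmwc-prop}'s size bound and simultaneously bound cost; with it, both follow from one averaging step.
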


\begin{proof}
Let $T'$ initialized to $T$ denote the current terminal set and $r'\assign r$.
Let $F$ initialized to $\es$ denote the set of edges removed.
Let $\al=\frac{2}{\gm-2}$.
While $|T'|>1$, we do the following.
Set $c'_e=\min\{c_e,\frac{\al\OPT}{k'r'}\}$ for every edge $e$.
Note that the $c'$-cost of the minimum multiway cut is at most 
$c'(\optset \union \overline{E})\leq\OPT+k'r'\cdot\frac{\al\OPT}{k'r'}$. 
For every terminal $t\in T'$, compute a minimum $c'$-cost $t$-isolating cut $F_t$.
Then, we have $\sum_{t\in T'}c'(F_t)\leq 2(1+\al)\OPT$. So there is some $t\in T'$ such
that $c'(F_t)\leq\frac{2(1+\al)\OPT}{r'}$. The number of edges in $F_t$ with
$c_e>\frac{\al\OPT}{k'r'}$ is less than $\frac{2(1+\al)}{\al}k'=\gm k'$. 
We add edges in $F_t$ with $c_e\leq\frac{\al\OPT}{k'r'}$ to $F$. 
This incurs cost at most
$c'(F_t)\leq\frac{2(1+\al)\OPT}{r'}$, and ensures that $t$ is less than $\gm k'$
connected to every other terminal in $T'$ in the remaining graph. 
We now set $T'\assign T'\sm\{t\},\ r'\assign r'-1$, and repeat the above process.

Clearly, every pair of terminals is at most $\gm k'$ connected in $(V,E\sm F)$. Also,
$c(F)\leq\sum_{r'=r}^1\frac{2(1+\al)\OPT}{r'}=\OPT\cdot\frac{2\gm}{\gm-2}\cdot H_{r}$.

Finally, we can eliminate the need for knowing $\OPT$ as follows.
Given a guess $C$ of $\OPT$, if at some iteration we have
$\sum_{t\in T'}c'(F_t)>2(1+\al)C$ then we know that $C<\OPT$; otherwise, we obtain a
solution of cost at most $C\cdot\frac{2\gm}{\gm-2}\cdot H_r$. So we can try powers of
$(1+\e)$ to find the smallest $C$ such that the latter case happens; this blows up the
approximation in cost by at most a $(1+\e)$-factor.
\end{proof}

\begin{remark}[LP-relative bounds] \label{intgap}
The guarantees in Theorems~\ref{unitthm} and~\ref{genthm} also translate to
integrality-gap bounds for the following LP-relaxation of \kmwc.
Let $\Pc_{ij}$ be the collection of all $t_i$-$t_j$ paths. 
\begin{equation}
\min \ \ c^Tx \quad \ \ \text{s.t.} \qquad 
\sum_{e\in P}(x_e+y_e)\geq 1 \quad \forall t_i,t_j\in T,\ P\in\Pc_{ij}; \quad \ \ 
\sum_e y_e\leq k'(r-1); \quad \ \ x,y \geq 0. \tag{P'} \label{kmwclp}
\end{equation}
\newcommand{\optkmwclp}{\OPT_{\text{\ref{kmwclp}}}}
Claim~\ref{kmwc-prop} implies that \eqref{kmwclp} is indeed a valid relaxation of \kmwc.
Let $(x,y)$ be an optimal solution to \eqref{kmwclp} and $\optkmwclp$ be its value. 
Then, for any $\ld\geq 0$,
for the cost function $c'_e=\min\{c_e,\ld\}$, there is a fractional multiway-cut of
$c'$-cost at most $\optkmwclp+\ld\sum_ey_e$. Also, if $F_i$ is a minimum $c'$-cost
$t_i$-isolating cut then we have $\sum_ic'(F_i)\leq 2\bigl(\optkmwclp+\ld\sum_e y_e\bigr)$. 
(This follows since an optimal solution to the multiway-cut LP is known to be
half-integral (see, e.g.,~\cite{Vazirani01}); this implies that 
2(cost of an optimal solution) is at least 
$\sum_{i}$ (cost of a minimum $t_i$-isolating cut).)
This implies that we can replace $|\optset|$ and $|\bE|$ in the proof of
Theorem~\ref{unitthm} by $\optkmwclp$ and $\sum_e y_e$, and $\OPT$ in the proof of
Theorem~\ref{genthm} by $\optkmwclp$, and all the arguments go through. 
\end{remark}

\vspace{-1ex}
\paragraph{The all-pairs case.}
This is the special case of \kmwc where $T=V$. 
To our knowledge, this {\em $k$-route all-pairs cut} problem has not been explicitly
studied before.  
When $k=1$, the all-pairs problem is trivial as the remainder graph cannot contain any
edge. When $k=2$, this problem is still in \p as the remainder graph is a maximum-cost
spanning forest. 
We prove that the problem is \apx-hard for all $k\geq 3$ (see
Appendix~\ref{append-allpairs}), thus resolving the complexity (with respect to polytime
solvability) of $k$-route all-pairs cut.  
The all-pairs problem can also be stated in terms of properties required of the
remainder graph.  
For example, in $3$-route all-pairs cut, we seek a minimum-cost edge set such
that the remainder graph does not contain a {\em diamond} as a minor. Interestingly, this
is equivalent to requiring that the remainder graph be a maximum-weight \emph{cactus},
which is a graph where every edge lies in at most one cycle. As noted above, this problem
is \apx-hard. But we observe that this problem admits an $O(1)$-approximation as a
consequence of the results of Fiorini et al.~\cite{FioriniJP10}; see
Appendix~\ref{append-allpairs}.

\section{\boldmath A region-growing algorithm for $k$-route multicut} \label{kmc} 
We now consider general $k$-route multicut (\kmc): given source-sink pairs/commodities 
$(s_1,t_1),\ldots,(s_r,t_r)$, we want to find a minimum-cost set of edges whose
removal reduces the $s_i$-$t_i$ edge connectivity to less than $k$ for all $i=1,\ldots,r$.
We consider the following LP-relaxation of the problem, which was also considered by
Barman and Chawla~\cite{BarmanC10}. Throughout $e$ indexes the edges in $E$, and $i$ indexes
the commodities. Let $\Pc_i$ denote the collection of all (simple) $s_i$-$t_i$ paths in
$G$. 
\begin{equation}
\min \ \ \sum_e c_ex_e \quad \ \ \text{s.t.} \quad \ \ 
\sum_{e\in P} (x_e+y^i_e) \geq 1 \quad \forall i, \forall P\in\Pc_i; \quad \ \ 
\sum_e y^i_e \leq k-1 \quad \forall i; \quad \ \ 
x,y \geq 0.
\tag{P} \label{kmcp}
\end{equation}

Let $\bigl(x,\{y^i\}\bigr)$ denote an optimal solution to \eqref{kmcp}, and $\OPT$ be its
value. 
We show that 
$\bigl(x,\{y^i\}\bigr)$ can
be rounded to yield an $O(\ln r\ln\ln r)$-approximation when $k=2$, and a bicriteria 
$\bigl(\gm,O\bigl(\frac{\gm}{(\sqrt{\gm}-1)^2}\ln r\ln\ln r\bigr)\bigr)$-approximation
for \kmc with unit edge costs, for any $\gm>1$.
Notably, our cost-approximation is with respect to $\OPT$, and so they translate to 
integrality-gap upper bounds for \eqref{kmcp}.  
Our results improve upon (for any fixed $\gm$) the previous-best guarantees for these
cases in~\cite{ChuzhoyMVZ12} by roughly a $\sqrt{\log r}$-factor. 

\subsection{Region-growing lemmas} \label{reglemmas}
The main tool that we leverage is 
{\em region growing}~\cite{LeightonR99,GargVY96,EvenNRS00}.  
The idea is to view the LP
solution as a metric, grow a suitable ball in this metric and prove 
that the cost of the ball-boundary edges can be charged to the ball-volume, where
volume measures the contribution to the LP objective from the edges 
inside the ball. 
The main difficulty in applying this idea to \eqref{kmcp} is that, unlike
most applications of region growing~\cite{LeightonR99,GargVY96,EvenNSS98,EvenNRS00}, the
LP solution yields a {\em different metric} for each commodity. 
The key technical ingredient and novelty is a new region-growing lemma (see
Lemmas~\ref{reggrow} and~\ref{cor2}) 
that is analogous to, but more general, than the one in~\cite{EvenNRS00}, 
and much more sophisticated than the one used in~\cite{BarmanC10,KolmanS11,KolmanS12}.
Roughly speaking, we prove that given a current set $S$ of nodes, one can construct a ball
around any $s_i$ in the $(x+y^i)$-metric such that the cost of the ``boundary $x$-edges''
can be charged to $(x\text{-volume of $S$})\cdot
\ln\bigl(\text{$x$-volume of $S$}/\text{$x$-volume of the ball}\bigr)\cdot\ln\ln r$
(Lemma~\ref{cor2}). A subtle insight that helps deal with the complication that 
different applications of region growing involve different commodities and therefore
different metrics is the following. 
Since the $x$-contribution is {\em common} to all $(x+y^i)$-metrics, even though 
we consider different commodity-metrics we 
can leverage the above guarantee 
and obtain the same kind of savings that~\cite{EvenNRS00} obtain in their
divide-and-conquer algorithms (see Lemma~\ref{indnlem}); this leads to our improved
approximation guarantees. 

We now state our region-growing lemmas in a general form and then apply these 
to the optimal solution $\bigl(x,\{y^i\}\bigr)$ to obtain various useful corollaries.
In Section~\ref{node-extn}, we extend our arguments to prove region-growing lemmas in
settings that involve both edge and node lengths.
Let $n=|V|$, $m=|E|$.
Let $\ell:V\times V\mapsto\R_{\geq 0}$ be a metric on $V\times V$. 
Our algorithm will iteratively focus on certain regions of the graph $G$. 
Let $S\sse V$, which is intended to represent the node-set of the current region. 
Let $F\sse E$, which is intended to represent the edges that contribute to the volume, and
whose cost we care about.
Let $\beta\geq 0$. 
Let $z\in V$, and $\rho\geq 0$.

\begin{list}{$\bullet$}{\itemsep=0ex \topsep=0.5ex \addtolength{\leftmargin}{-2ex}} 
\item Define $B_\ell(z,\rho):=\{v\in V: \ell_{zv}\leq\rho\}$ to be the ball of radius
$\rho$ around $z$. 

\item Let $B^S_\ell(z,\rho):=B_\ell(z,\rho)\cap S$ and 
$\ov{B^S_\ell}(z,\rho):=S\sm B_\ell(z,\rho)$.

\item Define the following volumes:
\begin{eqnarray*}
\V^{S,F}_\ell(\beta; z,\rho) & := & \beta
+\sum_{(u,v)\in F: u,v\in B^S_\ell(z,\rho)}c_{uv}\ell_{uv}
+\sum_{\substack{(u,v)\in F: u\in B^S_\ell(z,\rho) \\ v\in\ov{B^S_\ell(z,\rho)}}}c_{uv}\bigl(\rho-\ell_{zu}\bigr) \\[0.5ex]
\ov{\V^{S,F}_\ell}(\beta; z,\rho) & := & \beta
+\sum_{(u,v)\in F: u,v\in\ov{B^S_\ell(z,\rho)}}c_{uv}\ell_{uv}
+\sum_{\substack{(u,v)\in F: u\in B^S_\ell(z,\rho) \\ v\in\ov{B^S_\ell(z,\rho)}}}c_{uv}\bigl(\ell_{zv}-\rho\bigr)
\end{eqnarray*}

\item For a subset $T\sse S$, let $\dt^S_F(T)$ denote $\{(u,v)\in F: u\in T, v\in S\sm T\}$.

\item Define
$\del^{S,F}_\ell(z,\rho):=\dt^S_F\bigl(B^S_\ell(z,\rho)\bigr)=\dt^S_F\bigl(\ov{B^S_\ell}(z,\rho)\bigr)$. 
\end{list}

When $F=E$, we drop $F$ from the above pieces of notation (e.g., $\dt^S_E(T)$
is shortened to $\dt^S(T)$).
For $H\sse E$, we use $\ell(H)$ to denote $\sum_{e\in H}\ell_e$.

\begin{lemma}[Region-growing lemma] \label{reggrow}
Let $F\sse E$, $S\sse V$, $z\in V$, and $0\leq a<b$.
Let $\rho$ be chosen uniformly at random from $[a,b)$. Then,
\begin{eqnarray}
\E[\rho]{\frac{c\bigl(\del^{S,F}_\ell(z,\rho)\bigr)}
{\V^{S,F}_\ell(\beta;z,\rho)\ln\Bigl(\frac{e\V^{S,F}_\ell(\beta;z,b)}{\V^{S,F}_\ell(\beta;z,\rho)}\Bigr)}}
& \leq & \frac{1}{b-a}\cdot\ln\ln\biggl(\frac{e\V^{S,F}_\ell(\beta;z,b)}{\V^{S,F}_\ell(\beta;z,a)}\biggr) 
\quad\text{and} \label{regbnd1} \\
\E[\rho]{\frac{c\bigl(\del^{S,F}_\ell(z,\rho)\bigr)}
{\ov{\V^{S,F}_\ell}(\beta;z,\rho)\ln\Bigl(\frac{e\ov{\V^{S,F}_\ell}(\beta;z,a)}{\ov{\V^{S,F}_\ell}(\beta;z,\rho)}\Bigr)}}
& \leq &
\frac{1}{b-a}\cdot\ln\ln\biggl(\frac{e\ov{\V^{S,F}_\ell}(\beta;z,a)}{\ov{\V^{S,F}_\ell}(\beta;z,b)}\biggr) 
\label{regbnd2}
\end{eqnarray}
\end{lemma}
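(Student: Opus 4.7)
The plan is to adapt the continuous-integration trick underlying classical region growing, combined with the double-logarithmic substitution of Seymour~\cite{Seymour95} and Even et al.~\cite{EvenNRS00}, to the set-restricted volumes of this lemma. The first observation, immediate from the definitions, is that $V(\rho) := \V^{S,F}_\ell(\beta;z,\rho)$ viewed as a function of $\rho$ is non-decreasing and piecewise linear with slope $c\bigl(\del^{S,F}_\ell(z,\rho)\bigr)$ on the open intervals separated by the finitely many ``transition radii'' $\{\ell_{zv} : v \in S\}$, while $\overline V(\rho) := \ov{\V^{S,F}_\ell}(\beta;z,\rho)$ is non-increasing and piecewise linear on the same intervals with slope $-c\bigl(\del^{S,F}_\ell(z,\rho)\bigr)$.

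Given this, I would rewrite the left-hand side of (\ref{regbnd1}) as $\frac{1}{b-a}\int_a^b\frac{c(\del^{S,F}_\ell(z,\rho))}{V(\rho)\ln(eV(b)/V(\rho))}\,d\rho$, and on each sub-interval between consecutive transition radii use $V'(\rho) = c(\del^{S,F}_\ell(z,\rho))$ to recognise the integrand as $-\frac{d}{d\rho}\phi(\rho)$, where $\phi(\rho) := \ln\ln(eV(b)/V(\rho))$. Summing the telescoping contributions gives $\phi(a)-\phi(b)$ minus the total (non-negative) drop of $\phi$ across the transition radii; since $\phi(b) = \ln\ln e = 0$, this yields the desired upper bound $\ln\ln(eV(b)/V(a))$ provided the jumps of $\phi$ are all non-positive (equivalently, the jumps of $V$ are all non-negative). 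Inequality (\ref{regbnd2}) is proved symmetrically with $\overline\phi(\rho) := \ln\ln(e\overline V(a)/\overline V(\rho))$ in place of $\phi$, now needing the jumps of $\overline V$ to be non-positive in order for the total increase of $\overline\phi$ to upper-bound its continuous increase.

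The crux is therefore to verify the direction of these jumps, and I would do so via the triangle inequality satisfied by $\ell$. At a transition $\rho_0 = \ell_{zv}$ the node $v\in S$ crosses from $\overline{B^S_\ell}(z,\rho)$ into $B^S_\ell(z,\rho)$; an edge $(u,v)\in F$ with $u\in B^S_\ell(z,\rho_0)\setminus\{v\}$ contributes $c_{uv}(\rho_0-\ell_{zu})$ to $V$ immediately before $\rho_0$ and $c_{uv}\ell_{uv}$ immediately after, for a jump of $c_{uv}(\ell_{uv}+\ell_{zu}-\ell_{zv})\geq 0$; edges $(v,w)\in F$ with $w\in\overline{B^S_\ell}(z,\rho_0)$ contribute $0$ on both sides. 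Hence $V$ has only non-negative jumps. A parallel analysis for $\overline V$---where edges $(v,w)\in F$ with $w\in\overline{B^S_\ell}(z,\rho_0)$ switch contribution from $c_{vw}\ell_{vw}$ to $c_{vw}(\ell_{zw}-\ell_{zv})$, a non-positive jump by $\ell_{vw}+\ell_{zv}\geq\ell_{zw}$, while the remaining incident edges contribute $0$ on both sides---shows that $\overline V$ has only non-positive jumps. Plugging these facts into the telescoping sums established above yields both (\ref{regbnd1}) and (\ref{regbnd2}).
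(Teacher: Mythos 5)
Your argument is correct and follows essentially the same route as the paper: write the expectation as an integral, split at the transition radii, recognize the integrand on each smooth piece as the derivative of $-\ln\ln\bigl(e\V(b)/\V(\rho)\bigr)$ (resp.\ $\ln\ln\bigl(e\ov{\V}(a)/\ov{\V}(\rho)\bigr)$), and telescope. You additionally verify, via the triangle inequality, that the jumps of $\V$ are non-negative and those of $\ov{\V}$ non-positive --- a fact the paper's proof uses (``$\V(\rho)$ increases with $\rho$'') but does not spell out, so your version is a bit more self-contained without departing from the paper's method.
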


\begin{proof} 
We abbreviate $c\bigl(\del^{S,F}_\ell(z,\rho)\bigr)$ to $c(\rho)$,
$\V^{S,F}_\ell(\beta;z,\rho)$ to $\V(\rho)$ and $\ov{\V^{S,F}_\ell}(\beta;z,\rho)$ to
$\ov{\V}(\rho)$.
Let $x^-$ be a value infinitesimally smaller than $x$.
Let $I=\{\ell(s_i,v): v\in V\}$. Note that $\V(\rho)$ and
$\ov{\V}(\rho)$ are differentiable at all $\rho\in[a,b)\sm I$ and for
each such $\rho$, we have
$\frac{d\V(\rho)}{d\rho}=c(\rho)$ and 
$\frac{d\ov{\V}(\rho)}{d\rho}=-c(\rho)$. 
Let $a_0=a$, $a_k=b$, and $\{a_1,\ldots,a_{k-1}\}=(a,b)\cap I$ with $a_1<\ldots<a_{k-1}$. 
Then,
\begin{alignat*}{2}
(b-a)&\cdot\E[\rho]{\frac{c(\rho)}{\V(\rho)\ln\bigl(\frac{e\V(b)}{\V(\rho)}\bigr)}}
&& =  
\sum_{i=1}^k\int_{a_{i-1}}^{a_i^-}\frac{d\V(\rho)}{\V(\rho)\ln\bigl(\frac{e\V(b)}{\V(\rho)}\bigr)}
=
\sum_{i=1}^k\int_{a_{i-1}}^{a_i^-}\frac{d(\ln\V(\rho))}{\ln\bigl(e\V(b)\bigr)-\ln(\V(\rho))} \\
& = \sum_{i=1}^k-\ln\ln\Bigl(\frac{e\V(b)}{\V(\rho)}\Bigr)\biggr|_{a_{i-1}}^{a_i^-} 
&& = \sum_{i=1}^k
\biggl[\ln\ln\Bigl(\frac{e\V(b)}{\V(a_{i-1})}\Bigr)-\ln\ln\Bigl(\frac{e\V(b)}{\V(a_i^-)}\Bigr)\biggr]
\leq \ln\ln\Bigl(\frac{e\V(b)}{\V(a)}\Bigr).
\end{alignat*}
The final inequality follows since $\V(\rho)$ increases with $\rho$, and
$\ln\ln\bigl(\frac{e\V(b)}{\V(\rho)}\bigr)$ decreases with $\rho$.
Similarly, we obtain that 
$
(b-a)\cdot\E[\rho]{\frac{c(\rho)}{\ov{\V}(\rho)\ln\bigl(\frac{e\ov{\V}(a)}{\ov{\V}(\rho)}\bigr)}}
= \sum_{i=1}^k\ln\ln\Bigl(\frac{e\ov{\V}(a)}{\ov{\V}(\rho)}\Bigr)\biggr|_{a_{i-1}}^{a_i^-} 
\leq \ln\ln\Bigl(\frac{e\ov{\V}(a)}{\ov{\V}(b)}\Bigr)$. 
\end{proof}

\begin{corollary} \label{cor1}
Let $F, H \sse E$, $S\sse V$, $z\in V$, and $0\leq a<b$. 
For any $\al\in(0,1)$, we can efficiently find a radius $\rho_1\in[a,b)$ such that
\begin{eqnarray}
c\bigl(\del^{S,F}_\ell(z,\rho_1)\bigr)
& \leq & \frac{2}{(1-\al)(b-a)}
\cdot\V^{S,F}_\ell(\beta;z,\rho_1)
\cdot\ln\Bigl(\tfrac{e\V^{S,F}_\ell(\beta;z,b)}{\V^{S,F}_\ell(\beta;z,\rho_1)}\Bigr)
\cdot\ln\ln\Bigl(\tfrac{e\V^{S,F}_\ell(\beta;z,b)}{\V^{S,F}_\ell(\beta;z,a)}\Bigr)
\label{fbound1} \\
c\bigl(\del^{S,F}_\ell(z,\rho_1)\bigr)
& \leq & \frac{2}{(1-\al)(b-a)}
\cdot\ov{\V^{S,F}_\ell}(\beta;z,\rho_1)
\cdot\ln\Bigl(\tfrac{e\ov{\V^{S,F}_\ell}(\beta;z,a)}{\ov{\V^{S,F}_\ell}(\beta;z,\rho_1)}\Bigr)
\cdot\ln\ln\Bigl(\tfrac{e\ov{\V^{S,F}_\ell}(\beta;z,a)}{\ov{\V^{S,F}_\ell}(\beta;z,b)}\Bigr)
\label{fbound2} \\
\bigl|\del^{S,H}_\ell(z,\rho_1)\bigr| & < & \frac{\ell(H)}{\al(b-a)}. \label{hbound2}
\end{eqnarray}
\end{corollary}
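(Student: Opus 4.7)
The plan is to apply the probabilistic method via Markov's inequality. Draw $\rho$ uniformly from $[a,b)$, apply Markov to each of the three random quantities appearing in \eqref{fbound1}--\eqref{hbound2}, and then union-bound the failure events. Lemma~\ref{reggrow} directly supplies expectation bounds for the two ratios appearing in \eqref{regbnd1} and \eqref{regbnd2}, so Markov with threshold $\frac{2}{(1-\al)(b-a)}\ln\ln(\cdot)$ makes each of \eqref{fbound1} and \eqref{fbound2} fail with probability at most $(1-\al)/2$; the factor $\frac{2}{1-\al}$ in the statement is precisely the Markov threshold tuned to this failure level.

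For \eqref{hbound2}, I would separately prove the auxiliary estimate $\E[\rho]{|\del^{S,H}_\ell(z,\rho)|}\le\ell(H)/(b-a)$ by linearity of expectation. Indeed, for each edge $(u,v)\in H$ with both endpoints in $S$ (edges with an endpoint outside $S$ contribute $0$ to $|\del^{S,H}_\ell(z,\rho)|$), assume WLOG $\ell_{zu}\le\ell_{zv}$; then $(u,v)\in\del^{S,H}_\ell(z,\rho)$ precisely when $\ell_{zu}\le\rho<\ell_{zv}$, an interval of length at most $\ell_{zv}-\ell_{zu}\le\ell_{uv}$ by the triangle inequality for the metric $\ell$. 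Summing over $H$ and dividing by $b-a$ gives the claim, and Markov then shows that \eqref{hbound2} fails with probability at most $\al$.

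The three failure probabilities sum to $(1-\al)/2+(1-\al)/2+\al=1$, and this tightness is the main subtle point of the proof: the union bound in this raw form only shows that the good set of $\rho$ has measure $\ge 0$. To upgrade this to strict positivity I would exploit the integrality of $|\del^{S,H}_\ell(z,\rho)|$: whenever $\ell(H)/(\al(b-a))$ is not a positive integer, $\{|\del|\ge\ell(H)/(\al(b-a))\}=\{|\del|\ge\lceil\ell(H)/(\al(b-a))\rceil\}$, sharpening Markov to a strict inequality, and the finitely many problematic values of $\al$ can be removed by an arbitrarily small perturbation. For the efficiency claim, I would observe that $\V^{S,F}_\ell(\beta;z,\cdot)$ and $\ov{\V^{S,F}_\ell}(\beta;z,\cdot)$ are piecewise linear in $\rho$, while $c\bigl(\del^{S,F}_\ell(z,\cdot)\bigr)$ and $|\del^{S,H}_\ell(z,\cdot)|$ are piecewise constant, all with breakpoints contained in the $O(n)$-element set $\{\ell_{zv}:v\in V\}$. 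Sweeping through these $O(n)$ subintervals and checking the three conditions on each suffices to locate a valid $\rho_1$ in polynomial time.
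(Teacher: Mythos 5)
Your overall plan (Markov on each of the three random quantities, then argue the good set of radii is nonempty, then sweep through the $O(n)$ breakpoint subintervals to find $\rho_1$) is the same as the paper's, and your accounting of the expectations and thresholds is correct. You have also correctly put your finger on the one genuinely delicate point: the naive budget $\al+(1-\al)/2+(1-\al)/2$ is exactly $1$.

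However, your proposed repair has a gap. Perturbing $\al$ does not work in either direction: to keep \eqref{hbound2} (so that the $\al'$-bound implies the $\al$-bound) you must take $\al'>\al$, but that makes $\frac{2}{(1-\al')(b-a)}>\frac{2}{(1-\al)(b-a)}$ and so \emph{weakens} \eqref{fbound1} and \eqref{fbound2}, while $\al'<\al$ does the opposite. Since the jump set of $|\del^{S,H}_\ell(z,\cdot)|$ and the functions on the RHS of \eqref{fbound1}--\eqref{fbound2} are not continuous in $\rho$, a limit argument as $\al'\to\al$ is not straightforward either. You also chose to shave the probability off the wrong event: the clean source of strictness is $\Om_1$ and $\Om_2$, not $\Om$. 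Because those two events are defined with a strict inequality $>$, Markov's bound for a nonnegative random variable $X$ gives $\Pr[X>t]<\E[X]/t$ whenever $\E[X]>0$ (if equality held and $\Pr[X>t]>0$, the proof of Markov would force $X=t$ a.s.\ on $\{X>t\}$, a contradiction; if $\E[X]=0$ then $\Pr[X>t]=0$ anyway). Thus $\Pr[\Om_1],\Pr[\Om_2]<(1-\al)/2$ holds unconditionally, with no exceptional cases to perturb away. The paper then conditions on $\Om^c$ (whose probability is at least $1-\al>0$), obtaining $\Pr[\Om_1\mid\Om^c],\Pr[\Om_2\mid\Om^c]<1/2$ and hence $\Pr[\Om^c\cap\Om_1^c\cap\Om_2^c]>0$; your union bound, once the strictness is placed on $\Om_1,\Om_2$ rather than $\Om$, gives the same conclusion. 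Your efficiency sketch (piecewise-linear volumes, piecewise-constant cuts, $O(n)$ breakpoints at $\{\ell_{zv}\}$) matches the paper's.
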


\begin{proof} 
Suppose we pick $\rho$ uniformly at random from $[a,b)$. 
Define the following events.
\begin{eqnarray*}
\Om & := & \Bigl\{\rho\in[a,b):
\bigl|\del^{S,H}_\ell(z,\rho)\bigr|\geq\frac{\ell(H)}{\al(b-a)}\Bigr\} \\
\Om_1 & := & \biggl\{\rho\in[a,b): c\bigl(\del^{S,F}_\ell(z,\rho)\bigr)>
\frac{2}{(1-\al)(b-a)}\cdot 
\V^{S,F}_\ell(\beta;z,\rho)\ln\Bigl(\tfrac{e\V^{S,F}_\ell(\beta;z,b)}{\V^{S,F}_\ell(\beta;z,\rho)}\Bigr)
\cdot\ln\ln\Bigl(\tfrac{e\V^{S,F}_\ell(\beta;z,b)}{\V^{S,F}_\ell(\beta;z,a)}\Bigr)\biggr\}
\\
\Om_2 & := & \biggl\{\rho\in[a,b): c\bigl(\del^{S,F}_\ell(z,\rho)\bigr)>
\frac{2}{(1-\al)(b-a)}\cdot 
\ov{\V^{S,F}_\ell}(\beta;z,\rho)\ln\Bigl(\tfrac{e\ov{\V^{S,F}_\ell}(\beta;z,a)}{\ov{\V^{S,F}_\ell}(\beta;z,\rho)}\Bigr)
\cdot\ln\ln\Bigl(\tfrac{e\ov{\V^{S,F}_\ell}(\beta;z,a)}{\ov{\V^{S,F}_\ell}(\beta;z,b)}\Bigr)\biggr\}
\end{eqnarray*}

For an edge $(u,v)\in E$,
$\Pr\bigl[(u,v)\in\del^{S,H}_\ell(z,\rho)]\leq\frac{\ell_{uv}}{b-a}$. 
Hence,
$\E[\rho]{\bigl|\del^{S,H}_\ell(z,\rho)\bigr|}\leq\frac{\ell(H)}{b-a}$, 
and therefore $\Pr[\Om]\leq\al$.
By Lemma~\ref{reggrow} and Markov's inequality, we have that 
$\Pr[\Om_1],\Pr[\Om_2]<(1-\al)/2$. Conditioning on $\Om^c:=[a,b)\sm\Om$ increases the
probability of an event by at most a factor $\frac{1}{1-\Pr[\Om]}\leq\frac{1}{1-\al}$, so 
$\Pr[\Om_1|\Om^c],\Pr[\Om_2|\Om^c]<1/2$.
Thus, $\Pr[\Om^c\cap\Om_1^c\cap\Om_2^c]>0$.

We argue that $\Om^c$, $\Om_1^c$, and $\Om_2^c$ are all unions of at most $n$ subintervals
of $[a,b)$, and we can find these efficiently.
Since $\Pr[\Om^c\cap\Om_1^c\cap\Om_2^c]>0$, we can then efficiently find an
interval contained in $\Om^c\cap\Om_1^c\cap\Om_2^c$ (in fact, a non-singleton interval),  
and hence, find $\rho_1\in[a,b]$ satisfying \eqref{fbound1}--\eqref{hbound2} (in fact,
there are infinitely many such $\rho$).  
 
There are at most $n$ distinct sets $B^S_\ell(z,\rho)$ that one may encounter as
$\rho$ varies in $[a,b)$. For each such set $A$, there is an interval $[\lb,\ub)$ such
that $A=B^S_\ell(z,\rho)$ for all $\rho\in[\lb,\ub)$. Note that the right-hand-sides
(RHSs) of \eqref{fbound1} and \eqref{fbound2} are continuous and differentiable in
$(\lb,\ub)$, and are monotonic (increasing and decreasing, respectively) functions of
$\rho$. We call $[\lb,\ub)$ a smooth subinterval of $[a,b)$. By definition, the
left-hand-sides (LHSs) of \eqref{fbound1}--\eqref{hbound2} are invariant over a smooth
subinterval. Hence, $\Om^c$ is the union of some smooth subintervals. Consider a smooth
subinterval $[\lb,\ub)$.  
By continuity, if some $\rho\in[\lb,\ub)$ satisfies \eqref{fbound1} or 
\eqref{fbound2}, then we can efficiently find the maximal subinterval of $[\lb,\ub)$
(which may be a singleton interval) such that all $\rho$ in the subinterval satisfy
the given bound.  
Hence, both $\Om_1^c$ and $\Om_2^c$ are the union of at most $n$ subintervals of $[a,b)$. 
By trying out the at most $3n$ possible subintervals of $\Om_1^c\cup\Om_1^c\cup\Om_2^c$,
we can find some interval contained in 
$\Om^c\cap\Om_!^c\cap\Om_2^c$ 
and hence obtain $\rho_1$ satisfying \eqref{fbound1}--\eqref{hbound2}. 
\end{proof}

\vspace{-1ex}
\paragraph{Applications of the region-growing lemmas.} 
To apply the above results to the metrics obtained from 
$\bigl(x,\{y^i\}\bigr)$, 
it will be convenient to modify $G$ by subdividing every edge $e$ into $r+1$ edges
$e_0,e_1,\ldots,e_r$, and setting $x_{f}=x_e$ for $f=e_0$ and 0 otherwise, and
$y^i_f=y^i_e$ if $f=e_i$ and 0 otherwise. We call $e_0$ an $x$-edge, and we call $e_i$, a
$y^i$-edge for $i=1,\ldots,r$. 
Clearly, any solution in $G$ yields a solution in the subdivided graph of the same cost
and vice versa, and this holds even for fractional solutions to \eqref{kmcp}. 
In the sequel, we work with the subdivided graph.  
To keep notation simple, we continue to use $G=(V,E)$ to denote the subdivided graph, and
$\bigl(x,\{y^i\}\bigr)$ to denote the above solution in the subdivided graph. 
Let $F$ be the set of all $x$-edges, and $H^i$ be the set of all $y^i$-edges for all
$i=1,\ldots,r$. 
Consider a commodity $i$.
Let $\ell^i$ denote the shortest-path metric of $G$ (i.e., the subdivided graph) induced
by the $\{x_e+y^i_e\}$ edge lengths. Set $\beta=\OPT/r$.
To avoid cumbersome notation, we shorten: 

\vspace{0.5ex}
\begin{tabular}{l@{\qquad to \qquad }l}
$B^S_{\ell^i}(z,\rho)$ and $\ov{B^S_{\ell^i}}(z,\rho)$ & 
$B^S_i(z,\rho)$ and $\ov{B^S_i}(z,\rho)$ respectively \\[0.75ex]
$\V^{S,F}_{\ell^i}(\beta; z,\rho)$ 
and $\ov{\V^{S,F}_{\ell^i}}(\beta; z,\rho)$ & 
$\V^{S,x}_i(z,\rho)$ and $\ov{\V^{S,x}_i}(z,\rho)$ respectively \\[0.5ex]
$\del^{S,F}_{\ell^i}(z,\rho)$, $\del^{S,H^i}_{\ell^i}(z,\rho)$, and 
$\del^{S,E}_{\ell^i}(z,\rho)$ &  
$\del^{S,x}_i(z,\rho)$, $\del^{S,y}_i(z,\rho)$, and $\del^S_i(z,\rho)$ respectively \\[1.25ex]
\end{tabular}

\noindent
Also, define $\V^x(S):=\beta+\sum_{e\in E(S)}c_ex_e$, where $E(S)$ is the set of edges having
both endpoints in $S$.
Finally, for an integer $q\geq 1$ and a set $A$ of edges let $c^q(A)$ be the cost of all
but the $q-1$ most expensive edges of $A$ (so $c^q(A)=0$ if $|A|<q$).

\begin{lemma} \label{cor2}
Let $S\sse V$, $z\in V$, and $i$ be some commodity. 
Let $\al\in(0,1)$ and $q=\ceil{\frac{k-1}{\al}}$.
We can efficiently find $\rho_1\in[0,1)$ such that
\begin{eqnarray}
c^q\bigl(\del^S_i(z,\rho_1)\bigr) & \leq &
\frac{2}{1-\al}\cdot\V^{S,x}_i(z,\rho_1)\ln\Bigl(\frac{e\V^{x}(S)}{\V^{S,x}_i(z,\rho_1)}\Bigr)%
\ln\ln\bigl(e(r+1)\bigr) \label{xbound1} \\
c^q\bigl(\del^S_i(z,\rho_1)\bigr) & \leq &
\frac{2}{1-\al}\cdot\ov{\V^{S,x}_i}(z,\rho_1)\ln\Bigl(\frac{e\V^{x}(S)}{\ov{\V^{S,x}_i}(z,\rho_1)}\Bigr)%
\ln\ln\bigl(e(r+1)\bigr). \label{xbound2}
\end{eqnarray}
\end{lemma}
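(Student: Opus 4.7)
The plan is to apply Corollary~\ref{cor1} with metric $\ell=\ell^i$, $F$ the set of $x$-edges, $H=H^i$ the set of $y^i$-edges (both in the subdivided graph), $\beta=\OPT/r$, and the range $[a,b)=[0,1)$. By the notational conventions above, $\V^{S,F}_{\ell^i}$ and $\ov{\V^{S,F}_{\ell^i}}$ are simply $\V^{S,x}_i$ and $\ov{\V^{S,x}_i}$, and $\del^{S,F}_{\ell^i}(z,\rho)=\del^{S,x}_i(z,\rho)$. The corollary furnishes a single $\rho_1\in[0,1)$ satisfying \eqref{fbound1}, \eqref{fbound2}, and \eqref{hbound2} simultaneously.

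The first step is the endpoint bookkeeping that converts \eqref{fbound1} and \eqref{fbound2} into the specific form of \eqref{xbound1} and \eqref{xbound2}. At $\rho=0$, the triangle inequality forces $\ell^i_{uv}=0$ between any two nodes of $B^S_i(z,0)$ and makes the correction $\rho-\ell^i_{zu}$ on boundary edges vanish, giving $\V^{S,x}_i(z,0)=\beta=\OPT/r$. The remaining endpoint inequalities $\V^{S,x}_i(z,1)\leq\V^x(S)$, $\ov{\V^{S,x}_i}(z,0)\leq\V^x(S)$, and $\ov{\V^{S,x}_i}(z,1)\geq\beta$ follow by bounding each $x$-edge contribution by $c_e x_e$ (using $\ell^i_{uv}\leq x_e$ for an $x$-edge $(u,v)$, and, for the boundary term in $\ov{\V^{S,x}_i}(z,0)$, the triangle inequality $\ell^i_{zv}\leq\ell^i_{zu}+\ell^i_{uv}\leq x_e$). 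Since $\V^x(S)\leq\OPT+\beta=\beta(r+1)$, both $\ln\ln$ arguments are at most $e(r+1)$, and monotonicity of $\ln$ lets me replace the $b$-endpoint volumes inside the $\ln$ terms by $\V^x(S)$.

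The second step is to pass from the resulting $c(\del^{S,x}_i(z,\rho_1))$-bound to the desired $c^q(\del^S_i(z,\rho_1))$-bound. From \eqref{hbound2} and the LP constraint $\ell^i(H^i)=\sum_e y^i_e\leq k-1$, the same $\rho_1$ satisfies $|\del^{S,y}_i(z,\rho_1)|<(k-1)/\al$; since this is an integer count and $q=\ceil{(k-1)/\al}$, it is at most $q-1$. The subdivided structure is now crucial: any $y^j$-edge with $j\neq i$ has $\ell^i$-length $0$, so its endpoints satisfy $\ell^i_{zu}=\ell^i_{zv}$ and either both lie in or both lie outside every $\ell^i$-ball around $z$. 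Consequently $\del^S_i(z,\rho_1)=\del^{S,x}_i(z,\rho_1)\sqcup\del^{S,y}_i(z,\rho_1)$. The elementary fact $c^q(A)\leq c(A\sm T)$ for any $T\sse A$ with $|T|\leq q-1$, applied with $T=\del^{S,y}_i(z,\rho_1)$, then yields $c^q(\del^S_i(z,\rho_1))\leq c(\del^{S,x}_i(z,\rho_1))$, and combining with the first step gives \eqref{xbound1} and \eqref{xbound2}.

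I expect the main delicacy to be the endpoint volume bookkeeping in the first step: the additive constant $\beta=\OPT/r$ is calibrated precisely so that $\V^x(S)/\V^{S,x}_i(z,0)\leq r+1$, producing the crucial $\ln\ln(e(r+1))$ savings over a naive region-growing bound. The $c^q$-reduction via the subdivided-graph decomposition is then a clean byproduct.
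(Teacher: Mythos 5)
Your proof is correct and follows essentially the same route as the paper's: apply Corollary~\ref{cor1} with $\ell=\ell^i$, $F$ the $x$-edges, $H=H^i$ on $[0,1)$, then bound the endpoint volumes to pass to $\V^x(S)$ and $r+1$ inside the logarithms, use \eqref{hbound2} with $\ell^i(H^i)\leq k-1$ to get $|\del^{S,y}_i(z,\rho_1)|\leq q-1$, and observe that $\del^S_i(z,\rho)$ partitions into $x$- and $y^i$-boundary pieces (since other edges have zero $\ell^i$-length) to conclude $c^q(\del^S_i(z,\rho_1))\leq c(\del^{S,x}_i(z,\rho_1))$. The only cosmetic difference is that you spell out the endpoint bookkeeping (e.g.\ $\V^{S,x}_i(z,0)=\beta$) which the paper leaves implicit.
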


\begin{proof}
We apply Corollary~\ref{cor1} taking $\ell=\ell^i$, $H=H^i$, and $[a,b)=[0,1)$ (and $S$,
$z$, $\al$ as given by the statement of the lemma, and $F$ to be the set of $x$-edges).
Note that 
\vspace{-1ex}
$$
\V^{S,x}_i(z,1)\leq\V^x(S), \qquad \ov{\V^{S,x}_i}(z,0)\leq\V^x(S), \qquad
\frac{\V^{S,x}_i(z,1)}{\V^{S,x}_i(z,0)}\leq r+1, \qquad
\frac{\ov{\V^{S,x}_i}(z,0)}{\ov{\V^{S,x}_i}(z,1)}\leq r+1.
$$

\vspace{-1ex}
Thus, we obtain $0\leq\rho_1<1$ such that
$c\bigl(\del^{S,x}_i(z,\rho_1)\bigr)$ 
satisfies the bounds given by the RHS of \eqref{xbound1} and \eqref{xbound2} respectively.  
Moreover, since $\ell^i(H^i)\leq k-1$, we have that 
$\bigl|\del^{S,y}_i(z,\rho_1)\bigr|<\frac{k-1}{\al}$ due to \eqref{hbound2}, 
and so $\bigl|\del^{S,y}_i(z,\rho_1)\bigr|<q$.
Finally, since edges not in $F\cup H^i$ have zero $\ell^i$-length, 
$\del^{S,x}_i(z,\rho)$ and $\del^{S,y}_i(z,\rho)$ partition $\del^S_i(z,\rho)$ for all $\rho$. 
Therefore, 
$c\bigl(\del^{S,x}_i(z,\rho_1)\bigr)\geq c^q\bigl(\del^S_i(z,\rho_1)\bigr)$, 
and the lemma follows.
\end{proof}

\begin{corollary} \label{cor3}
Let $S\sse V$. Suppose that $s_i,t_i\in S$ and there are $\gm(k-1)$ edge-disjoint
$s_i$-$t_i$ paths internal to $S$, 
where $\gm>1$. Suppose that $c_e=1$ for all edges $e$.  
We can efficiently find 
$\rho_1\in[0,1)$ such that
\begin{eqnarray*}
c\bigl(\del^S_i(s_i,\rho_1)\bigr)\leq
\frac{2\gm}{(\sqrt{\gm}-1)^2}\cdot\V^{S,x}_i(s_i,\rho_1)\ln\Bigl(\frac{e\V^x(S)}{\V^{S,x}_i(s_i,\rho_1)}\Bigr)%
\ln\ln\bigl(e(r+1)\bigr) \\
c\bigl(\del^S_i(s_i,\rho_1)\bigr)\leq
\frac{2\gm}{(\sqrt{\gm}-1)^2}\cdot\ov{\V^{S,x}_i}(s_i,\rho_1)\ln\Bigl(\frac{e\V^x(S)}{\ov{\V^{S,x}_i}(s_i,\rho_1)}\Bigr)%
\ln\ln\bigl(e(r+1)\bigr)
\end{eqnarray*}
\end{corollary}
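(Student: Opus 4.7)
The plan is to apply Lemma~\ref{cor2} with the specific choice $\al=1/\sqrt{\gm}$, and then combine its bound on $c^q(\del^S_i(s_i,\rho_1))$ with an edge-connectivity lower bound on $|\del^S_i(s_i,\rho_1)|$ to control the full cut cost $c(\del^S_i(s_i,\rho_1))$.

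First I would lower-bound $|\del^S_i(s_i,\rho_1)|$. Since every $s_i$-$t_i$ path $P$ satisfies the LP-constraint $\sum_{e\in P}(x_e+y^i_e)\geq 1$, we have $\ell^i(s_i,t_i)\geq 1$, so $t_i\notin B^S_i(s_i,\rho_1)$ for any $\rho_1<1$. Thus $\del^S_i(s_i,\rho_1)$ is an $s_i$-$t_i$ cut in the subgraph on $S$, and by the $\gm(k-1)$ edge-disjoint path hypothesis it must contain at least $\gm(k-1)$ edges; under unit costs this reads $c(\del^S_i(s_i,\rho_1))=|\del^S_i(s_i,\rho_1)|\geq \gm(k-1)$. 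Next I apply Lemma~\ref{cor2} with $z=s_i$ and $\al=1/\sqrt{\gm}$, so that $q=\ceil{(k-1)\sqrt{\gm}}$ and $q-1\leq (k-1)\sqrt{\gm}$. Since under unit costs every edge has cost $1$, removing the $q-1$ most expensive edges from any set $A$ drops its cost by at most $q-1$; hence $c(A)\leq c^q(A)+(q-1)$ for every edge set $A$.

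Combining these, the edge-connectivity bound yields $(k-1)\leq c(\del^S_i(s_i,\rho_1))/\gm$, and therefore $(q-1)\leq (k-1)\sqrt{\gm}\leq c(\del^S_i(s_i,\rho_1))/\sqrt{\gm}$. Substituting into $c(\del^S_i(s_i,\rho_1))\leq c^q(\del^S_i(s_i,\rho_1))+(q-1)$ and invoking Lemma~\ref{cor2}'s bound \eqref{xbound1} gives
\[
c(\del^S_i(s_i,\rho_1))\bigl(1-1/\sqrt{\gm}\bigr)\leq\tfrac{2\sqrt{\gm}}{\sqrt{\gm}-1}\cdot\V^{S,x}_i(s_i,\rho_1)\ln\bigl(e\V^x(S)/\V^{S,x}_i(s_i,\rho_1)\bigr)\ln\ln\bigl(e(r+1)\bigr),
\]
and since $(1-1/\sqrt{\gm})=(\sqrt{\gm}-1)/\sqrt{\gm}$, dividing through produces exactly the first stated inequality with constant $\frac{2\gm}{(\sqrt{\gm}-1)^2}$. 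Applying the identical argument to Lemma~\ref{cor2}'s bound \eqref{xbound2}, with $\ov{\V^{S,x}_i}$ in place of $\V^{S,x}_i$, yields the second stated inequality.

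The main obstacle here is balancing Lemma~\ref{cor2}'s region-growing prefactor $\frac{2}{1-\al}$ against the additive overhead $(q-1)$ incurred when translating $c^q$ back to $c$; the choice $\al=1/\sqrt{\gm}$ is precisely the one that equates these two sources of slack via the edge-connectivity lower bound and produces the clean final constant $\frac{2\gm}{(\sqrt{\gm}-1)^2}$.
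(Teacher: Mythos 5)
Your proof is correct and follows essentially the same approach as the paper's: apply Lemma~\ref{cor2}, observe that the ball boundary is an $s_i$-$t_i$ cut of size at least $\gm(k-1)$, convert $c^q$ back to $c$ at an additive cost of $q-1$ bounded via the connectivity hypothesis, and optimize the resulting constant — the paper simply leaves $\al$ as a parameter and observes $\al=\gm^{-1/2}$ minimizes $\frac{2\gm}{(1-\al)(\gm-1/\al)}$, while you commit to that choice upfront, which is a purely cosmetic difference.
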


\begin{proof} 
We apply Lemma~\ref{cor2} with $\al\in(0,1)$, whose value we will fix later, to obtain
$\rho_1\in[0,1)$. Note that $B^S_i(s_i,\rho_1)$ is an $s_i$-$t_i$ cut. 
Let $q=\ceil{\frac{k-1}{\al}}$.
For any $s_i$-$t_i$ cut $A\sse S$, we know that $c\bigl(\dt^S(A)\bigr)\geq\gm(k-1)$.
Therefore, since we have unit edge costs, 
$c\bigl(\dt^S(A)\bigr)\leq c^q\bigl(\dt^S(A)\bigr)+q-1
\leq c^q\bigl(\dt^S(A)\bigr)\cdot\frac{\gm}{\gm-1/\al}$.
Plugging in the bounds for $c^q(.)$ from \eqref{xbound1}, \eqref{xbound2}, we see that the
constant factor multiplying the volume terms on the RHS is
$\frac{2\gm}{(1-\al)(\gm-1/\al)}$. This factor is minimized by setting $\al=\gm^{-1/2}$,
which yields the constant factor $\frac{2\gm}{(\sqrt{\gm}-1)^2}$ and completes the proof.
\end{proof}

\subsection{The rounding algorithms and their analyses} \label{kmcalg}

\paragraph{\boldmath The case $k=2$.}
The algorithm for $k=2$ follows a similar template as the algorithm in~\cite{BarmanC10} for
2-\mc. However, its analysis resulting in our improved guarantee relies crucially on
Lemma~\ref{cor2} which is derived from our stronger region-growing lemma.  
The algorithm proceeds as follows. 
Given a current set $U$ of nodes 
and a current set of $N$ source-sink pairs, 
we repeatedly use Lemma~\ref{cor2} to ``carve out'' disjoint regions
$A_1,\ldots,A_h\sse U$ and build a set $Z$ of edges until there is no $2$-edge-connected
source-sink pair in $U\sm(A_1\cup\ldots\cup A_h)$. 
Given $A_1,\ldots,A_{p-1}$, we obtain $A_p$ as follows. We choose an $s_i$-$t_i$ pair that 
is at least $2$-edge connected in $S=U\sm\bigcup_{q=1}^{p-1}A_q$, and use Lemma~\ref{cor2}
with center $s_i$, $\al=0.5$ and set $S$. 
Note that 
$2(k-1)=2=k$. 
We set $A_p$ to be $B^S_i(s_i,\rho_1)$ or $\ov{B^S_i}(s_i,\rho_1)$, so as to ensure there
are at most $N/2$ source-sink pairs inside $A_p$. 
We add the edges corresponding to $c^2\bigl(\dt^S(A_p)\bigr)$ to $Z$; Lemma~\ref{cor2}
ensures that the cost of these edges can be bounded in terms of the volume contained in
$A_p$. Having obtained $A_1,\ldots,A_h$ this way, we now recurse on each set $A_p$ and the
source-sink pairs contained in $A_p$, to obtain edge-sets $Z_1,\ldots,Z_h$. The solution
we return is $Z\cup(Z_1\cup\ldots\cup Z_h)$. 
A more formal description follows. 

\vspace{1.5ex}

{\small \hrule \vspace{5pt}
\noindent {\bf Algorithm} $\twomcalg(U, \T=\{(s_1,t_1),\ldots,(s_{N},t_{N})\}$) 

\noindent {\bf Input:} A subset $U\sse V$, and a collection $\T=\{(s_i,t_i)\}_{i=1}^{N}$ of
source-sink pairs, where $s_i,t_i\in U$ for all $i=1,\ldots, N$.

\noindent {\bf Output:} A set $Z\sse E(U)$ such that $s_i$ and $t_i$ are at most
1-edge-connected in $(U,E(U)\sm Z)$ for all $i=1,\ldots,N$.

\begin{list}{A\arabic{enumi}.}{\usecounter{enumi} \topsep=0.5ex \itemsep=0.25ex \addtolength{\leftmargin}{-1.5ex}}
\item Set $S=U$, $Z=\es$, $\Sc=\es$, and
$\T'=\bigl\{(s_i,t_i)\in\T: \text{$s_i$ and $t_i$ are at least $2$-edge-connected in $(S,E(S))$}\bigr\}$. 

\item If $\T'=\es$, return $Z$.

\item While $\T'\neq\es$, we do the following.
\begin{list}{A\arabic{enumi}.\arabic{enumii}}{\topsep=0.25ex \itemsep=0ex \usecounter{enumii}}
\item Pick some $(s_i,t_i)\in\T'$.
\item Apply Lemma~\ref{cor2} with $z=s_i$, $\al=0.5$ and the set $S$ to find a radius
$0\leq\rho_1<1$ satisfying \eqref{xbound1}, \eqref{xbound2}.
\item If $B^S_i(s_i,\rho_1)$ contains at most $N/2$ pairs from $\T$ then set
$A=B^S_i(s_i,\rho_1)$, else set $A=\ov{B^S_i}(s_i,\rho_1)$. 
\item Set $\Sc\assign\Sc\cup\{A\}$. 
Add the edges contributing to $c^2\bigl(\dt^S(A)\bigr)$ (i.e., all edges of $\dt^S(A)$
except the most-expensive one) to $Z$.  
\item  Set $S\assign S\sm A$. 
Update $\T'$ to be the $s_i$-$t_i$ pairs from $\T$ that are at least $2$-edge-connected in
$(S,E(S))$.
\end{list}

\item For every set $A\in\Sc$, set 
$Z\assign Z\cup\twomcalg\bigl(A,\{(s_i,t_i)\in\T: s_i,t_i\in A\}\bigr)$.

\item Return $Z$.
\end{list}

\noindent The initial call to $\twomcalg$, which computes the solution we return, is
$\twomcalg\bigl(V,\{(s_1,t_1),\ldots,(s_r,t_r)\}\bigr)$. 
\hrule
}

\vspace{1.5ex}
Let $Z:=\twomcalg\bigl(V,\{(s_1,t_1),\ldots,(s_r,t_r)\}\bigr)$. 
The feasibility of $Z$ follows from the same arguments as in~\cite{BarmanC10};
Lemma~\ref{2feas} gives a 
self-contained proof. 
We focus on showing that $c(Z)\leq O(\ln r\ln\ln r)\cdot\OPT$. 
Consider the recursion tree associated with the execution of \twomcalg, where each node is
labeled with arguments passed in the current invocation of $\twomcalg$. 
Define the depth of a subtree of this recursion tree to be the maximum number of edges on
a root to leaf path of the subtree.
Recall that $\beta=\frac{\OPT}{r}$.

\begin{lemma} \label{indnlem}
Let $d$ be the depth of a subtree of the recursion tree rooted at 
$\bigl(U\sse V,\T\sse\{(s_1,t_1),\ldots,(s_r,t_r)\}\bigr)$. 
Let $Z_U=\twomcalg(U,\T)$. 
We have 
$c(Z_U)\leq 
4\bigl(\beta|\T|+\V^x(U)\bigr)\ln\bigl(\frac{e^dr\V^x(U)}{\OPT}\bigr)\ln\ln\bigl(e(r+1)\bigr)$.
\end{lemma}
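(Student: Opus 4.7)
The proof is by induction on the depth $d$ of the recursion subtree rooted at $(U, \T)$. The base case $d = 0$ corresponds to the recursion tree being a single node: the while loop performed zero iterations, so $Z_U = \es$ and the bound holds trivially (the log factor is nonnegative since $\V^x(U) \geq \beta$). For the inductive step, let $A_1, \ldots, A_h$ be the regions added to $\Sc$ in order, $S_0 = U$, $S_p = S_{p-1} \sm A_p$, and $V_p$ the volume (of ball or complement, as chosen) supplied by Lemma~\ref{cor2} at iteration $p$ with center $s_{i_p}$ and set $S_{p-1}$. I decompose $c(Z_U) = \sum_p c^2(\dt^{S_{p-1}}(A_p)) + \sum_p c(Z_{A_p})$; Lemma~\ref{cor2} with $\al=1/2$ bounds $c^2(\dt^{S_{p-1}}(A_p)) \leq 4V_p \ln(e\V^x(U)/V_p) \ln\ln(e(r+1))$ (using $\V^x(S_{p-1}) \leq \V^x(U)$), and the inductive hypothesis applied to the subtree rooted at $(A_p, \T_{A_p})$ of depth at most $d-1$ bounds $c(Z_{A_p}) \leq 4M_{A_p} \ln(e^{d-1} r \V^x(A_p)/\OPT) \ln\ln(e(r+1))$, where $M_{A_p} := \beta|\T_{A_p}| + \V^x(A_p)$.

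The central combinatorial observation is that the processed pair $(s_{i_p}, t_{i_p})$ is ``consumed'' by iteration $p$: since $\ell^{i_p}(s_{i_p}, t_{i_p}) \geq 1 > \rho_1$, the ball around $s_{i_p}$ contains $s_{i_p}$ but not $t_{i_p}$, so whichever side is chosen as $A_p$ contains exactly one of these endpoints, which is then deleted from all subsequent $S_q$; hence $(s_{i_p}, t_{i_p}) \notin \T_{A_q}$ for every $q$, giving $\sum_p |\T_{A_p}| \leq |\T| - h$. Together with the telescoping inequality $V_p - \beta,\ \V^x(A_p) - \beta \leq \V^x(S_{p-1}) - \V^x(S_p)$ (both are bounded by the $x$-mass of edges with at least one endpoint in $A_p$, and the $A_p$ are pairwise disjoint), one obtains the ``budget'' inequalities $\sum_p V_p \leq h\beta + \V^x(U) - \beta \leq M$ and $\sum_p M_{A_p} \leq \beta(|\T|-h) + h\beta + \V^x(U) - \beta = M - \beta$, where $M := \beta|\T| + \V^x(U)$ is the budget attached to the current call.

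Setting $\Gamma := \ln(\V^x(U)/\beta)$ so that $\ln(e^d r \V^x(U)/\OPT) = d + \Gamma$, I split each logarithm via $\ln(e\V^x(U)/V_p) = 1 + \ln(\V^x(U)/V_p)$ and $\ln(e^{d-1} r \V^x(A_p)/\OPT) = (d-1) + \ln(\V^x(A_p)/\beta)$. The ``constant'' parts sum to $\sum_p V_p + (d-1)\sum_p M_{A_p} \leq Md$ by the budget bounds, contributing the linear $Md$ term. The main obstacle is to establish the refined bound on the ``logarithmic'' parts,
\[\sum_p V_p \ln\!\Bigl(\tfrac{\V^x(U)}{V_p}\Bigr) + \sum_p M_{A_p} \ln\!\Bigl(\tfrac{\V^x(A_p)}{\beta}\Bigr) \leq M\Gamma,\]
without a naive factor-of-two loss (which would compound into a $\Theta(d)$ blow-up of the $\Gamma$ term under iterated induction and destroy the claimed linear-in-$d$ dependence). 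The key is to exploit the shared upper bound $V_p - \beta,\ \V^x(A_p) - \beta \leq \V^x(S_{p-1}) - \V^x(S_p) =: \Delta_p$: the two logarithms are controlled by the \emph{same} $x$-mass-decrement quantity $\Delta_p$, so a convexity/concavity coupling (analogous to the recursive region-growing savings in Even et al.~\cite{EvenNRS00}) collapses the combined sum into a single $M\Gamma$ rather than $2M\Gamma$. Adding the constant and logarithmic contributions yields $c(Z_U) \leq 4M(d + \Gamma)\ln\ln(e(r+1)) = 4M\ln(e^d r \V^x(U)/\OPT)\ln\ln(e(r+1))$, completing the induction.
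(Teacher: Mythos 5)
Your overall plan matches the paper's: induct on the depth $d$, note that each loop iteration ``consumes'' a distinct pair (so $\sum_p|\T_p|\le|\T|-h$), use the telescoping volume budget $\sum_p\vol_p\le\V^x(U)+(h-1)\beta$, and combine the per-iteration carve-out cost with the inductive bound. The base case and both budget inequalities are correct. However, the step you yourself flag as ``the main obstacle''---the bound
$\sum_p \vol_p\ln\bigl(\V^x(U)/\vol_p\bigr)+\sum_p\bigl(\beta|\T_p|+\V^x(A_p)\bigr)\ln\bigl(\V^x(A_p)/\beta\bigr)\le\bigl(\beta|\T|+\V^x(U)\bigr)\ln\bigl(\V^x(U)/\beta\bigr)$---is not proved. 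Invoking a ``convexity/concavity coupling'' and the shared bound $\vol_p-\beta,\,\V^x(A_p)-\beta\le\Delta_p$ is an assertion, not an argument, and that shared bound is in fact not the mechanism that delivers the saving. Since this is the crux of the lemma (everything else is bookkeeping), the proof is incomplete as written.

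The missing mechanism is a simple telescoping identity on the logarithms, not a convexity argument, and the paper exploits it by \emph{not} decomposing into ``constant'' and ``logarithmic'' parts. For each $p$, choose the carve-out bound to read $c(E_p)\le 4\vol_p\ln\bigl(\frac{e\V^x(U)}{\V^x(A_p)}\bigr)\ln\ln(e(r+1))$ (valid since $\V^x(A_p)\le\vol_p\le\V^x(S_p)\le\V^x(U)$) and pair it with the inductive bound $c(Z_p)\le4\bigl(\beta|\T_p|+\V^x(A_p)\bigr)\ln\bigl(\frac{e^{d-1}r\V^x(A_p)}{\OPT}\bigr)\ln\ln(e(r+1))$. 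These two logarithms sum \emph{exactly} to $\ln\bigl(\frac{e^dr\V^x(U)}{\OPT}\bigr)$, so, using $\V^x(A_p)\le\vol_p$ once more, one obtains the per-$p$ bound $c(E_p)+c(Z_p)\le4(\beta|\T_p|+\vol_p)\ln\bigl(\frac{e^dr\V^x(U)}{\OPT}\bigr)\ln\ln(e(r+1))$; summing over $p$ with your two budgets finishes. (Your stated inequality on the log-parts does in fact hold, and can be recovered from the same telescoping plus $\V^x(A_p)\le\vol_p$, but that derivation must be supplied; the per-$p$ combination is the cleaner route and avoids the decomposition altogether.)
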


\begin{proof}
The proof is by induction on $d$. If $d=0$, there is no recursive call; so $Z^U=\es$,
which satisfies the stated bound.
Otherwise, suppose that we make the recursive calls 
$\twomcalg(A_1,\T_1),\ldots,\twomcalg(A_h,\T_h)$ in step A4 to obtain edge-sets
$Z_1,\ldots,Z_h$ respectively. 
For $p=1,\ldots,h$, let $S_p$ be the current set $S$ when set $A_p$ was added to $\Sc$ in
step A3.4 (so $A_p\sse S_p$), and let $E_p$ be the edge-set added to $Z$ in this step.
Then, $Z_U=\bigcup_{p=1}^h(E_p\cup Z_p)$. 
By the induction hypothesis, 
$c(Z_p)\leq 4\Bigl(\beta|\T_p|+\V^x(A_p)\Bigr)\ln\Bigl(\frac{e^{d-1}r\V^x(A_p)}{\OPT}\Bigr)\ln\ln\bigl(e(r+1)\bigr)$.
Let $\vol_p=\V^{S_p,x}_i(s_i,\rho_1)$ if $A_p=B^{S_p}_i(s_i,\rho_1)$ 
and $\vol_p=\ov{\V^{S_p,x}_i}(s_i,\rho_1)$ if $A_p=\ov{B^{S_p}_i}(s_i,\rho_1)$.
Note that $\V^x(A_p)\leq\vol_p\leq\V^x(S_p)\leq\V^x(U)$.
By Lemma~\ref{cor2} and the above upper bounds, we have
\begin{eqnarray}
c(E_p) & = & c^2\bigl(\dt^{S_p}(A_p)\bigr) \leq 
4\vol_p\ln\Bigl(\frac{e\V^x(U)}{\V^{x}(A_p)}\Bigr)\ln\ln\bigl(e(r+1)\bigr), 
\qquad \text{and therefore} \notag \\
c(E_p)+c(Z_p) & \leq &
4\Bigl(\beta|\T_p|+\vol_p\Bigr)\ln\Bigl(\frac{e^dr\V^x(U)}{\OPT}\Bigr)\ln\ln\bigl(e(r+1)\bigr). 
\label{indstep}
\end{eqnarray}
Note that
$\sum_{p=1}^h\vol_p\leq\V^x(U)+\beta(h-1)$ and $\sum_{p=1}^h|\T_p|\leq N-h$ (since each
time we create a child of $(U,\T)$ we remove at least one new $(s_i,t_i)$ pair from
$\T$). 
So adding \eqref{indstep} over all $p=1,\ldots,h$, we obtain that 
$c(Z_U)\leq 
4\bigl(\beta|\T|+\V^x(U)\bigr)\ln\bigl(\frac{e^dr\V^x(U)}{\OPT}\bigr)\ln\ln\bigl(e(r+1)\bigr)$.
\end{proof}

\begin{theorem} 
\label{2mcthm}
Algorithm \twomcalg returns a feasible solution of cost at most 
$O(\ln r\ln\ln r)\cdot\OPT$. 
\end{theorem}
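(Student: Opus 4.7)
The plan is to derive the theorem by combining Lemma~\ref{indnlem} (applied at the root of the recursion tree) with a bound on the recursion depth, and by invoking the feasibility argument from~\cite{BarmanC10} (formalized as Lemma~\ref{2feas}). Since feasibility is handled separately, I will focus on the cost analysis.

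First I would bound the depth $d$ of the recursion tree produced by the initial call $\twomcalg\bigl(V,\{(s_1,t_1),\ldots,(s_r,t_r)\}\bigr)$. The crucial structural property built into step A3.3 is that the set $A$ sent to a recursive call always contains at most $N/2$ pairs of $\T$: if $B^S_i(s_i,\rho_1)$ contains at most $N/2$ pairs we set $A=B^S_i(s_i,\rho_1)$, while otherwise more than $N/2$ pairs lie in $B^S_i(s_i,\rho_1)$ and so strictly fewer than $N/2$ pairs lie entirely in $\ov{B^S_i}(s_i,\rho_1)$ (a pair split across the boundary is counted in neither side). Consequently, every child of a node with pair-set of size $N$ has pair-set of size at most $\lfloor N/2\rfloor$; once a node has $|\T|=0$ the algorithm returns $Z=\es$ immediately in step A2, so the root-to-leaf depth is at most $\lceil\log_2 r\rceil+1=O(\log r)$.

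Next I would apply Lemma~\ref{indnlem} to the root node $(U,\T)=\bigl(V,\{(s_1,t_1),\ldots,(s_r,t_r)\}\bigr)$ with this depth bound. Recall $\beta=\OPT/r$, so $\beta|\T|=\OPT$; also $\V^x(V)=\beta+\sum_e c_e x_e\leq\beta+\OPT\leq 2\,\OPT$. Substituting these and $d=O(\log r)$ into the conclusion of Lemma~\ref{indnlem} yields
\begin{equation*}
c(Z)\;\leq\;4\bigl(\OPT+2\,\OPT\bigr)\,\ln\!\Bigl(\tfrac{e^{d}\,r\,\V^x(V)}{\OPT}\Bigr)\,\ln\ln\bigl(e(r+1)\bigr)
\;\leq\; 12\,\OPT\cdot\ln\!\bigl(2e^{d}r\bigr)\cdot\ln\ln\bigl(e(r+1)\bigr),
\end{equation*}
and since $\ln(2e^{d}r)=d+\ln(2r)=O(\log r)$, this is $O(\log r\,\log\log r)\cdot\OPT$, proving the claimed guarantee. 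Since $\OPT$ is the LP-optimum, the bound is in fact an integrality-gap upper bound for~\eqref{kmcp}.

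The only real obstacle in this outline is justifying the halving claim rigorously, and in particular checking the boundary case $N=1$ (where $N/2$ is not an integer) so that the recursion does terminate at a leaf rather than looping: when $N=1$, the unique pair either lies in $B$ or is split across $B$ and $\ov{B}$, so in either branch the child inherits $|\T_p|=0$ and the leaf is reached after one more step. Everything else is bookkeeping: combining Lemma~\ref{indnlem} (whose inductive proof already absorbs the per-level region-growing overhead from Lemma~\ref{cor2}) with the $O(\log r)$ depth bound yields the theorem.
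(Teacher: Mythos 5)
Your proof is correct and follows essentially the same approach as the paper: bound the recursion depth by $O(\log r)$ via the halving of the pair-set in step A3.3, then invoke Lemma~\ref{indnlem} at the root with $\beta|\T|=\OPT$ and $\V^x(V)=(1+\frac1r)\OPT$, and cite Lemma~\ref{2feas} for feasibility. The extra check of the $N=1$ boundary case is a harmless additional detail.
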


\begin{proof}
Feasibility of $Z$ is shown in Lemma~\ref{2feas}.
Each time we make a recursive call to \twomcalg, the number of source-sink pairs involved
decreases by at least a factor of 2, so the depth $d$ of the overall recursion tree is
$O(\log_2 r)$. Since $\beta r+\V^x(V)=\bigl(2+\frac{1}{r}\bigr)\OPT$ and
$\frac{r\V^x(V)}{\OPT}=r+1$, by Lemma~\ref{indnlem}, this implies that 
$c(Z)\leq O(\OPT)\cdot\bigl(\ln(r+1)+O(\log_2 r)\bigr)\ln\ln\bigl(e(r+1)\bigr)$.
\end{proof}

\begin{lemma} \label{2feas}
The solution $Z$ returned by \twomcalg is feasible.
\end{lemma}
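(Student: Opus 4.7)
The plan is to induct on the depth of the recursion tree of $\twomcalg$. At depth $0$ the invocation returns in step A2 with $Z_U=\es$, and the condition $\T'=\es$ established in step A1 already certifies that every pair in $\T$ is at most $1$-edge-connected in $(U,E(U))$. For the inductive step, I would fix an invocation $\twomcalg(U,\T)$ that carves $A_1,\dots,A_h$ in this order during the while loop, leaves residue $S_{\mathrm{final}}:=U\sm\bigcup_p A_p$, and then recurses on each $(A_p,\T_p)$ with $\T_p=\{(s_i,t_i)\in\T:s_i,t_i\in A_p\}$. Writing $S_p:=U\sm\bigcup_{q<p}A_q$, let $e_p$ denote the unique kept edge of $\dt^{S_p}(A_p)$ (the most expensive one), whenever that cut is nonempty.

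The key structural object I would set up is an ``$e$-skeleton'' multigraph $H$ with vertex set $\{\bar A_1,\dots,\bar A_h\}\cup S_{\mathrm{final}}$ and edge set $\{e_p\}$; orient each $e_p$ from $\bar A_p$ to its other endpoint, which must lie in some $\bar A_q$ with $q>p$ or in $S_{\mathrm{final}}$. This orientation has out-degree $\leq 1$ at blobs, out-degree $0$ at $S_{\mathrm{final}}$-vertices, and is strictly ``forward'' on the blob indices, hence admits no directed cycle; examining the maximum-index blob on any hypothetical undirected cycle of $H$ then yields a contradiction, so $H$ is an undirected forest. A short degree count shows each of its trees has a unique sink-root, which is either a vertex of $S_{\mathrm{final}}$ or a blob $\bar A_{p^\star}$ with $\dt^{S_{p^\star}}(A_{p^\star})=\es$.

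I would then handle a pair $(s_i,t_i)\in\T$ in three cases. \emph{(i) Both in $S_{\mathrm{final}}$:} termination of the while loop gives a cut $C\sse S_{\mathrm{final}}$ separating them with $|\dt^{S_{\mathrm{final}}}(C)|\leq 1$; extend $C$ to $C^*\sse U$ by assigning, for each tree of $H$, all its blobs to the same side as the tree's root (arbitrarily for blob-rooted trees). No $e_p$ is then cut, so $|\dt^U(C^*)\sm Z_U|\leq 1$. \emph{(ii) Exactly one endpoint lies in some $A_p$:} WLOG $s_i\in A_p$; set $p^\circ=q$ if $t_i\in A_q$ with $\bar A_q$ a strict descendant of $\bar A_p$ in $H$, and $p^\circ=p$ otherwise. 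Then $C^*=\bigcup\{A_r:\bar A_r\text{ in the subtree below }\bar A_{p^\circ}\}$ separates $s_i$ from $t_i$, contains no $S_{\mathrm{final}}$-vertex, and its only outgoing edge in $(U,E(U)\sm Z_U)$ is $e_{p^\circ}$. \emph{(iii) Both $s_i,t_i\in A_p$:} take $C^*$ to be the subtree-union below $\bar A_p$, which again has only $e_p$ as its outgoing edge, so any simple $s_i$-$t_i$ path in $(U,E(U)\sm Z_U)$ must stay inside $C^*$; contracting every $A_q\sse C^*$ to $\bar A_q$, this path projects to a closed walk at $\bar A_p$ in a subtree of $H$, and since any closed walk in a tree crosses each edge an even number of times while a simple path crosses each edge at most once, the path crosses no $e_q$ at all and thus lies wholly in $(A_p,E(A_p)\sm Z_p)$. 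The inductive hypothesis applied to $\twomcalg(A_p,\T_p)$ then caps edge-disjoint $s_i$-$t_i$ paths at $1$.

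The hard part will be case (iii): the inductive hypothesis only bounds connectivity inside $(A_p,E(A_p)\sm Z_p)$, whereas $\dt^U(A_p)\sm Z_U$ itself may contain several edges (namely $e_p$ together with every $e_q$, $q<p$, that happens to land in $A_p$), so a path from $s_i$ to $t_i$ could \emph{a priori} exit and re-enter $A_p$ many times. The forest structure of $H$ is precisely what rescues the argument: enlarging $A_p$ to its entire downstream subtree collapses the effective boundary down to the single edge $e_p$, and the even-traversal property of closed walks in a tree then rules out any simple path crossing an intra-cluster $e_q$, reducing the claim to the IH on $A_p$.
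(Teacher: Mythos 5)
Your proof is correct, but it takes a genuinely different route from the paper's. The paper argues by contradiction: given two edge-disjoint $s_i$-$t_i$ paths $P_1,P_2$ surviving in $(V,E\sm Z)$, it takes the deepest node $(Y,\T_Y)$ of the recursion tree with $P_1\cup P_2\sse E(Y)$, shows some child $X$ of $(Y,\T_Y)$ must have $|\dt^Y_{P_1\cup P_2}(X)|\geq 2$ (since either both, or neither, of $s_i,t_i$ lie in $X$, or $X$ is an $s_i$-$t_i$ cut), and then picks the \emph{earliest-carved} such child $W$ to get $P_1\cup P_2\sse E(S')$ with $|\dt^{S'}_{E\sm Z}(W)|\geq 2$, contradicting that $Z$ kept at most one edge of $\dt^{S'}(W)$. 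Your approach instead inducts directly on the recursion tree and sets up the auxiliary forest $H$ of \emph{kept} boundary edges $\{e_p\}$, using the out-degree-$\leq 1$/acyclicity structure to show $H$ is a forest with unique sink-roots, and then for each pair explicitly exhibits a certifying cut of size $\leq 1$ (cases (i), (ii)), or, via the even-traversal parity of closed walks in a tree, reduces to the inductive hypothesis on a single blob $A_p$ (case (iii)). Both arguments are sound; the paper's is shorter and avoids the explicit forest bookkeeping, while yours is more constructive (it exhibits the certifying cut in cases (i)--(ii)) and makes the global combinatorial structure of the kept edges transparent, at the cost of more case analysis. One small detail you elide: in case (iii) you should remark that edge-disjoint paths witnessing 2-edge-connectivity can be taken simple (by shortcutting), which is needed before the parity argument applies.
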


\begin{proof}
Suppose for a contradiction that there is some $s_i$-$t_i$ pair such that there are (at
least) 2 edge-disjoint $s_i$-$t_i$ paths $P_1, P_2$ in $(V,E\sm Z)$. 
Consider the recursion tree of \twomcalg, and let $(Y,\T_Y)$ be the node furthest from the
root such that $P_1,P_2\sse E(Y)$.  
(Such a node must exist since the root satisfies this property.)

Note that there is at least one child $(X,.)$ of $(Y,\T_Y)$ such that 
$\dt^Y_{P_1\cup P_2}(X)\neq\es$. If not and both $P_1$ and $P_2$ are contained in 
some child of $(Y,\T_Y)$ then this contradicts the definition of $(Y,\T_Y)$.  
Otherwise, we have $P_1,P_2\sse E(A)$, where 
$A=Y\sm\bigcup_{\text{children $(X,.)$ of $(Y,\T_Y)$}}X$. But then we would have processed
$A$ in step A3 and created at least one child $(A',.)$ for some $A'\sse A$. 

We claim that if $\dt^Y_{P_1\cup P_2}(X)\neq\es$ for a child $(X,.)$ of $(Y,\T_Y)$, then
$|\dt^Y_{P_1\cup P_2}(X)|\geq 2$. This is true if both $s_i$ and $t_i$ are in $X$ or if
neither of them are in $X$ since then a path crossing $X$ must cross it at least
twice. Otherwise, $X$ is an $s_i$-$t_i$ cut, and since $P_1$ and $P_2$ are edge-disjoint
$s_i$-$t_i$ paths in $E(Y)$, we again have $|\dt^Y_{P_1\cup P_2}(X)|\geq 2$.
Among all the children $(X,.)$ of $(Y,\T_Y)$ such that $\dt^Y_{P_1\cup P_2}(X)\neq\es$, let
$(W,.)$ be the child that was added to $\Sc$ earliest in step A3.4 of $\twomcalg(Y,\T_Y)$;
let $S'\sse Y$ be the current set $S$ when $W$ was added. Then, $P_1\cup P_2\sse E(S')$,
and so $|\dt^{S'}_{E\sm Z}(W)|\geq |\dt^{S'}_{P_1\cup P_2}(W')|\geq 2$. But this is a
contradiction, since we include in $Z$ all but at most one edge of $\dt^{S'}(W)$.
\end{proof}

\vspace{-1ex}
\paragraph{\boldmath General $k$ and unit costs.}
The algorithm, which we denote by \kmcalg, leading to our bicriteria guarantee is quite
similar to \twomcalg.  
The only changes are the following:

\begin{list}{$\bullet$}{\itemsep=0ex \addtolength{\leftmargin}{-2ex}}
\item In steps A1 and A3.5, we set $\T'$ to be the $s_i$-$t_i$ pairs from $\T$ that are 
at least $\gm(k-1)$-edge-connected in $(S,E(S))$. 
\item In step A3.2, we apply Corollary~\ref{cor3} with the set $S$ to find the radius
$\rho_1\in[0,1)$.
\item In step A3.4, we add {\em all} edges of $\dt^S(A)$ to $Z$. 
(Unlike 2-\mc, if we only include the edges contributing to $c^q\bigl(\dt^S(A)\bigr)$
for some suitable $q$, then we cannot necessarily argue that the final solution satisfies
the stated connectivity guarantee.) 
\item Of course, in step A4, we now recursively call \kmcalg (with the same arguments).
\end{list}

\begin{theorem} 
\label{kmcthm}
For any $\gm>1$, algorithm \kmcalg returns a solution $Z$ such that 
$c(Z)\leq O\bigl(\frac{\gm}{(\sqrt{\gm}-1)^2}\ln r\ln\ln r\bigr)\cdot\OPT$ and every
$s_i$-$t_i$ pair is less than $\gm(k-1)$-edge-connected in $(V,E\sm Z)$.

Thus, taking $\gm=\frac{k}{k-1}$, we obtain a feasible solution of cost at most
$O\bigl((k-1)^2\ln r\ln\ln r\bigr)\cdot\OPT$.
\end{theorem}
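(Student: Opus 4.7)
The plan is to mimic the proof structure of Theorem~\ref{2mcthm}, substituting Corollary~\ref{cor3} for Lemma~\ref{cor2} and exploiting the fact that step A3.4 of \kmcalg now adds every edge of $\dt^S(A)$ to $Z$ (rather than all but the most expensive one).

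First, I would verify feasibility. Since every edge of $\dt^{S_p}(A_p)$ is placed in $Z$, no path in $(V,E\sm Z)$ can cross the boundary of any carved-out region. Hence, for a pair $(s_i,t_i)$ to retain positive connectivity in $(V,E\sm Z)$, both endpoints must lie either inside the same region $A_p$ handled by some recursive call, or both must remain in the residual set $S$ after the while-loop terminates. The former case is handled inductively by the recursive calls; the latter is controlled by the while-loop termination condition, which guarantees that no source-sink pair in the residual $S$ is $\gm(k-1)$-edge-connected in $(S,E(S))$, hence also in $(V,E\sm Z)$.

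Second, the cost bound follows by induction on the depth $d$ of the recursion subtree rooted at $(U,\T)$, proving the analog of Lemma~\ref{indnlem}:
$$ c(Z_U) \;\leq\; C_\gm\,\bigl(\beta|\T|+\V^x(U)\bigr)\,\ln\!\Bigl(\tfrac{e^d r\V^x(U)}{\OPT}\Bigr)\,\ln\ln\bigl(e(r+1)\bigr), $$
where $C_\gm=\tfrac{2\gm}{(\sqrt{\gm}-1)^2}$. For each child $(A_p,\T_p)$, Corollary~\ref{cor3} applies---its hypotheses (unit costs, and $\gm(k-1)$ edge-disjoint $s_i$-$t_i$ paths internal to $S_p$) being guaranteed by the while-loop condition---and yields $c\bigl(\dt^{S_p}(A_p)\bigr)\leq C_\gm\,\vol_p\,\ln\bigl(e\V^x(S_p)/\V^x(A_p)\bigr)\,\ln\ln\bigl(e(r+1)\bigr)$, where $\vol_p$ equals $\V^{S_p,x}_i(s_i,\rho_1)$ or $\ov{\V^{S_p,x}_i}(s_i,\rho_1)$ depending on the choice in step A3.3. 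Combining this with the inductive bound on $c(Z_p)$, summing over $p$, and using the telescoping inequalities $\sum_p\vol_p\leq\V^x(U)+\beta(h-1)$ and $\sum_p|\T_p|\leq|\T|-h$ completes the induction, exactly as in the derivation of \eqref{indstep}.

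Finally, since each recursive call sends at most half of the source-sink pairs into any child (by the choice in step A3.3), the depth of the recursion tree rooted at $(V,\{(s_1,t_1),\ldots,(s_r,t_r)\})$ is $O(\log r)$. Plugging this in along with $\beta r+\V^x(V)=O(\OPT)$ and $r\V^x(V)/\OPT\leq r+1$ yields $c(Z)=O(C_\gm\ln r\ln\ln r)\cdot\OPT$, as claimed. The unicriterion statement follows by taking $\gm=k/(k-1)$: the connectivity threshold $\gm(k-1)=k$ gives feasibility, and a short Taylor expansion shows $(\sqrt{\gm}-1)^2=\Theta\bigl(1/(k(k-1))\bigr)$, so $C_\gm=O((k-1)^2)$. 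The only delicate point is checking that Corollary~\ref{cor3}'s hypotheses hold at every invocation of step A3.2, which is precisely what the definition and maintenance of $\T'$ in steps A1 and A3.5 enforce; beyond that the argument is a direct adaptation of the $k=2$ analysis.
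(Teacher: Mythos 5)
Your proposal is correct and follows the same route as the paper: verify feasibility from the fact that all boundary edges of every carved region land in $Z$ together with the while-loop termination condition, and carry the induction of Lemma~\ref{indnlem} through with Corollary~\ref{cor3} substituted for Lemma~\ref{cor2}. The paper compresses the feasibility argument to a single sentence, but the content you supply (positive connectivity forces both endpoints into a common region, handled either by the recursion or by the residual-$S$ termination condition) is exactly the argument that sentence gestures at, and your computation that $\gm=k/(k-1)$ gives $\gm/(\sqrt\gm-1)^2=\Theta((k-1)^2)$ matches the paper's final claim.
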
 

\begin{proof} 
Let $Z$ be the output of
$\kmcalg\bigl(V,\{(s_1,t_1),\ldots,(s_r,t_r)\}\bigr)$. It is clear that $Z$ is feasible:
every $s_i$-$t_i$ pair that is at least $\gm(k-1)$-edge-connected in $(U,E(U))$, where
$(U,.)$ is a node of the recursion tree is either taken care of (i.e., rendered less
than $\gm(k-1)$-edge-connected) by the edges added to $Z$ in step A3, or, by induction, is
taken care of by a recursive call.

Mimicking the proof of Lemma~\ref{indnlem}, and using Corollary~\ref{cor3} in place of 
Lemma~\ref{cor2} in the proof, one can easily show that if $d$ is the depth of the
recursion tree rooted $(U,\T)$ and $Z_U=\kmcalg(U,\T)$, then 
$$ 
c(Z_U)\leq
\frac{2\gm}{(\sqrt{\gm}-1)^2}\Bigl(\beta|\T|+\V^x(U)\Bigr)%
\ln\Bigl(\frac{e^dr\V^x(U)}{\OPT}\Bigr)\ln\ln\bigl(e(r+1)\bigr).
$$
Since the depth of the overall recursion tree is $O(\log_2 r)$, as argued in the proof of
Theorem~\ref{2mcthm}, we obtain that 
$c(Z)\leq O\bigl(\frac{\gm}{(\sqrt{\gm}-1)^2}\ln r\ln\ln r\bigr)\cdot\OPT$. 
\end{proof}

\section{\boldmath Improved hardness result for \kmic} \label{kstcut}
Theorems~\ref{thm:inapproxSSVE} and~\ref{kmichard} together prove that \kmic is at least
as hard as the densest-$k$-subgraph problem (\dks) problem. 
In \dks on hypergraphs, we seek a set of $k$ nodes containing the maximum the number of
hyperedges. Our hardness result implies that obtaining a
unicriterion $O\bigl(k^{\e_0}\polylog(n)\bigr)$-approximation for some constant $\e_0$ would
improve the current-best guarantee for \dks on graphs, and imply the existence of a
certain family of one-way functions.
Our reduction is from \emph{small set vertex expansion} (\ssve), wherein we have a
bipartite graph $G=(U \cup V,E)$ and a parameter $0 < \alpha \leq 1$, and we seek a subset
$S \subseteq U$ with $|S| \geq \alpha |U|$ that minimizes the number of neighbors,
$\Gamma(S)$. 
Chuzoy et al.~\cite{ChuzhoyMVZ12} show that \ssve reduces to the minimization version of
\dks, \mindks, wherein we seek a minimum number of nodes that contain at least $k$
hyperedges. They also show that a $\rho$-approximation for \mindks on $\ld$-uniform
hypergraphs yields a $(2\rho^\ld)$-approximation for \dks on $\ld$-uniform hypergraphs. 

For a graph $H=(V_H,E_H)$, subset $S\sse V_H$, and $v\in V_H\sm S$, we use
$\dt_H(S,v)=\dt_H(v,S)$ to denote the edges between $v$ and $S$ in $H$, and 
$\Gm_H(S)$ to denote the set of neighbors of $S$ in $H$. As is standard, we
abbreviate $\dt_H(\{v\},V_H\sm\{v\})$ to $\dt_H(v)$. 

\begin{theorem}[\cite{ChuzhoyMVZ12}] \label{thm:inapproxSSVE}
For any $\lambda \geq 2$, there is a polytime approximation-preserving reduction that
given a \mindks-instance on a $\ld$-uniform hypergraph with $n$ nodes and $m$ edges,
creates an \ssve-instance with $m+n$ nodes and $\ld m$ edges. 
Hence, a $\rho(m,n)$-approximation for \ssve yields, for $\lambda$-uniform hypergraphs, a
$\rho(\ld m,m+n)$-approximation for \mindks and a 
$\bigl(2\bigl(\rho(\ld m,m+n)\bigr)^\lambda\bigr)$-approximation for \dks.
\end{theorem}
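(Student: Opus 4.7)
The theorem bundles two separate but complementary reductions: one from \mindks to \ssve, and one from \dks to \mindks (with a $\lambda$-loss). I would prove them one after the other.

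For the first reduction, I would build the natural bipartite incidence graph. Given a \mindks instance on a $\lambda$-uniform hypergraph $(V_H, E_H)$ with $|V_H|=n$, $|E_H|=m$, and target $k$, set $U := E_H$ and $V := V_H$, and include an edge between $e \in U$ and $v \in V$ iff $v$ belongs to hyperedge $e$. The resulting bipartite graph has $m+n$ nodes and (by $\lambda$-uniformity) exactly $\lambda m$ edges. I would set the \ssve parameter to $\alpha := k/m$ so that feasibility of $S \sse U$ says precisely that $S$ picks at least $k$ hyperedges, and the objective $|\Gm_H(S)|$ equals the number of hypergraph vertices touched by those hyperedges. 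Approximation preservation is then a one-line check in both directions: any $k$ hyperedges contained in an optimum \mindks solution give a feasible $S$ whose neighborhood sits inside the optimal vertex set, so $\OPT_{\ssve} \leq \OPT_{\mindks}$; conversely, $\Gm_H(S)$ for an approximate $S$ is a feasible \mindks solution of the same size, yielding the $\rho(m+n, \lambda m)$-approximation.

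For the second reduction, I would use a standard guess-and-subsample argument. Let $L^\star$ be the \dks optimum (the maximum number of hyperedges induced by any $k$-subset), which we may estimate within any constant factor by trying $O(\log m)$ geometrically spaced guesses. Running the \mindks approximator with target $L^\star$ returns a vertex set $T$ with $|T| \leq \rho k$ covering at least $L^\star$ hyperedges (since the true optimum on this target is at most $k$). The key step is to sample a uniformly random $k$-subset $T' \sse T$: a fixed hyperedge with all $\lambda$ endpoints in $T$ survives the sampling with probability
\[
\frac{\binom{|T|-\lambda}{k-\lambda}}{\binom{|T|}{k}} \;=\; \prod_{j=0}^{\lambda-1}\frac{k-j}{|T|-j} \;\geq\; \frac{1}{2\rho^\lambda},
\]
where the final bound holds whenever $k$ is sufficiently large compared to $\lambda$ (which may be assumed without loss of generality, since otherwise \dks is trivial). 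Linearity of expectation then gives $\Exp[\text{edges covered by }T'] \geq L^\star/(2\rho^\lambda)$, and derandomizing the choice of $T'$ (e.g., by the method of conditional expectations over the sequential inclusion/exclusion decisions) produces the promised $\bigl(2\rho^\lambda\bigr)$-approximation.

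The genuinely delicate part is not the first reduction (which is essentially a bookkeeping exercise on the incidence graph) but the second: one must absorb the lower-order slack in $\prod_{j=0}^{\lambda-1}(k-j)/(|T|-j)$ cleanly into the stated ``$2$'' factor, and handle the regime where $k$ and $\lambda$ are comparable as a separate, trivial case. Once these edge cases are dispatched, the factor $\rho^\lambda$ arises naturally from the $\lambda$-uniform structure via the sampling inequality above, and composing the two reductions yields the claimed $\rho(\lambda m, m+n)$ and $2\rho(\lambda m, m+n)^\lambda$ guarantees.
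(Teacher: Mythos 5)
The paper states Theorem~\ref{thm:inapproxSSVE} as a citation to \cite{ChuzhoyMVZ12} and does not reproduce a proof, so there is no in-paper argument to compare against; what follows is an assessment of your reconstruction on its own terms.

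Your two-part decomposition is the natural one and the substance is correct. In Part~1 the incidence-graph reduction with $\alpha = k/m$ does give a value-preserving correspondence: any feasible $S \sse U$ of at least $k$ hyperedges has all of those hyperedges fully contained in $\Gm(S)$, and conversely any vertex set $T$ containing at least $k$ hyperedges yields a feasible $S$ (pick $k$ of the contained hyperedges) with $\Gm(S) \sse T$, so $\OPT_{\ssve} = \OPT_{\mindks}$ and the approximation factor transfers directly. The instance-size accounting ($m+n$ nodes, $\lambda m$ edges, by $\lambda$-uniformity) is right. In Part~2 the hypergeometric survival probability $\binom{|T|-\lambda}{k-\lambda}/\binom{|T|}{k} = \prod_{j=0}^{\lambda-1}(k-j)/(|T|-j)$ is computed correctly, the use of $|T| \leq \rho k$ (after guessing $L^\star$) is sound, and the conditional-expectation derandomization goes through since the relevant conditional probabilities are polynomial-time computable. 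Two caveats worth tightening. First, your lower bound $\prod_{j<\lambda}(k-j)/(|T|-j) \geq 1/(2\rho^\lambda)$ needs $k = \Om(\lambda^2)$ (so that each correction factor $1 - (\rho-1)j/(\rho k - j)$ eats at most a $1/(2\lambda)$ fraction), and the ``otherwise trivial'' fallback is justified only when $\lambda$ is a fixed constant. That is indeed how the theorem is used here (the hardness pipeline takes $\lambda \geq 3$ fixed), but since the statement is quantified over all $\lambda \geq 2$ you should say this explicitly rather than leave it implicit. Second, a small typo: in Part~1 you wrote $\rho(m+n, \lambda m)$ with the arguments swapped relative to the theorem's convention $\rho(\text{edges},\text{nodes})$, i.e., it should read $\rho(\lambda m, m+n)$.
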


\begin{theorem} \label{kmichard}
There is a polytime approximation-preserving reduction that given an \ssve-instance with
$n$ nodes and $m$ edges,
creates a \kmic-instance with $O(n^3)$ nodes, $O(mn^2+n^5)$ edges, and \mbox{$k=O(mn^2)$.}  
Hence, a $\rho(k,m,n)$-approximation for \kmic yields a
$\rho\bigl(O(mn^2),O(mn^2+n^5),O(n^3)\bigr)$-approximation for \ssve. 
\end{theorem}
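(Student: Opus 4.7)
The plan is to reduce \ssve to \kmic in two stages. First, I would reduce \ssve to \kmic on a directed graph by composing the vertex-connectivity hardness of \kmic in~\cite{ChuzhoyMVZ12} with the standard node-splitting trick: for every internal vertex $v$ of the Chuzhoy et al.\ construction, split $v$ into $v^{\text{in}}, v^{\text{out}}$ joined by a directed arc whose cost equals $v$'s vertex-cost, route every in-arc of $v$ into $v^{\text{in}}$ and every out-arc out of $v^{\text{out}}$, and replace the latter arcs by sufficiently many parallel unit-cost copies so they are effectively uncuttable. Using parallel copies everywhere else to bring the required edge-connectivity threshold up to $k_D = \Theta(mn^2)$, this yields a \kmic-instance on a directed graph $D = (V_D, A)$ with source $s$, sink $t$, and threshold $k_D$, which inherits the \dks-hardness of the underlying \ssve-instance.

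Second, I would lift $D$ to an undirected \kmic-instance $\tG$ via the trick of~\cite{ChakrabartyKLN13}. Start from the underlying undirected graph of $D$, and for each arc $(u,v) \in A$ attach an auxiliary node $w_{uv}$ together with a collection of expensive $s$-$t$ paths routed through $w_{uv}$ in the direction $s \rightsquigarrow v \to w_{uv} \to u \rightsquigarrow t$. Choose the per-edge cost $M$ of the attached edges to exceed the total cost of all ``cheap'' edges inherited from $D$, so that no cost-bounded candidate \kmic-solution ever cuts an expensive edge, and choose the number of attached expensive paths so that in every $s$-$t$ max-flow in $\tG$ the reverse direction of every arc $(u,v) \in A$ is saturated by the forced flow along the gadgets. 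Then the undirected residual graph of $\tG$ obtained after pushing this forced flow coincides exactly with $D$: along every arc, only the forward direction remains available for additional flow. Consequently, reducing the $s$-$t$ edge-connectivity of $\tG$ below the appropriate threshold $\tilde k$ (combining $k_D$ with the gadget contribution) is equivalent, up to the untouched expensive edges, to reducing the $s$-$t$ edge-connectivity of $D$ below $k_D$; since only cheap edges---which correspond bijectively to arcs of $D$---are ever cut, the reduction is approximation-preserving.

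The hard part will be the gadget design in the second stage, where two requirements trade off: (i) $M$ and the number of copies of every expensive path must be large enough to render the expensive edges uncuttable \emph{and} to force the desired flow pattern on every arc of $D$, while (ii) these same choices directly drive the number of auxiliary nodes, the number of edges, and the final threshold $\tilde k$. A careful bookkeeping---using $O(n^3)$ auxiliary nodes to host the parallel copies needed to reach $k_D = \Theta(mn^2)$, contributing $O(mn^2)$ edges from the directed backbone and $O(n^5)$ edges from the expensive back-gadgets, with final $\tilde k = O(mn^2)$---then delivers the size, edge, and connectivity bounds claimed in the statement. Composing the approximation transfers from Stages~1 and~2 yields the advertised $\rho\bigl(O(mn^2), O(mn^2+n^5), O(n^3)\bigr)$-approximation for \ssve from any $\rho(k,m,n)$-approximation for \kmic.
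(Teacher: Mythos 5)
Your proposal correctly identifies the paper's high-level strategy, which the authors themselves outline in their "Our techniques" paragraph: view the \ssve reduction of~\cite{ChuzhoyMVZ12} as giving a directed \kmic instance, and then use the residual-flow trick of~\cite{ChakrabartyKLN13} to pass from digraphs to undirected graphs. However, what you have written is a plan rather than a proof, and you say so yourself --- ``the hard part will be the gadget design in the second stage.'' That hard part is essentially the entire content of the paper's proof.

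A few concrete issues with the sketch as it stands. First, the paper does not implement stage~1 via node-splitting with parallel copies; it keeps each $v\in V$ as a clique $K(v)$ of size $N=2|U||V|+1$ with infinite-cost edges, which simultaneously encodes vertex weight (deleting $v$ costs $N$ unit-cost edges from $K(v)$ to $t$) and forces structured solutions (one must argue that any feasible cut can be normalized so that $|F\cap\dt_{G'}(K(v),t)|\in\{0,N\}$ for every $v$ --- this normalization lemma is crucial and absent from your outline). Second, your stage-2 gadget attaches one auxiliary node $w_{uv}$ per arc $(u,v)$, whereas the paper uses just two global vertices $a,b$ with carefully chosen integer capacities $|E|\cdot N$ on $(s,a),(b,t)$ and degree-dependent capacities elsewhere; the per-arc construction needs careful accounting to ensure the paths you force are edge-disjoint and to keep the node/edge counts within the claimed $O(n^3)$ and $O(mn^2+n^5)$. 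Third, the correctness argument is not merely ``residual graph coincides with $D$'': the paper must show that after sending the $N|E|$ forced flow units along $s$--$a$--$x$--$u$--$b$--$t$ paths, the $s$-$t$ min cut in the residual network recovers an \ssve solution (the argument that $K(v)$ is never split across the cut, that $|S|\geq\al|U|$, and that $\dt_{G'}(K(v),t)\sse F'$ for $v\in\Gm_G(S)$). Until you pin down the gadget and carry out this correctness argument, the proof is not complete.
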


\begin{proof}
Given an instance $\bigl(G=(U\cup V,E),\al\bigr)$ of \ssve, we construct the following
instance of \kmic; see Fig.~\ref{fig:SSVE}. 
Let $N=2|U||V|+1$. 
Below, an infinite-cost edge $(u,v)$ of capacity $b_{uv}$ is simply a shorthand for
$b_{uv}$ parallel infinite-cost edges. Also, unless otherwise specified, an edge has unit
capacity.  

\begin{list}{(\roman{enumi})}{\usecounter{enumi} \topsep=0.5ex \itemsep=0ex
    \addtolength{\leftmargin}{-1ex}} 
\item We replace each vertex $v \in V$ with a clique
$K(v)$ of size $N$. All edges in the clique have infinite cost. 
For each edge $(u,v) \in E$, we add an edge between $u$ and every vertex in $K(v)$.  

\item We add the source $s$ and connect it to all vertices in $U$; 
we add the sink $t$ and connect it to all vertices in $K(v)$ for every $v\in V$.

\item Finally, we add a vertex $b$, an edge $(b,t)$ with capacity $|E| \cdot N$, and edges
$(b,u)$ with capacity $|\delta_G(u)| \cdot N$ for all $u\in U$. 
We also add a vertex $a$, an edge $(s,a)$ with capacity $|E| \cdot N$, and for all 
$v \in V$, we add edges $(a,x)$ for all $x\in K(v)$ with capacity $|\dt_G(v)|$.
\end{list}

\noindent
All edges have infinite cost except for the edges between $\bigcup_{v\in V}K(v)$ and $t$,
which have unit cost. 
We set $k = |U|(1- \alpha) + N|E| +1$. 

\begin{figure}[t!]
\centering
\includegraphics[width = 0.6\textwidth]{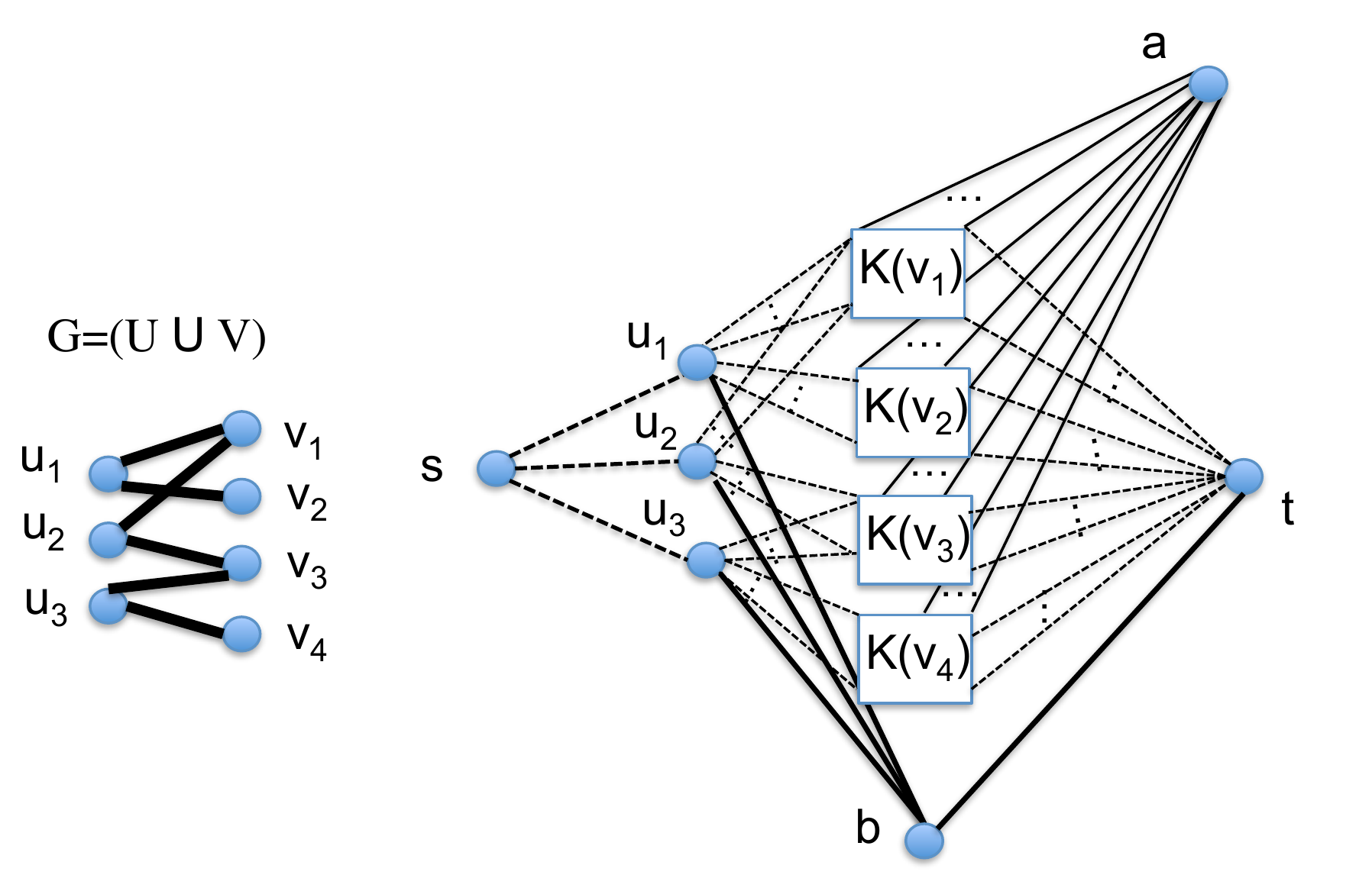} 
\caption{To the left, a graph of a given SSVE instance. To the right, the graph for the  \kmic instance created by our reduction.
The edges incident into $t$ have unit cost, while all other edges have infinite cost. 
Each $K(v_i)$ is a clique with $N=25$ vertices. Dashed edges have unit capacity. The other edges have the 
following capacities: edge $\{s,a\}$ and edge $\{b,t\}$ have capacity 150. Each edge $\{b,u_i\}$ ($i=1,2,3$) has 
capacity 50. Each edge $\{a,x\}$ for $x \in K(v_1)$ and $x \in K(v_3)$ has capacity 2. Each edge $\{a,x\}$ for $x \in K(v_2)$ and $x \in K(v_4)$ has capacity 1. }
\label{fig:SSVE}
\end{figure}

We claim that there exists a solution of value at most $C$ for the \ssve instance
iff there is a solution of value at most $CN$ for the \kmic instance.
Note that a solution $F$ consisting of unit-cost edges is feasible 
if the maximum $s$-$t$ flow in the capacitated remainder graph $G'\sm
F$ has value at most $k-1$. 
The intuition is that if we send $N|E|$ units of flow along the paths $s-a-x-u-b-t$ for
all $(u,v)\in E, x\in K(v)$, then in the residual digraph, all arcs between $U$ and
$\bigcup_{v\in V}K(v)$ leave $U$. Given this, one can mimic the arguments
in~\cite{ChuzhoyMVZ12} to show the desired claim.

Suppose the \ssve instance has a solution $S\sse U$ of value at most $C$. 
Construct a \kmic-solution by removing the $(v,t)$ edges for all $v\in\Gamma_G(S)$. 
Clearly the cost of this set is at most $CN$. We now argue feasibility. 
Consider the $s$-$t$ cut induced by $A=\{s,a\}\cup S\cup\bigcup_{v\in\Gamma_G(S)}K(v)$.
Then $|\dt_{G'}(A)|$ is equal to
$$
\underbrace{|U\sm S|}_{\normalsize{\substack{\text{edges between}\\ \text{$s$ and $U\sm S$}}}}+\
\underbrace{\sum_{u\in S}N|\dt_G(u)|}_{\substack{\text{edges between}\\ \text{$b$ and
      $S$}}}\ +
\underbrace{\sum_{v\in V\sm\Gm_G(S)}N|\dt_G(v)|}_{\substack{\text{edges
    between $a$ and}\\ \text{$\bigcup_{v\in V\sm\Gm_G(s)}K(v)$}}}\ +\ 
\underbrace{N\bigl(\text{\# of edges in $G$ between $U\sm S$ and
    $\Gm_G(S)$}\bigr)}_{\text{edges between $U\sm S$ and  
    $\bigcup_{v\in\Gm_G(S)}K(v)$}}.
$$
The sum of the last two terms is $\sum_{u\in U\sm S}N|\dt_G(u)|$ and 
$|U\sm S|\leq|U|(1-\al)$, so the size of the $s$-$t$ cut is at most
$|U|(1-\al)+\sum_{u\in U}N|\dt_G(u)|\leq|U|(1-\al)+N|E|\leq k-1$.

For the other direction, suppose $G'$ has a solution $F$ of value at most $CN$. 
Clearly, $F$ can consist of only unit-cost edges (incident to $t$).
We first argue that we may convert $F$ into a structured feasible solution $F'$ of cost at
most $CN$ where $|F'\cap\dt_{G'}(K(v))|\in\{0,N\}$ for all $v\in V$. 

Fix $v\in V$.
If $|\delta_{G'}(K(v),t)\sm F|\leq k':=|U|(1- \alpha)$, then we add all edges of
$\dt_{G'}(K(v),t)\sm F$  
to $F$. 
Now suppose $|\delta_{G'}(K(v),t)\sm F|>k'$ and let 
$w_1,\dots,w_{k'+1}$ be vertices in $K(v)$ such that $(w_i,t)\notin F$ for all
$i=1,\ldots,k'+1$. We claim that $F\sm\dt_{G'}(K(v),t)$ 
is also feasible. 
Suppose to the contrary that we now have $k'+1$ edge-disjoint $s$-$t$-paths. 
We may assume that each such path contains at most one vertex from $K(v)$ since we can
always shortcut the path to $t$. Consider a path $P$ that contains a vertex $w\in K(v)$
where $w\notin\{w_1, \dots, w_{k'+1}\}$.    
Then we can construct another path $P'$ by switching $w$ with a distinct vertex $w_j$ for
some $j\in\{1\dots, k'+1\}$. Note that $P'$ is an $s$-$t$ path that avoids edges in $F$.
If we repeat this argument for all such paths, we obtain $k'+1$ edge-disjoint
$s$-$t$-paths in $G'\setminus F$, contradicting the feasibility of $F$.

If we perform the above transformation for all $v\in V$, then we obtain a feasible
solution $F'$ of cost at most $|F|+|V|k'<(C+1)N$.  
But by construction $|F'|$ must be an integer multiple of $N$, so $|F'|\leq CN$.

Consider the residual network $\tilde G$ that is obtained from $G'\sm F'$ as follows. We 
first bidirect the edges of $G'\sm F'$, giving each resulting arc the same capacity as that of
the corresponding edge of $G'$. $\tG$ is the residual network obtained after we send one
unit of flow along the path $s$-$a$-$x$-$u$-$b$-$t$ for every edge $(u,v)\in E$ and every 
$x\in K(v)$. Note that these paths are edge disjoint, so we send $N|E|$ units of flow.
By flow theory~\cite{AhujaMO93}, we know that the value of the maximum $s$-$t$-flow in
$G'\sm F'$ is at most $k':=|U|(1-\al)$ iff the value of maximum $s$-$t$-flow in $\tG$,
which equals the capacity of the minimum $s$-$t$ cut in $\tG$, is at most $k'$. 
It follows there is an $s$-$t$ cut in $\tG$ of capacity at most $k'$.

Let $A$ be the vertices that are on the $s$-side of this cut. 
Let $S := U \cap A$. Then $|S|\geq\alpha|U|$, otherwise the cut would have capacity more
than $|U|(1- \alpha)$ due to the arcs $(s,u)$ for $u\in U\setminus S$.
Consider $v\in\Gm_G(S)$. We must have $K(v)\sse A$: if $K(v)\cap A=\es$, then considering
$u\in S$ such that $(u,v)\in E$, the cut has capacity at least $N>k'$ due to the edges
between $u$ and $K(v)$; otherwise, since $K(v)$ is split between the $s$- and $t$- sides,
the cut has capacity at least $N-1>k'$.
Finally, $\dt_{G'}(K(v),t)\sse F'$, otherwise $\dt_{G'}(K(v),t)\cap F'=\es$, and
again the cut has capacity at least $N>k'$. Thus, $|\Gm_G(S)|\leq C$, so $S$ is an
\ssve-solution. 
\end{proof}

\section{\boldmath Extensions to node-connectivity versions of \kmc} \label{node-extn}
We now consider variants of \kmc where we seek to delete edges or nodes so as to reduce
the {\em node connectivity} of each $s_i$-$t_i$ pair to at most $k-1$. 
Formally, as before, we are given an undirected graph $G=(V,E)$, $r$ source-sink pairs
$(s_1,t_1),\ldots,(s_r,t_r)$, and an integer $k\geq 1$.
In the {\em edge-deletion $k$-route node-multicut} (\edknmc) problem, we have nonnnegative
edge-costs $\{c_e\}_{e\in E}$ and we seek a minimum-cost set $F\sse E$ of edges such that 
the remainder graph $\bG=(V,E\sm F)$ contains at most $k-1$ node-disjoint
$s_i$-$t_i$ paths for all $i=1,\ldots,r$. 
In the {\em node-deletion $k$-route node-multicut} (\ndknmc) problem, we have nonnegative
node costs $\{c_v\}_{v\in V}$ and we seek a minimum-cost set 
$A\sse V\sm\{s_1,t_1,\ldots,s_r,t_r\}$ of nodes such that 
the remainder graph $\bG=\bigl(V\sm A,E(V\sm A)\bigr)$ 
contains at most $k-1$ node-disjoint $s_i$-$t_i$ paths for all $i=1,\ldots,r$.

The LP-relaxations of these $k$-route node-multicut problems induce both edge and node
lengths, so to round these we develop region-growing lemmas that also incorporate
node lengths. To keep notation simple, instead of proving an overly- general
region-growing lemma and obtaining the lemmas required for \edknmc and \ndknmc as
corollaries, we specifically focus on \edknmc (Section~\ref{edknmc}) and \ndknmc
(Section~\ref{ndknmc}) and prove suitable region-growing lemas. 
We use these to obtain an $O(\ln r\ln\ln r)$-approximation for \edknmc with $k=2$, and a
bicriteria 
$\bigl(\gm,O\bigl(\frac{\gm}{(\sqrt{\gm}-1)^2}\ln r\ln\ln r\bigr)\bigr)$-approximation  
for \ndknmc with general $k$ and unit node costs.

\subsection{\boldmath Edge-deletion $k$-route node-multicut (\edknmc) with $k=2$} \label{edknmc}
The LP-relaxation for \edknmc is as follows.
\begin{alignat*}{3}
\min & \quad & \sum_e c_ex_e & \tag{P2} \label{edknmcp} \\
\text{s.t.} && \sum_{e\in E(P)} x_e &+\sum_{v\in V(P)}y^i_v \geq 1 \qquad && 
\forall i, \forall P\in\Pc_i \\
&& \sum_v y^i_v & \leq k-1, \quad y^i_{s_i}=y^i_{t_i}=0 \qquad && \forall i \\
&& x, y & \geq 0. 
\end{alignat*}

\paragraph{Region-growing lemma.}
Let $\bigl(x,\{y^i\}\bigr)$ be an optimal solution to \eqref{edknmcp}, and $\OPT$ be its
value.  
Let $S\sse V$ represent the node-set of the current region. 
For $T\sse S\sse V$, recall that $E(S)$ is the set of edges with both endpoints in $S$
and $\dt^S(T)$ is the set of boundary edges of $T$ in $E(S)$.
Set $\beta=\OPT/r$. 
As before, define $\V^x(S):=\beta+\sum_{e\in E(S)}c_ex_e$. 
Let $\rho\geq 0$. Let $z\in V$. Fix a commodity $i$.

\begin{list}{$\bullet$}{\itemsep=0ex \topsep=0.5ex \addtolength{\leftmargin}{-2ex}} 
\item Define ${\ell^i}(u;v)=\min_{P:P\text{ is a $u$-$v$ path}}
\bigl(\sum_{e\in E(P)}x_e+\sum_{w\in V(P):w\neq u}y^i_w\bigr)$, where $E(P)$ and
$V(P)$ denote respectively the set of edges and nodes of $P$.
Note that 
${\ell^i}$ defines an {\em asymmetric metric} on $V\times V$.

\item Define $B_i(z,\rho):=\{v\in V: {\ell^i}(z;v)\leq\rho\}$ to be the ball of radius
$\rho$ around $z$. Let $B^S_i(z,\rho):=B_i(z,\rho)\cap S$. 

\item Define the edge-boundary $\del^{S,x}_i$, and node-boundary $\Gm^{S,y}_i$, of
$B^S_i(z,\rho)$ in $S$ as follows.
\begin{eqnarray*}
\del^{S,x}_i(z,\rho) & := & \{(u,v)\in E: u,v\in S,\ {\ell^i}(z;u)\leq\rho,\ {\ell^i}(z;v)-y^i_v>\rho\} \\
\Gm^{S,y}_i(z,\rho) & := & \{v\in S: \rho<{\ell^i}(z;v)\leq\rho+y^i_v\}.
\end{eqnarray*}
Let $\ov{B^S_i}(z,\rho):=S\sm\bigl(B^S_i(z,\rho)\cup\Gm^{S,y}_i(z,\rho)\bigr)$.

\item Define the following volumes:
\begin{eqnarray*}
\V^{S,x}_i(z,\rho) & := & \beta
+\sum_{\substack{(u,v)\in E: u\in B^S_i(z,\rho) \\ v\in B^S_i(z,\rho)\cup\Gm^{S,y}_i(z,\rho)}}c_{uv}x_{uv}
+\sum_{\substack{(u,v)\in\del^{S,x}_i(z,\rho): \\ u\in B^S_i(z,\rho)}}c_{uv}\bigl(\rho-\ell^i(z;u)\bigr) \\[0.5ex]
\ov{\V^{S,x}_i}(z,\rho) & := & \beta
+\sum_{\substack{(u,v)\in E: u\in\ov{B^S_i}(z,\rho) \\ v\in\ov{B^S_i}(z,\rho)\cup\Gm^{S,y}_i(z,\rho)}}c_{uv}x_{uv}
+\sum_{\substack{(u,v)\in\del^{S,x}_i(z,\rho): \\ u\in B^S_i(z,\rho)}}c_{uv}\bigl(\ell^i(z;v)-\rho-y^i_v\bigr)
\end{eqnarray*}
\end{list}

\begin{lemma} \label{ednreggrow}
Let $S\sse V$, $z\in V$, $i$ be some commodity, and $0\leq a<b$.
Let $\rho$ be chosen uniformly at random from $[a,b)$. Then,
\begin{eqnarray}
\E[\rho]{\frac{c\bigl(\del^{S,x}_i(z,\rho)\bigr)}
{\V^{S,x}_i(z,\rho)\ln\Bigl(\frac{e\V^{S,x}_i(z,b)}{\V^{S,x}_i(z,\rho)}\Bigr)}}
& \leq & \frac{1}{b-a}\cdot\ln\ln\biggl(\frac{e\V^{S,x}_i(z,b)}{\V^{S,x}_i(z,a)}\biggr) 
\quad\text{and} \label{ednbnd1} \\
\E[\rho]{\frac{c\bigl(\del^{S,x}_i(z,\rho)\bigr)}
{\ov{\V^{S,x}_i}(z,\rho)\ln\Bigl(\frac{e\ov{\V^{S,x}_i}(z,a)}{\ov{\V^{S,x}_i}(z,\rho)}\Bigr)}}
& \leq &
\frac{1}{b-a}\cdot\ln\ln\biggl(\frac{e\ov{\V^{S,x}_i}(z,a)}{\ov{\V^{S,x}_i}(z,b)}\biggr) 
\label{ednbnd2}
\end{eqnarray}
\end{lemma}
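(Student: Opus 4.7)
The plan is to mimic the proof of Lemma~\ref{reggrow}, replacing the purely edge-length metric by the asymmetric shortest-path (pseudo)metric $\ell^i$ that also incorporates the node charges $y^i_v$, and then carefully book-keep the new discontinuity events that arise from node transitions. Concretely, I would first identify the finite set of critical values
\[
I \;:=\; \{\ell^i(z;v) : v \in S\} \;\cup\; \{\ell^i(z;v) - y^i_v : v \in S\},
\]
and write $a_0 = a < a_1 < \cdots < a_{k-1} < a_k = b$ with $\{a_1,\ldots,a_{k-1}\} = (a,b)\cap I$. On each smooth subinterval $(a_{j-1},a_j)$, the sets $B^S_i(z,\rho)$, $\Gm^{S,y}_i(z,\rho)$, and $\ov{B^S_i}(z,\rho)$ are constant, and the only $\rho$-dependence in $\V^{S,x}_i(z,\rho)$ comes from the partial contributions $c_{uv}(\rho-\ell^i(z;u))$ over edges $(u,v) \in \del^{S,x}_i(z,\rho)$. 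A direct differentiation then gives $\tfrac{d}{d\rho}\V^{S,x}_i(z,\rho) = c\bigl(\del^{S,x}_i(z,\rho)\bigr)$ on each smooth subinterval, and symmetrically $\tfrac{d}{d\rho}\ov{\V^{S,x}_i}(z,\rho) = -c\bigl(\del^{S,x}_i(z,\rho)\bigr)$.

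The main new step, and the one I expect to be the trickiest, is verifying that $\V^{S,x}_i$ has only \emph{upward} jumps at the points in $I$ (and, dually, $\ov{\V^{S,x}_i}$ has only downward jumps). There are two types of events to check: (i) a node $v$ enters $\Gm^{S,y}_i$ from $\ov{B^S_i}$ at $\rho = \ell^i(z;v) - y^i_v$; and (ii) $v$ moves from $\Gm^{S,y}_i$ into $B^S_i$ at $\rho = \ell^i(z;v)$. In case (i), for every edge $(v,w)$ with $w \in B^S_i$, the contribution to $\V^{S,x}_i$ transitions from the partial term $c_{vw}(\rho - \ell^i(z;w))$ to the full term $c_{vw} x_{vw}$; the shortest-path inequality $\ell^i(z;v) \leq \ell^i(z;w) + x_{vw} + y^i_v$ rearranges exactly to $\ell^i(z;v)-y^i_v-\ell^i(z;w) \leq x_{vw}$, which is precisely the statement that the jump is non-negative. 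Case (ii) introduces full terms $c_{vw}x_{vw}$ for edges from $v$ to other nodes in $\Gm^{S,y}_i$ that were previously uncounted, which is another non-negative jump. The dual checks for $\ov{\V^{S,x}_i}$ show that its jumps are non-positive.

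With these two ingredients in place, the rest follows the template of Lemma~\ref{reggrow} verbatim. Writing $c(\rho):=c(\del^{S,x}_i(z,\rho))$ and $\V(\rho):=\V^{S,x}_i(z,\rho)$, on each smooth piece
\[
\int_{a_{j-1}}^{a_j^-}\frac{c(\rho)\,d\rho}{\V(\rho)\ln\bigl(e\V(b)/\V(\rho)\bigr)}
\;=\;-\ln\ln\Bigl(\tfrac{e\V(b)}{\V(\rho)}\Bigr)\Big|_{a_{j-1}}^{a_j^-},
\]
and telescoping these pieces gives
\[
(b-a)\cdot\E[\rho]{\tfrac{c(\rho)}{\V(\rho)\ln(e\V(b)/\V(\rho))}}
\;=\;\ln\ln\Bigl(\tfrac{e\V(b)}{\V(a)}\Bigr)
-\ln\ln\Bigl(\tfrac{e\V(b)}{\V(a_k^-)}\Bigr)
+\sum_{j=1}^{k-1}\Bigl[\ln\ln\Bigl(\tfrac{e\V(b)}{\V(a_j)}\Bigr)-\ln\ln\Bigl(\tfrac{e\V(b)}{\V(a_j^-)}\Bigr)\Bigr].
\]
Since $\V$ is non-decreasing and has only upward jumps, each bracketed term is non-positive and the leading subtracted term is non-positive, which bounds the expectation by $\tfrac{1}{b-a}\ln\ln\bigl(e\V(b)/\V(a)\bigr)$ and gives \eqref{ednbnd1}. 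The bound \eqref{ednbnd2} is obtained by the same computation applied to $\ov\V$, using that $\ov\V$ is non-increasing with non-positive jumps so that the analogous bracketed terms again have the right sign.
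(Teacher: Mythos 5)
Your proposal is correct and takes the same approach as the paper: identify the critical set $I$, differentiate on the smooth pieces, and telescope. The paper simply says the argument "follows from exactly the same arguments as in the proof of Lemma~\ref{reggrow}," leaving the monotonicity (upward jumps of $\V^{S,x}_i$ and downward jumps of $\ov{\V^{S,x}_i}$) implicit; your explicit verification of these jumps via the shortest-path inequality $\ell^i(z;v)\leq\ell^i(z;w)+x_{vw}+y^i_v$ is exactly the non-trivial content that the asymmetric metric and the two-stage boundary $\Gm^{S,y}_i$ introduce beyond Lemma~\ref{reggrow}.
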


\begin{proof} 
We abbreviate $c\bigl(\del^{S,x}_i(z,\rho)\bigr)$ to $c(\rho)$,
$\V^{S,x}_i(z,\rho)$ to $\V(\rho)$ and $\ov{\V^{S,x}_i}(z,\rho)$ to
$\ov{\V}(\rho)$.
Let $I=\{\ell^i(z;v), \ell^i(z;v)-y^i_v: v\in V\}$. Note that $\V(\rho)$ and
$\ov{\V}(\rho)$ are differentiable at all $\rho\in[a,b)\sm I$ and for
each such $\rho$, we have
$\frac{d\V(\rho)}{d\rho}=c(\rho)$ and 
$\frac{d\ov{\V}(\rho)}{d\rho}=-c(\rho)$. 
The proof now follows from exactly the same arguments as in the proof of
Lemma~\ref{reggrow}. 
\end{proof}

\begin{corollary} \label{edncor2}
Let $S\sse V$, $z\in V$, and $i$ be some commodity. 
Let $\al\in(0,1)$ and $q=\ceil{\frac{k-1}{\al}}$.
We can efficiently find $\rho_1\in[0,1)$ such that
\begin{eqnarray*}
c\bigl(\del^{S,x}_i(z,\rho_1)\bigr) & \leq &
\frac{2}{1-\al}\cdot\V^{S,x}_i(z,\rho_1)\ln\Bigl(\frac{e\V^{x}(S)}{\V^{S,x}_i(z,\rho_1)}\Bigr)%
\ln\ln\bigl(e(r+1)\bigr) \\ 
c\bigl(\del^{S,x}_i(z,\rho_1)\bigr) & \leq &
\frac{2}{1-\al}\cdot\ov{\V^{S,x}_i}(z,\rho_1)\ln\Bigl(\frac{e\V^{x}(S)}{\ov{\V^{S,x}_i}(z,\rho_1)}\Bigr)%
\ln\ln\bigl(e(r+1)\bigr) \\ 
\bigl|\Gm^{S,y}_i(z,\rho_1)\bigr| & < & q. 
\end{eqnarray*}
\end{corollary}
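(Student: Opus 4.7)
The plan is to follow the template of the proof of Corollary~\ref{cor1}, using Lemma~\ref{ednreggrow} in place of Lemma~\ref{reggrow} and introducing a third ``bad event'' that controls the size of the node-boundary $\Gm^{S,y}_i(z,\rho)$ via the LP capacity constraint. Sample $\rho$ uniformly from $[0,1)$ and define $\Om_1,\Om_2\sse[0,1)$ as the events where the first and second inequalities of the corollary fail, respectively, and set $\Om:=\{\rho\in[0,1):|\Gm^{S,y}_i(z,\rho)|\geq q\}$. The goal is to show $\Pr[\Om^c\cap\Om_1^c\cap\Om_2^c]>0$, then extract a concrete $\rho_1$ in this intersection efficiently.

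For $\Om_1$ and $\Om_2$, I first record the uniform comparisons $\V^{S,x}_i(z,1)\leq\V^x(S)$ and $\ov{\V^{S,x}_i}(z,0)\leq\V^x(S)$, together with the lower bounds $\V^{S,x}_i(z,0),\ov{\V^{S,x}_i}(z,1)\geq\beta=\OPT/r$ and the upper bound $\V^x(S)\leq(1+1/r)\OPT$; these yield ratio bounds $\V^{S,x}_i(z,1)/\V^{S,x}_i(z,0)\leq r+1$ and $\ov{\V^{S,x}_i}(z,0)/\ov{\V^{S,x}_i}(z,1)\leq r+1$. Plugging these into the two inequalities of Lemma~\ref{ednreggrow} taken with $[a,b)=[0,1)$ and applying Markov's inequality then gives $\Pr[\Om_1],\Pr[\Om_2]<(1-\al)/2$, exactly as in Corollary~\ref{cor1}.

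The key new ingredient is the bound on $\Pr[\Om]$. By the definition of $\Gm^{S,y}_i(z,\rho)$, a node $v\in S$ lies in $\Gm^{S,y}_i(z,\rho)$ precisely when $\rho\in[\ell^i(z;v)-y^i_v,\ell^i(z;v))$, which is an interval of length $y^i_v$; hence $\Pr[v\in\Gm^{S,y}_i(z,\rho)]\leq y^i_v$ and, summing over $v$, $\E[\rho]{|\Gm^{S,y}_i(z,\rho)|}\leq\sum_v y^i_v\leq k-1$ by the LP constraint. Since $q=\ceil{(k-1)/\al}\geq(k-1)/\al$, Markov's inequality yields $\Pr[\Om]\leq(k-1)/q\leq\al$, so a union bound combined with the previous paragraph makes $\Pr[\Om^c\cap\Om_1^c\cap\Om_2^c]>0$.

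To convert existence into an efficient procedure, the last step mirrors the ``smooth subinterval'' argument from Corollary~\ref{cor1}: the combinatorial objects $B^S_i(z,\rho)$, $\Gm^{S,y}_i(z,\rho)$, and $\del^{S,x}_i(z,\rho)$ change only at the $O(n)$ breakpoints in $\{\ell^i(z;v),\,\ell^i(z;v)-y^i_v:v\in V\}$, so $[0,1)$ decomposes into $O(n)$ subintervals on which the LHSs of the three target inequalities are piecewise-invariant and the RHSs are continuous and monotone in $\rho$; enumerating these subintervals and picking one that lies in the intersection produces $\rho_1$. I expect no significant obstacle beyond careful bookkeeping; the real content is isolating the one-line argument that $\sum_v y^i_v\leq k-1$ bounds $\E[\rho]{|\Gm^{S,y}_i|}$, after which everything else transcribes from the edge-only case.
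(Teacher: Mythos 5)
Your proposal is correct and follows essentially the same approach as the paper's proof, which likewise observes that $\E[\rho]{\bigl|\Gm^{S,y}_i(z,\rho)\bigr|}\leq\sum_v y^i_v\leq k-1$ and then defers to the three-bad-events argument of Corollary~\ref{cor1} with $[a,b)=[0,1)$. The only cosmetic difference is that you close the probabilistic argument with a direct union bound $\Pr[\Om\cup\Om_1\cup\Om_2]<\al+(1-\al)/2+(1-\al)/2=1$, whereas the paper (in Corollary~\ref{cor1}) conditions on $\Om^c$; both yield $\Pr[\Om^c\cap\Om_1^c\cap\Om_2^c]>0$.
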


\begin{proofnobox}
If we choose $\rho$ uniformly at random from $[0,1)$ then
$\E[\rho]{\bigl|\Gm^{S,y}_i(z,\rho)\bigr|}\leq\sum_iy^i_v\leq k-1$.
Taking $[a,b)=[0,1)$, the arguments in Corollary~\ref{cor1} readily generalize to show
that we can efficiently find $\rho_1\in[0,1)$ such that 
${c\bigl(\del^{S,x}_i(z,\rho_1)\bigr)}/
{\V^{S,x}_i(z,\rho_1)\ln\bigl(\frac{e\V^{S,x}_i(z,1)}{\V^{S,x}_i(z,\rho_1)}\bigr)}$, and 
${c\bigl(\del^{S,x}_i(z,\rho_1)\bigr)}/
{\ov{\V^{S,x}_i}(z,\rho_1)\ln\bigl(\frac{e\ov{\V^{S,x}_i}(z,0)}{\ov{\V^{S,x}_i}(z,\rho_1)}\bigr)}$
are at most $\frac{2}{(1-\al)}$ times the right-hand-sides of \eqref{ednbnd1} and
\eqref{ednbnd2} respectively, and $\bigl|\Gm^{S,y}_i(z,\rho)\bigr|<\frac{k-1}{\al}$. 
The lemma now follows by noting that 
\vspace{-1ex}
\begin{equation*}
\V^{S,x}_i(z,1)\leq\V^x(S), \qquad \ov{\V^{S,x}_i}(z,0)\leq\V^x(S), \qquad
\frac{\V^{S,x}_i(z,1)}{\V^{S,x}_i(z,0)}\leq r+1, \qquad
\frac{\ov{\V^{S,x}_i}(z,0)}{\ov{\V^{S,x}_i}(z,1)}\leq r+1. 
\tag*{\qedsymbol}
\end{equation*}
\end{proofnobox}

\vspace{-1ex}
\paragraph{Algorithm and analysis.} 
The algorithm and analysis dovetail the one in Section~\ref{kmcalg} for $k=2$.

\vspace{1.5ex}

{\small \hrule \vspace{5pt}
\noindent {\bf Algorithm} $\edtwonmcalg(U, \T=\{(s_1,t_1),\ldots,(s_{N},t_{N})\}$) 

\noindent {\bf Input:} A subset $U\sse V$, and a collection $\T=\{(s_i,t_i)\}_{i=1}^{N}$ of
source-sink pairs, where $s_i,t_i\in U$ for all $i=1,\ldots, N$.

\noindent {\bf Output:} A set $Z\sse E(U)$ such that $s_i$ and $t_i$ are at most
1-node-connected in $(U,E(U)\sm Z)$ for all $i=1,\ldots,N$.

\begin{list}{B\arabic{enumi}.}{\usecounter{enumi} \topsep=0.5ex \itemsep=0.25ex \addtolength{\leftmargin}{-1.5ex}}
\item Set $S=U$, $Z=\es$, $\Sc=\es$, and
$\T'=\bigl\{(s_i,t_i)\in\T: \text{$s_i$ and $t_i$ are at least $2$-node-connected in $(S,E(S))$}\bigr\}$. 

\item If $\T'=\es$, return $Z$.

\item While $\T'\neq\es$, we do the following.
\begin{list}{A\arabic{enumi}.\arabic{enumii}}{\topsep=0.25ex \itemsep=0ex \usecounter{enumii}}
\item Pick some $(s_i,t_i)\in\T'$.
\item Apply Corollary~\ref{edncor2} with $z=s_i$, $\al=0.5$ and the set $S$ (and $k=2$) to 
find a radius $0\leq\rho_1<1$. 
\item If $B^S_i(s_i,\rho_1)\cup\Gm^{S,y}_i(s_i,\rho_1)$ contains at most
$N/2$ pairs from $\T$ then set $A=B^S_i(s_i,\rho_1)$, else set $A=\ov{B^S_i}(s_i,\rho_1)$. 
\item Set $\Sc\assign\Sc\cup\{A\cup\Gm^{S,y}_i(s_i,\rho_1)\}$. 
Add the edges in $\del^{S,x}_i(s_i,\rho_1)$ to $Z$. 
\item  Set $S\assign S\sm A$. 
Update $\T'$ to be the $s_i$-$t_i$ pairs from $\T$ that are at least $2$-node-connected in
$(S,E(S))$.
\end{list}

\item For every set $A\in\Sc$, set 
$Z\assign Z\cup\edtwonmcalg\bigl(A,\{(s_i,t_i)\in\T: s_i,t_i\in A\}\bigr)$.

\item Return $Z$.
\end{list}

\noindent The initial call to $\edtwonmcalg$, which computes the solution we return, is
$\edtwonmcalg\bigl(V,\{(s_1,t_1),\ldots,(s_r,t_r)\}\bigr)$. 
\hrule
}

\vspace{1.5ex}
Let $Z:=\edtwonmcalg\bigl(V,\{(s_1,t_1),\ldots,(s_r,t_r)\}\bigr)$.
Define the depth of a subtree of the recursion tree corresponding to the execution of
\edtwonmcalg to be the maximum number of edges on a root to leaf path of the subtree.
The following claim will be useful to prove feasibility of $Z$.

\begin{claim} \label{nodecon}
Let $S, T\sse V$ with $|S\cap T|\leq 1$. 
Let $E_S\sse E(S)$ and $E_T\sse E(T)$. Let $u,v\in S$ be such that $u$ and $v$ are at most
1-node-connected in $(S,E_S)$. Then, $u$ and $v$ are at most 1-node-connected in 
$(S\cup T,E_S\cup E_T)$.
\end{claim}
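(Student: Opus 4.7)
The plan is to reduce the claim to a purely structural observation: every simple $u$-$v$ path in $(S \cup T, E_S \cup E_T)$ whose endpoints lie in $S$ is already a path in $(S, E_S)$. Once this is established, the families of internally node-disjoint $u$-$v$ paths in the two graphs coincide, so the $u$-$v$ node-connectivity cannot strictly increase when passing from $(S, E_S)$ to $(S \cup T, E_S \cup E_T)$, which immediately yields the claim.

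To prove the structural statement, I would take an arbitrary simple $u$-$v$ path $P = u_0, u_1, \ldots, u_\ell$ with $u_0 = u$ and $u_\ell = v$, and argue by contradiction, assuming that some $u_j$ lies outside $S$. Let $j$ be the smallest such index; then $u_{j-1} \in S$, and since the edge $u_{j-1}u_j$ has an endpoint outside $S$, it cannot belong to $E_S \sse E(S)$, so it must lie in $E_T$. In particular, both its endpoints lie in $T$, forcing $u_{j-1} \in S \cap T$. Symmetrically, letting $k$ be the largest index with $u_k \notin S$ (so that $u_{k+1} \in S$), the edge $u_k u_{k+1}$ must lie in $E_T$, and hence $u_{k+1} \in S \cap T$.

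Now I invoke the hypothesis $|S \cap T| \leq 1$: the two vertices $u_{j-1}$ and $u_{k+1}$ of $S \cap T$ must coincide. But $j - 1 < j \leq k < k + 1$, so $u_{j-1}$ and $u_{k+1}$ are distinct positions along the simple path $P$, contradicting simplicity. This establishes the structural observation, and hence the claim.

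I do not anticipate any serious obstacle here; the only subtle point is a boundary check at the two ends of the path, namely that $j \geq 1$ and $k \leq \ell - 1$, both of which are forced by $u, v \in S$. The argument is essentially a boundary-crossing count at the interface $S \cap T$, exploiting that a simple path can cross a single-vertex interface at most once in each direction.
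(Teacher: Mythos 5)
Your proof is correct and takes essentially the same approach as the paper's: both rest on the observation that any simple path between two vertices of $S$ that leaves $S$ must cross the interface through the unique vertex of $S\cap T$ at least twice (once to exit, once to return), contradicting simplicity. Your version merely makes this explicit via the first-exit/last-entry index argument, whereas the paper states it more tersely by noting that all edges of $\delta_{G'}(S)$ are incident to a single node.
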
 

\begin{proof}
If $S\cap T=\es$, this clearly holds. So assume otherwise.
Suppose $P_1$, $P_2$ are two simple node-disjoint $u$-$v$ paths in 
$G'=(S\cup T,E_S\cup E_T)$. At least one of $P_1$ and $P_2$ does not lie completely in
$(S,E_S)$; suppose $P_2$ is this path. But since all edges of $\dt_{G'}(S)$ are incident 
to a single node, and $P_2$ both exits and leaves $S$, $P_2$ contains a repeated node,
which is a contradiction.
\end{proof}

\begin{lemma} \label{edn2feas}
The solution $Z$ returned is feasible.
\end{lemma}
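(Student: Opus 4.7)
The plan is to mimic the proof of Lemma~\ref{2feas}, adapting its ``deepest-node'' strategy to the node-connectivity setting. Assume for contradiction that some pair $(s_j, t_j)$ has two node-disjoint paths $P_1, P_2$ in $(V, E \sm Z)$, and let $(Y, \T_Y)$ be the node furthest from the root of the recursion tree with $V(P_1) \cup V(P_2) \sse Y$; the root satisfies this, so $(Y, \T_Y)$ exists. Since $s_j, t_j \in Y$, a simple inheritance argument down the recursion tree gives $(s_j, t_j) \in \T_Y$. Let $A_1, \ldots, A_h$ with corresponding node-boundaries $\Gm_1, \ldots, \Gm_h$ be the sets carved out by the while loop at $(Y, \T_Y)$; set $X_p = A_p \cup \Gm_p$ and $S^* = Y \sm \bigcup_p A_p$. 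Two easy cases dispatch themselves: if $V(P_1) \cup V(P_2) \sse X_p$ for some $p \geq 1$, the child $(X_p, \T_p)$ violates the maximality of $Y$; and if $V(P_1) \cup V(P_2) \sse S^*$, then $s_j, t_j \in S^*$, yet the loop terminated with $\T' = \es$, contradicting the existence of two node-disjoint $s_j$-$t_j$ paths in $(S^*, E(S^*))$.

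The remaining case is the crux. The first step is to establish the key structural observation: at any iteration with current $S$, center $s_i$, chosen region $A$ and node-boundary $\Gm = \Gm^{S, y}_i(s_i, \rho_1)$, every edge of $E(S)$ between $A$ and $S \sm (A \cup \Gm)$ is an edge of $\del^{S, x}_i(s_i, \rho_1)$, and all such edges are added to $Z$ in step B3.4. Combined with $|\Gm| \leq 1$ (from Corollary~\ref{edncor2} with $k = 2$, $\al = 1/2$, giving $q = 2$), this means that in $(S, E(S) \sm Z)$ the unique node $g \in \Gm$ is the only conduit between $A$ and $S \sm (A \cup \Gm)$ (and none exists if $\Gm = \es$). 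Now pick the smallest $p_0$ with $(V(P_1) \cup V(P_2)) \cap A_{p_0} \neq \es$; by minimality $V(P_1) \cup V(P_2) \sse S_{p_0}$, so $P_1, P_2 \sse E(S_{p_0}) \sm Z$, and by the earlier cases $V(P_1) \cup V(P_2)$ also meets $S_{p_0} \sm X_{p_0}$. A case analysis on the locations of $s_j, t_j$ in $A_{p_0}$, $\Gm_{p_0}$, or $S_{p_0} \sm X_{p_0}$ then yields a contradiction in each sub-case: if $s_j$ and $t_j$ lie on opposite sides of $\Gm_{p_0}$, the node-disjoint $P_1, P_2$ must both pass through the single bottleneck $g$; if both lie on the same side, any path visiting the other side must traverse $g$ twice, violating simplicity, yet some vertex of $V(P_1) \cup V(P_2)$ lies on the other side; and if $s_j$ or $t_j$ equals $g$, simplicity forces every such path to stay on one side of $\Gm_{p_0}$, yielding $V(P_1) \cup V(P_2) \sse X_{p_0}$ or $V(P_1) \cup V(P_2) \cap A_{p_0} = \es$, contradicting either an earlier case or the choice of $p_0$.

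The main obstacle, absent in the edge-version proof of Lemma~\ref{2feas}, is that the node-boundary $\Gm_p$ is \emph{retained} in $S$ rather than removed together with $A_p$, so adjacent iterations of the while loop and the recursive call on $X_p$ share the nodes of $\Gm_p$. Claim~\ref{nodecon} handles such sharing between two pieces, but in principle the sharing can chain across many iterations, and at no single step does the algorithm ``completely disconnect'' $A_p$ from the rest. The saving grace is that $|\Gm_p| \leq 1$, so the node-boundary at each iteration provides a unique node-cut between $A_p$ and the complement within $S_p$ in the residual graph, letting an edge-style crossing argument go through once paths are routed through this single bottleneck.
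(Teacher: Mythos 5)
Your proof is correct, and takes a genuinely different (though kindred) route from the paper's. The paper reasons abstractly with 2-node-connectivity: $(Y,\T_Y)$ is chosen as the deepest recursion node at which the pair remains 2-node-connected, $p$ is the \emph{largest} while-loop index at which connectivity persists in the residual subgraph on $S_p$, and Claim~\ref{nodecon} (together with $|X_p\cap S_{p+1}|\le 1$) supplies the contradiction from the failure at $p+1$. You instead carry two explicit internally node-disjoint witness paths $P_1,P_2$, mirroring the edge-version proof of Lemma~\ref{2feas}: $(Y,\T_Y)$ is defined by containment of $V(P_1)\cup V(P_2)$, $p_0$ is the \emph{smallest} iteration at which the paths first enter $A_{p_0}$, and the contradiction comes from a hands-on bottleneck argument. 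Both proofs rest on the same two facts---all edges of $E(S_p)$ between $A_p$ and $S_p\sm X_p$ lie in the edge-boundary $\del^{S_p,x}$ and are placed in $Z$, and the node-boundary $\Gm_p$ has at most one node---so the approaches are of equal strength. The paper's version is slightly more modular because Claim~\ref{nodecon} packages the gluing step, while yours is more self-contained and, by working with $E(S_{p_0})\sm Z$ explicitly throughout, sidesteps a small imprecision in the paper's write-up (which asserts $E(S_p)=E(X_p)\cup E(S_{p+1})$ without subtracting $Z$; that identity only holds once the crossing edges in $\del^{S_p,x}$ are removed). Your three-way case split at $p_0$---terminals on opposite sides of $\Gm_{p_0}$, on the same side, or one equal to the bottleneck node $g$---is exhaustive, and each branch correctly closes the argument.
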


\begin{proof}
Suppose for a contradiction that there is some $s_i$-$t_i$ pair that is (at least)
2-node-connected in $(V,E\sm Z)$. 
Consider the recursion tree of \edtwonmcalg, and let $(Y,\T_Y)$ be the node furthest from the
root such that $s_i$ and $t_i$ are at least 2-node-connected in the subgraph $(Y,E(Y))$
induced by $Y$. 
Suppose that the loop in step B3 of $\edtwonmcalg(Y,\T_Y)$ runs for $h$ iterations. 
Note that $h\geq 1$ since $s_i$ and $t_i$ are at least 2-node-connected in $(Y,E(Y))$.
Let $X_p$ be the set added to $\Sc$ in step B3.4 in the $p$-th iteration of the loop. 
Let $X_{h+1}\sse Y$ be the set $S$ at the termination of the loop.
Let $S_p=\bigcup_{q=p}^{h+1}X_q$ (so $S_1=Y$).
Notice that $|X_p\cap S_{p+1}|\leq 1$, since
$X_p\cap S_{p+1}\sse\Gm^{S_p,y}_p(s_p,\rho_p)$, where $s_p$-$t_p$ is the
source-sink pair and $\rho_p$ is the radius chosen in the $p$-th iteration, and
$|\Gm^{S_p,y}_p(s_p,\rho_p)|<2$ by Lemma~\ref{edncor2}. 

Let $p$ be the highest index such that $s_i$ and $t_i$ are at least 2-node-connected in
$(S_p,E(S_p))$. Note that $p\leq h$, otherwise the loop in step B3 would not
have terminated with $S=X_{h+1}$. If $s_i, t_i\in S_{p+1}$, they are at most
1-node-connected in $(S_{p+1},E(S_{p+1}))$. Since $|X_p\cap S_{p+1}|\leq 1$, we have 
$E(S_p)=E(X_p)\cup E(S_{p+1})$, and by Claim~\ref{nodecon} it follows that $s_i$ and $t_i$ 
are at most 1-node-connected in $(S_p,E(S_p))$, which is a contradiction.
If $s_i, t_i\in X_p$, they are at most 1-node-connected in $(X_p,E(X_p))$ due to the
definition of $(Y,\T_Y)$, and so we arrive at the same contradiction.
So it must be that $\bigl|\{s_i,t_i\}\cap (X_p\sm S_{p+1})\bigr|=1$. But then all
$s_i$-$t_i$ paths in $(S_p,E(S_p))$ contain the singleton node in $X_p\cap S_{p+1}$.
So $s_i$ and $t_i$ are at most 1-node-connected in $(S_p,E(S_p))$, and we have the same
contradiction. 
\end{proof}

\begin{lemma} \label{ednindnlem}
Let $d$ be the depth of a subtree of the recursion tree rooted at 
$\bigl(U\sse V,\T\sse\{(s_1,t_1),\ldots,(s_r,t_r)\}\bigr)$. 
Let $Z_U=\edtwonmcalg(U,\T)$. 
Then $c(Z_U)\leq 
4\bigl(\beta|\T|+\V^x(U)\bigr)\ln\bigl(\frac{e^dr\V^x(U)}{\OPT}\bigr)\ln\ln\bigl(e(r+1)\bigr)$.
\end{lemma}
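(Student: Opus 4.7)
The plan is to mirror the inductive proof of Lemma~\ref{indnlem}, substituting Corollary~\ref{edncor2} for Lemma~\ref{cor2} throughout. The base case $d=0$ is immediate: no recursive call is made, either because $\T'=\es$ at step B1 or because the while loop of step B3 never executes, so $Z_U=\es$ and the stated bound holds trivially.

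For the inductive step, I suppose that step B3 runs for $h\geq 1$ iterations and produces recursive calls $\edtwonmcalg(\hat A_p,\T_p)$ with outputs $Z_p$. For each $p$, let $S_p$ denote the current set $S$ at the start of iteration $p$, let $(s_p,t_p)$, $\rho_p$, and $A_p\in\{B^{S_p}_p(s_p,\rho_p),\ov{B^{S_p}_p}(s_p,\rho_p)\}$ be the pair, radius, and region chosen in steps B3.1--B3.3, and write $\Gm_p=\Gm^{S_p,y}_p(s_p,\rho_p)$, $\hat A_p=A_p\cup\Gm_p$, $E_p=\del^{S_p,x}_p(s_p,\rho_p)$; set $\vol_p$ equal to $\V^{S_p,x}_p(s_p,\rho_p)$ or $\ov{\V^{S_p,x}_p}(s_p,\rho_p)$ according to the choice of $A_p$. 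Applying Corollary~\ref{edncor2} with $\al=1/2$ and $k=2$ yields both $c(E_p)\leq 4\vol_p\ln\bigl(e\V^x(S_p)/\vol_p\bigr)\ln\ln\bigl(e(r+1)\bigr)$ and $|\Gm_p|\leq 1$. The volume chain $\V^x(\hat A_p)\leq\vol_p\leq\V^x(S_p)\leq\V^x(U)$ then allows the induction to carry through: the first inequality is the only one requiring comment, and it uses $|\Gm_p|\leq 1$ to conclude that $E(\hat A_p)$ has no internal $\Gm_p$-edges and consists solely of edges inside $A_p$ and edges from $A_p$ to the singleton $\Gm_p$, all of which are already counted in full by $\vol_p$. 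Invoking the induction hypothesis on $(\hat A_p,\T_p)$ at depth $d-1$, bounding $\V^x(\hat A_p)\leq\vol_p$ in the leading coefficient, and using the identity $\ln\bigl(e\V^x(U)/\V^x(\hat A_p)\bigr)+\ln\bigl(e^{d-1}r\V^x(\hat A_p)/\OPT\bigr)=\ln\bigl(e^d r\V^x(U)/\OPT\bigr)$ exactly as in \eqref{indstep} yields the per-iteration bound
$c(E_p)+c(Z_p)\leq 4\bigl(\beta|\T_p|+\vol_p\bigr)\ln\bigl(e^d r\V^x(U)/\OPT\bigr)\ln\ln\bigl(e(r+1)\bigr)$.

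Summing over $p$ requires $\sum_p\vol_p\leq\V^x(U)+\beta(h-1)$ and $\sum_p|\T_p|\leq|\T|-h$. The volume bound carries over verbatim from Lemma~\ref{indnlem}, since each edge's $x$-contribution can still be charged to the unique iteration in which at least one endpoint enters $A_p$ and is then removed from $S$. The pair-count bound is the main obstacle, since distinct regions $\hat A_p$ and $\hat A_q$ can now overlap through a $\Gm$-node; I resolve this by observing that for $p<q$ one has $\hat A_p\cap\hat A_q=\Gm_p\cap\hat A_q$ (using $A_q,\Gm_q\sse S_q\sse S_{p+1}=S_p\sm A_p$), so $|\hat A_p\cap\hat A_q|\leq|\Gm_p|\leq 1$, and no pair can lie in two distinct $\T_p$'s, giving $\sum_p|\T_p|\leq|\T|$. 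Sharpening to $|\T|-h$ then follows from the same ``one pair lost per iteration'' argument used in Lemma~\ref{indnlem}: specifically, when $A_p=\ov{B^{S_p}_p}$ the picked pair $(s_p,t_p)$ satisfies $s_p\in B^{S_p}_p$ and hence $s_p\notin\hat A_p$ and $s_p\notin S_q$ for any $q>p$, so $(s_p,t_p)$ belongs to no $\T_q$; and when $A_p=B^{S_p}_p$ a case analysis on whether $t_p$ falls in $\ov{B^{S_p}_p}$ together with the $N/2$-balancing rule of step B3.3 supplies a pair that is permanently absent from every future $\T_q$. Combining these ingredients gives $c(Z_U)\leq 4\bigl(\beta|\T|+\V^x(U)\bigr)\ln\bigl(e^d r\V^x(U)/\OPT\bigr)\ln\ln\bigl(e(r+1)\bigr)$, completing the induction.
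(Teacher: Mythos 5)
Your proof follows the same structure as the paper's: the paper proves this lemma by pointing out that the key volume chain $\V^x(\hat A_p)\leq\vol_p\leq\V^x(S_p)\leq\V^x(U)$ and the volume-sum bound $\sum_p\vol_p\leq\V^x(U)+\beta(h-1)$ still hold, and then declaring the rest identical to Lemma~\ref{indnlem}. You rightly observe that the non-disjointness of the $\hat A_p$'s (they can share a $\Gm$-node) is the main new wrinkle, and your $|\Gm_p|\leq 1$ argument for both $\V^x(\hat A_p)\leq\vol_p$ and the disjointness of the $\T_p$'s is correct and a useful elaboration of what the paper leaves implicit.

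However, there is a genuine error in your sharpening of $\sum_p|\T_p|\leq|\T|$ to $\sum_p|\T_p|\leq|\T|-h$. In the case $A_p=\ov{B^{S_p}_p}(s_p,\rho_p)$, step B3.5 sets $S_{p+1}=S_p\sm\ov{B^{S_p}_p}(s_p,\rho_p)$, which \emph{retains} $B^{S_p}_p(s_p,\rho_p)\cup\Gm_p$. Since $s_p\in B^{S_p}_p(s_p,\rho_p)$ (as $\ell^p(s_p;s_p)=0$), $s_p$ remains in $S_{p+1}$, so your claim that ``$s_p\notin S_q$ for any $q>p$'' is false. The correct observation is the opposite one: $t_p$ satisfies $\ell^p(s_p;t_p)\geq 1>\rho_p$ and $y^p_{t_p}=0$ (by the LP constraints), so $t_p\notin B^{S_p}_p\cup\Gm_p$, hence $t_p\in\ov{B^{S_p}_p}$ and is therefore removed from $S$. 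Thus it is $t_p$'s disappearance that kills $(s_p,t_p)$ in future iterations, while $s_p\notin\hat A_p$ is what excludes it from $\T_p$. Similarly, in the case $A_p=B^{S_p}_p(s_p,\rho_p)$ the reasoning is direct (no case analysis or appeal to the $N/2$-balancing rule is needed): $s_p\in A_p$ is removed from $S$, and $t_p\notin\hat A_p$ by the same $\ell^p(s_p;t_p)\geq 1$, $y^p_{t_p}=0$ argument. In both cases $(s_p,t_p)$ lies in no $\T_q$, the picked pairs are distinct across iterations, and $\sum_p|\T_p|\leq|\T|-h$ follows. The conclusion of your proof stands; only this particular step's justification was reversed.
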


\begin{proof}
When $d=0$, we have $Z_U=\es$, so the statement holds. 
Suppose in step B3 of $\edtwonmcalg(U,\T)$, we add sets $A_1,\ldots,A_h$ to $\Sc$ (where 
$h\geq 1$), in that order. For $p=1,\ldots,h$, let $S_p$ be the current set $S$ when $A_p$
was added to $\Sc$ in step B3.4, and let $E_p$ be the edge-set added to $Z$ in this step. 
Let $Z_1,\ldots,Z_h$ be the edge-sets returned by the recursive calls to
$\edtwonmcalg(A_1,\T_1),\ldots,\edtwonmcalg(A_h,\T_h)$ in step B4.

Let $\vol_p=\V^{S_p,x}_i(s_i,\rho_1)$ if $A_p=B^{S_p}_i(s_i,\rho_1)\cup\Gm^{S_p,y}_i(s_i,\rho_1)$  
and $\vol_p=\ov{\V^{S_p,x}_i}(s_i,\rho_1)$ if $A_p=\ov{B^{S_p}_i}(s_i,\rho_1)\cup\Gm^{S_p,y}_i(s_i,\rho_1)$.
The key thing to note is that we still have $\V^x(A_p)\leq\vol_p\leq\V^x(S_p)\leq\V^x(U)$
and $\sum_{p=1}^h\vol_p\leq\V^x(U)+\beta(h-1)$. The latter follows since an easy induction 
argument shows that $\sum_{p=q}^h\vol_p\leq\V^x(S_q)+\beta(h-q)$ for all $q=1,\ldots,h$.
Given this, the rest of the proof is identical to that of Lemma~\ref{indnlem}.
\end{proof}

Each recursive call to \edtwonmcalg, reduces the number of source-sink pairs involved
by a factor of at least 2, so the depth $d$ of the entire recursion tree is $O(\log_2 r)$. 
So we have shown the following. 

\begin{theorem} \label{edn2mcthm}
Algorithm \edtwonmcalg returns a feasible solution of cost at most 
$O(\ln r\ln\ln r)\cdot\OPT$.  
\end{theorem}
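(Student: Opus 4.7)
The plan is to simply combine the three ingredients that are already in place: the feasibility lemma (Lemma~\ref{edn2feas}), the inductive cost bound (Lemma~\ref{ednindnlem}), and a bound on the depth of the recursion tree. Feasibility of $Z$ is immediate from Lemma~\ref{edn2feas}, so nothing additional is needed there.

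To bound the cost, I would apply Lemma~\ref{ednindnlem} to the root invocation $\edtwonmcalg\bigl(V,\{(s_1,t_1),\ldots,(s_r,t_r)\}\bigr)$ with $U=V$ and $|\T|=r$. The key observation is that step B3.3 of the algorithm explicitly chooses $A$ so that the set $A\cup\Gm^{S,y}_i(s_i,\rho_1)$ passed down to the recursive call contains at most $\lfloor N/2\rfloor$ source-sink pairs from $\T$; therefore the number of commodities strictly more than halves along any root-to-leaf path, giving recursion depth $d=O(\log_2 r)$.

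Next I would substitute the parameters into the bound from Lemma~\ref{ednindnlem}. Recall $\beta=\OPT/r$, so $\beta|\T|=\beta r=\OPT$. Moreover, $\V^x(V)=\beta+\sum_e c_ex_e\leq \OPT+\beta\leq 2\OPT$ (since $\sum_e c_ex_e\leq\OPT$ by LP-optimality). Consequently,
\[
\frac{e^d r\V^x(V)}{\OPT}\leq 2e^d(r+1)=r^{O(1)},
\]
so $\ln\bigl(\tfrac{e^dr\V^x(V)}{\OPT}\bigr)=O(\ln r)$. Putting this into Lemma~\ref{ednindnlem} yields
\[
c(Z)\leq 4\bigl(\OPT+2\OPT\bigr)\cdot O(\ln r)\cdot\ln\ln\bigl(e(r+1)\bigr)=O(\ln r\ln\ln r)\cdot\OPT,
\]
which is the stated guarantee.

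Since both the feasibility and inductive cost lemmas have already been proven, there is no real obstacle; the only thing to be careful about is the depth argument, where one must verify that step B3.3 indeed enforces a halving of the commodity count regardless of whether the inner branch $B^S_i(s_i,\rho_1)$ or the outer branch $\ov{B^S_i}(s_i,\rho_1)$ is selected, and that the ``leftover'' set $X_{h+1}$ at the end of the loop in step B3 contains no at-least-2-node-connected pairs, so no further recursion is triggered from it. Both follow directly from the algorithm's description.
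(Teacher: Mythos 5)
Your argument is correct and follows essentially the same route as the paper: feasibility from Lemma~\ref{edn2feas}, recursion depth $O(\log_2 r)$ from the halving in step B3.3, and substitution of $\beta r=\OPT$ and $\V^x(V)=(1+1/r)\OPT$ into Lemma~\ref{ednindnlem}. The only cosmetic difference is that you use the slightly looser bound $\V^x(V)\le 2\OPT$ where the paper computes $r\V^x(V)/\OPT=r+1$ exactly, but this does not affect the asymptotics.
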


\subsection{\boldmath Node-deletion $k$-route node-multicut (\ndknmc) with unit costs} \label{ndknmc}
The LP-relaxation for \ndknmc is as follows.
\begin{alignat*}{3}
\min & \quad & \sum_v c_vx_v & \tag{P3} \label{ndknmcp} \\
\text{s.t.} && \sum_{v\in V(P)}(x_v &+y^i_v) \geq 1 \qquad && 
\forall i, \forall P\in\Pc_i \\
&& \sum_v y^i_v & \leq k-1, \quad y^i_{s_i}=y^i_{t_i}=0 \qquad && \forall i \\
&& x, y & \geq 0, \qquad x_v=0 \quad && \forall v\in\{s_1,t_1,\ldots,s_r,t_r\}.
\end{alignat*}

\vspace{-1ex}
\paragraph{Region-growing lemma.}
Let $\bigl(x,\{y^i\}\bigr)$ be an optimal solution to \eqref{ndknmcp}, and $\OPT$ be its
value. 
As before, let $S\sse V$ represent the node-set of the current region. 
Set $\beta=\OPT/r$. 
Let $z\in V$ and $\rho\geq 0$. Fix a commodity $i$.

\begin{list}{$\bullet$}{\itemsep=0ex \topsep=0.5ex \addtolength{\leftmargin}{-2ex}} 
\item Define $\ell^i(u;v)=\min_{P:P\text{ is a $u$-$v$ path}}\sum_{w\in V(P):w\neq u}(x_w+y^i_w)$, 
where $V(P)$ is the set of nodes of $P$.
As before, $\ell^i$ defines an asymmetric metric on $V\times V$.

\item Define $B_i(z,\rho):=\{v\in V: \ell^i(z;v)\leq\rho\}$, and
$B^S_i(z,\rho):=B_i(z,\rho)\cap S$.  

\item Define the $x$-boundary of $B^S_i(z,\rho)$ in $S$ to be 
$\Gm^{S,x}_i(z,\rho):=\{v\in S: \rho+y^i_v<\ell^i(z;v)\leq\rho+x_v+y^i_v\}$.
Define the $y$-boundary of $B^S_i(z,\rho)$ in $S$ to be 
$\Gm^{S,y}_i(z,\rho):=\{v\in S: \rho<\ell^i(z;v)\leq\rho+y^i_v\}$.
Note that $\Gm^{S,x}_i(z,\rho)$ and $\Gm^{S,y}_i(z,\rho)$ partition
$\Gm^S_i(z,\rho):=
\{v\in S\sm B^S_i(z,\rho): \exists u\in B^s_i(z,\rho)\text{ s.t. }(u,v)\in E\}$.
Let $\ov{B^S_i}(z,\rho):=S\sm\bigl(B^S_i(z,\rho)\cup\Gm^{S}_i(z,\rho)\bigr)$.

\item Define the following volumes:
\begin{eqnarray*}
\V^{S,x}_i(z,\rho) & := & \beta
+\sum_{u\in B^S_i(z,\rho)\cup\Gm^{S,y}_i(z,\rho)}c_{u}x_{u}
+\sum_{u\in\Gm^{S,x}_i(z,\rho)}c_{u}\bigl(\rho-(\ell^i(z;u)-x_u-y^i_u)\bigr) \\[0.5ex]
\ov{\V^{S,x}_i}(z,\rho) & := & \beta
+\sum_{u\in\ov{B^S_i}(z,\rho)\cup\Gm^{S,y}_i(z,\rho)}c_{u}x_{u}
+\sum_{u\in\Gm^{S,x}_i(z,\rho)}c_u(\ell^i(z;u)-y^i_u-\rho)
\end{eqnarray*}
\end{list}

The following lemma is analogous to Lemma~\ref{ednreggrow} and
follows from the same reasoning.

\begin{lemma} \label{ndnreggrow}
Let $S\sse V$, $z\in V$, $i$ be some commodity, and $0\leq a<b$.
Let $\rho$ be chosen uniformly at random from $[a,b)$. Then,
\begin{eqnarray*}
\E[\rho]{\frac{c\bigl(\Gm^{S,x}_i(z,\rho)\bigr)}
{\V^{S,x}_i(z,\rho)\ln\Bigl(\frac{e\V^{S,x}_i(z,b)}{\V^{S,x}_i(z,\rho)}\Bigr)}}
& \leq & \frac{1}{b-a}\cdot\ln\ln\biggl(\frac{e\V^{S,x}_i(z,b)}{\V^{S,x}_i(z,a)}\biggr) 
\quad\text{and} \\ 
\E[\rho]{\frac{c\bigl(\Gm^{S,x}_i(z,\rho)\bigr)}
{\ov{\V^{S,x}_i}(z,\rho)\ln\Bigl(\frac{e\ov{\V^{S,x}_i}(z,a)}{\ov{\V^{S,x}_i}(z,\rho)}\Bigr)}}
& \leq &
\frac{1}{b-a}\cdot\ln\ln\biggl(\frac{e\ov{\V^{S,x}_i}(z,a)}{\ov{\V^{S,x}_i}(z,b)}\biggr) 
\end{eqnarray*}
\end{lemma}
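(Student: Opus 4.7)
The plan is to mirror the proof of Lemma~\ref{reggrow}, with the $x$-boundary $\Gm^{S,x}_i(z,\rho)$ now playing the role that the edge-boundary $\del^{S,x}_i(z,\rho)$ played in Lemma~\ref{ednreggrow}. The two ingredients driving the argument are continuity of the volume functions $\V^{S,x}_i(z,\rho)$ and $\ov{\V^{S,x}_i}(z,\rho)$ in $\rho$, and identification of their derivatives away from a finite set of breakpoints.

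First, I would verify that $\V^{S,x}_i(z,\cdot)$ is continuous and piecewise differentiable on $[a,b)$. Let
$$I:=\bigl\{\ell^i(z;v)-x_v-y^i_v,\ \ell^i(z;v)-y^i_v,\ \ell^i(z;v)\,:\,v\in V\bigr\},$$
the finite set of radii at which a node transitions between the regions $\ov{B^S_i}$, $\Gm^{S,x}_i$, $\Gm^{S,y}_i$, and $B^S_i$. At each such transition the contribution of $v$ to $\V^{S,x}_i$ matches up continuously: as $\rho$ crosses $\ell^i(z;v)-x_v-y^i_v$ the contribution of $v$ enters as $c_v\bigl(\rho-(\ell^i(z;v)-x_v-y^i_v)\bigr)$ starting at $0$; as $\rho$ crosses $\ell^i(z;v)-y^i_v$ this piecewise-linear contribution reaches $c_v x_v$, which is the new constant contribution from the first sum when $v$ enters $\Gm^{S,y}_i$; and as $\rho$ crosses $\ell^i(z;v)$ the contribution $c_v x_v$ is unchanged when $v$ moves into $B^S_i$. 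An analogous check holds for $\ov{\V^{S,x}_i}$. Between consecutive breakpoints only the $\Gm^{S,x}_i$-sum varies with $\rho$, which yields $\frac{d}{d\rho}\V^{S,x}_i(z,\rho) = c\bigl(\Gm^{S,x}_i(z,\rho)\bigr)$ and $\frac{d}{d\rho}\ov{\V^{S,x}_i}(z,\rho) = -c\bigl(\Gm^{S,x}_i(z,\rho)\bigr)$.

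Second, I would enumerate the points of $I\cap(a,b)$ as $a_0=a<a_1<\cdots<a_{q-1}<a_q=b$ and perform the same change of variables used in Lemma~\ref{reggrow} on each subinterval $[a_{j-1},a_j^-]$, invoking the antiderivative identity $\int\frac{d\V}{\V\ln(e\V(b)/\V)}=-\ln\ln(e\V(b)/\V)+C$. By the continuity just established and the monotonicity of $\V^{S,x}_i(z,\cdot)$, the resulting sum over $j$ telescopes to at most $\ln\ln\bigl(e\V^{S,x}_i(z,b)/\V^{S,x}_i(z,a)\bigr)$, giving the first bound after dividing by $b-a$. The second bound follows from the analogous calculation for $\ov{\V^{S,x}_i}$, which is non-increasing in $\rho$ with derivative $-c\bigl(\Gm^{S,x}_i(z,\rho)\bigr)$.

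The only point requiring care beyond Lemma~\ref{reggrow} is the continuity check at the three kinds of breakpoints: because both $x$ and $y^i$ reside on nodes, each node $v$ passes through three transition radii rather than a single one, and the piecewise-linear contribution of a node in $\Gm^{S,x}_i$ must be matched at its endpoints against the zero contribution (from $\ov{B^S_i}$) and the $c_v x_v$ contribution (from $\Gm^{S,y}_i$) in the adjacent regimes. Once this bookkeeping is in place, the integration step is routine and identical in form to that of Lemma~\ref{reggrow}.
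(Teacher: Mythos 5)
Your plan — identify the breakpoint set $I$, establish the derivatives off $I$, then run the telescoping integral from Lemma~\ref{reggrow} — is exactly the paper's intent (the paper just says this lemma ``follows from the same reasoning'' as Lemma~\ref{ednreggrow}). Your explicit per-node continuity check for $\V^{S,x}_i$ is correct and shows the right kind of care. The gap is in the sentence ``An analogous check holds for $\ov{\V^{S,x}_i}$'': with the paper's stated definition, it does not. The first sum in $\ov{\V^{S,x}_i}$ runs over $u\in\ov{B^S_i}(z,\rho)\cup\Gm^{S,y}_i(z,\rho)$, so a node $v$ with $x_v,y^i_v>0$ contributes $c_vx_v$ once it lands in $\Gm^{S,y}_i$; but its $\Gm^{S,x}_i$-contribution $c_v\bigl(\ell^i(z;v)-y^i_v-\rho\bigr)$ tends to $0$ as $\rho\uparrow\ell^i(z;v)-y^i_v$. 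So $\ov{\V^{S,x}_i}$ jumps \emph{up} by $c_vx_v$ at $\rho=\ell^i(z;v)-y^i_v$ and then \emph{down} by $c_vx_v$ at $\rho=\ell^i(z;v)$; it is neither continuous nor nonincreasing.

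This matters for the second inequality. Your telescoping sum over subintervals has cross-breakpoint terms of the form $\ln\ln\bigl(e\ov\V(a)/\ov\V(a_j^-)\bigr)-\ln\ln\bigl(e\ov\V(a)/\ov\V(a_j)\bigr)$, which are nonpositive only when $\ov\V(a_j^-)\geq\ov\V(a_j)$ (downward or no jump); the upward jumps at $\ell^i(z;v)-y^i_v$ make these terms strictly positive, and neither your write-up nor the paper's terse sketch controls the resulting error. The clean repair is to take the first sum of $\ov{\V^{S,x}_i}$ over $\ov{B^S_i}(z,\rho)$ alone, dropping $\Gm^{S,y}_i$: then each node's contribution decreases continuously from $c_vx_v$ (on $\ov{B^S_i}$) along the linear ramp (on $\Gm^{S,x}_i$) to $0$ (on $\Gm^{S,y}_i\cup B^S_i$), the telescoping is exact, and the downstream bounds $\ov{\V^{S,x}_i}(z,0)\le\V^x(S)$ and $\V^x(A)\le\ov{\V^{S,x}_i}(z,\rho_1)$ used in Corollary~\ref{ndncor3} and Theorem~\ref{ndknmcthm} still hold. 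As written, though, the second bound's proof has a genuine gap that you should either fix by adjusting the definition or close by bounding the jump terms explicitly.
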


\begin{corollary} \label{ndncor3}
Let $S\sse V$. Suppose that $s_i,t_i\in S$ and there are $\gm(k-1)$ node-disjoint
$s_i$-$t_i$ paths internal to $S$, where $\gm>1$. Suppose that $c_v=1$ for all nodes $v$.   
We can efficiently find $\rho_1\in[0,1)$ such that
\begin{eqnarray*}
c\bigl(\Gm^S_i(s_i,\rho_1)\bigr)\leq
\frac{2\gm}{(\sqrt{\gm}-1)^2}\cdot\V^{S,x}_i(s_i,\rho_1)\ln\Bigl(\frac{e\V^x(S)}{\V^{S,x}_i(s_i,\rho_1)}\Bigr)%
\ln\ln\bigl(e(r+1)\bigr) \\
c\bigl(\Gm^S_i(s_i,\rho_1)\bigr)\leq
\frac{2\gm}{(\sqrt{\gm}-1)^2}\cdot\ov{\V^{S,x}_i}(s_i,\rho_1)\ln\Bigl(\frac{e\V^x(S)}{\ov{\V^{S,x}_i}(s_i,\rho_1)}\Bigr)%
\ln\ln\bigl(e(r+1)\bigr)
\end{eqnarray*}
\end{corollary}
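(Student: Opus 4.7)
The plan is to repeat, in the node-deletion setting with node costs, the strategy used to prove Corollary~\ref{cor3}, but built on top of Lemma~\ref{ndnreggrow} in place of Lemma~\ref{reggrow}. The first step is to establish a node-analog of Corollary~\ref{edncor2}: for any $\al\in(0,1)$, one can efficiently find $\rho_1\in[0,1)$ such that $c\bigl(\Gm^{S,x}_i(s_i,\rho_1)\bigr)$ is bounded by the right-hand-sides of Lemma~\ref{ndnreggrow} (applied with $[a,b)=[0,1)$) inflated by a factor of $\frac{2}{1-\al}$, with endpoint volumes replaced by $\V^x(S)$ and the double-logarithm replaced by $\ln\ln\bigl(e(r+1)\bigr)$, while simultaneously $|\Gm^{S,y}_i(s_i,\rho_1)|<(k-1)/\al$. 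The $y$-boundary bound holds because $\E[\rho]{|\Gm^{S,y}_i(s_i,\rho)|}\leq\sum_v y^i_v\leq k-1$ for $\rho$ uniform in $[0,1)$; combining this with Markov applied to the two region-growing expectations, and intersecting the complements of the three bad events (each a union of finitely many smooth subintervals, handled exactly as in Corollary~\ref{cor1}), yields such a $\rho_1$ constructively. The endpoint-volume replacements use $\V^{S,x}_i(s_i,1),\ov{\V^{S,x}_i}(s_i,0)\leq\V^x(S)$ and $\V^{S,x}_i(s_i,1)/\V^{S,x}_i(s_i,0),\ov{\V^{S,x}_i}(s_i,0)/\ov{\V^{S,x}_i}(s_i,1)\leq r+1$.

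The second step uses the unit-cost hypothesis together with the node-connectivity assumption to upgrade the bound on $c(\Gm^{S,x}_i)$ to a bound on $c(\Gm^S_i)$. The key geometric point is that $\Gm^S_i(s_i,\rho_1)$ is an $s_i$-$t_i$ node cut inside the induced subgraph $(S,E(S))$: since $x_{s_i}=x_{t_i}=y^i_{s_i}=y^i_{t_i}=0$ and every $s_i$-$t_i$ path $P$ satisfies $\sum_{v\in V(P)}(x_v+y^i_v)\geq 1$ by feasibility of \eqref{ndknmcp}, we have $\ell^i(s_i;t_i)\geq 1>\rho_1$; thus $t_i\in\ov{B^S_i}(s_i,\rho_1)$ while $s_i\in B^S_i(s_i,\rho_1)$, so every $s_i$-$t_i$ path in $S$ exits $B^S_i(s_i,\rho_1)$ and must meet $\Gm^S_i(s_i,\rho_1)$. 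By Menger's theorem applied to the hypothesized $\gm(k-1)$ internally node-disjoint $s_i$-$t_i$ paths, $|\Gm^S_i(s_i,\rho_1)|\geq\gm(k-1)$. With unit node costs, $c(\Gm^S_i)=|\Gm^S_i|$ and similarly for its parts, so the disjoint partition $\Gm^S_i=\Gm^{S,x}_i\cup\Gm^{S,y}_i$ combined with $|\Gm^{S,y}_i|<(k-1)/\al\leq|\Gm^S_i|/(\al\gm)$ gives $|\Gm^{S,x}_i|>\bigl(1-1/(\al\gm)\bigr)|\Gm^S_i|$, hence $c(\Gm^S_i)\leq\frac{\al\gm}{\al\gm-1}\cdot c(\Gm^{S,x}_i)$ for any $\al>1/\gm$.

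Plugging the first-step bound on $c(\Gm^{S,x}_i)$ into this inequality, the constant in front of the volume-times-$\ln$-times-$\ln\ln$ term becomes $\frac{2\al\gm}{(1-\al)(\al\gm-1)}$. A direct derivative computation shows this is minimized at $\al=1/\sqrt{\gm}$, where $(1-\al)(\al\gm-1)=(\sqrt{\gm}-1)^2/\sqrt{\gm}$ and the constant simplifies to $\frac{2\gm}{(\sqrt{\gm}-1)^2}$, matching the statement. The main subtlety I anticipate is not the calculus (identical to Corollary~\ref{cor3}) but rather the node-boundary bookkeeping: one must verify that the node-mass crossing the ball splits cleanly into $\Gm^{S,x}_i$ and $\Gm^{S,y}_i$. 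The asymmetric metric $\ell^i$ used in Lemma~\ref{ndnreggrow}, together with the defining ranges $\rho<\ell^i(z;v)\leq\rho+y^i_v$ and $\rho+y^i_v<\ell^i(z;v)\leq\rho+x_v+y^i_v$ for the two pieces, ensures this partition is well-defined and lets the $y$-part be bounded separately, in direct analogy to how $y^i$-edges were handled in Lemma~\ref{cor2}.
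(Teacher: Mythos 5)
Your proposal is correct and follows essentially the same route as the paper's own proof: you first obtain the node-analog of Corollary~\ref{edncor2} with the auxiliary $\al\in(0,1)$, use $\rho_1<1$ and the LP constraints to argue that $\Gm^S_i(s_i,\rho_1)$ separates $s_i$ from $t_i$ in $(S,E(S))$ (hence has size at least $\gm(k-1)$), exploit unit node costs and the partition $\Gm^S_i=\Gm^{S,x}_i\cup\Gm^{S,y}_i$ to get $c(\Gm^S_i)\leq\frac{\al\gm}{\al\gm-1}c(\Gm^{S,x}_i)$, and then optimize $\al=\gm^{-1/2}$. This matches the paper step for step, with only cosmetic differences (you spell out the $t_i\in\ov{B^S_i}(s_i,\rho_1)$ argument slightly more explicitly and invoke Menger's theorem where the paper simply says the cut disconnects the pair).
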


\begin{proof} 
Let $\al\in(0,1)$, whose value we will fix later. Take $[a,b)=[0,1)$. We have
\vspace{-1ex}
$$
\V^{S,x}_i(z,1)\leq\V^x(S), \qquad \ov{\V^{S,x}_i}(z,0)\leq\V^x(S), \qquad
\frac{\V^{S,x}_i(z,1)}{\V^{S,x}_i(z,0)}\leq r+1, \qquad
\frac{\ov{\V^{S,x}_i}(z,0)}{\ov{\V^{S,x}_i}(z,1)}\leq r+1. 
$$
\vspace{-1ex}
Given this, the arguments in Corollary~\ref{cor1} readily generalize to show 
that we can efficiently find $\rho_1\in[0,1)$ such that 
\begin{eqnarray}
c\bigl(\Gm^{S,x}_i(s_i,\rho_1)\bigr) & \leq & 
\frac{2}{1-\al}\cdot\V^{S,x}_i(z,\rho_1)\ln\Bigl(\frac{e\V^{x}(S)}{\V^{S,x}_i(z,\rho_1)}\Bigr)%
\ln\ln\bigl(e(r+1)\bigr) \label{ndnxbound1} \\
c\bigl(\Gm^{S,x}_i(s_i,\rho_1)\bigr) & \leq &
\frac{2}{1-\al}\cdot\ov{\V^{S,x}_i}(z,\rho_1)\ln\Bigl(\frac{e\V^{x}(S)}{\ov{\V^{S,x}_i}(z,\rho_1)}\Bigr)%
\ln\ln\bigl(e(r+1)\bigr) \label{ndnxbound2} \\
\bigl|\Gm^{S,y}_i(z,\rho)\bigr| & < & \frac{k-1}{\al}. \label{ndnybound}
\end{eqnarray}

Note that $t_i\notin\Gm^S_i(s_i,\rho_1)$ since $\rho_1<1$. So removing
$\Gm^S_i(s_i,\rho_1)$ disconnects $s_i$ and $t_i$, and hence,
$|\Gm^S_i(s_i,\rho_1)|\geq\gm(k-1)$.  
Therefore, since we have unit node costs and $\Gm^{S,x}_i(s_i,\rho_1)$ and
$\Gm^{S,y}_i(s_i,\rho_1)$ partition $\Gm^S_i(s_i,\rho_1)$, we have
$c\bigl(\Gm^S_i(s_i,\rho_1)\bigr)\leq c\bigl(\Gm^{S,x}_i(s_i,\rho_1)\bigr)\cdot\frac{\gm}{\gm-1/\al}$.
Plugging in the bounds from \eqref{ndnxbound1}, \eqref{ndnxbound2}, we see that the
constant factor multiplying the volume terms on the RHS is minimized by setting
$\al=\gm^{-1/2}$, which yields the constant factor $\frac{2\gm}{(\sqrt{\gm}-1)^2}$.
\end{proof}

\vspace{-1ex}
\paragraph{Algorithm and analysis.}
The algorithm, \ndknmcalg, for \ndknmc is quite similar to \edtwonmcalg.  
The only changes are the following:
\begin{list}{$\bullet$}{\itemsep=0ex \addtolength{\leftmargin}{-2ex}}
\item In steps B1 and B3.5, we set $\T'$ to be the $s_i$-$t_i$ pairs from $\T$ that are 
at least $\gm(k-1)$-node-connected in $(S,E(S))$. 
\item In step B3.2, we apply Corollary~\ref{ndncor3} with the set $S$ to find the radius
$\rho_1\in[0,1)$.
\item In step B3.4, we add $A$ to $\Sc$, and add all nodes of $\Gm^S_i(s_i,\rho_1)$ to $Z$.  
\item Of course, in step B4, we now recursively call \ndknmcalg (with the same arguments).
\end{list}

\begin{theorem} \label{ndknmcthm}
For any $\gm>1$, algorithm \ndknmcalg returns a solution $Z$ such that 
$c(Z)\leq O\bigl(\frac{\gm}{(\sqrt{\gm}-1)^2}\ln r\ln\ln r\bigr)\cdot\OPT$ and every
$s_i$-$t_i$ pair is less than $\gm(k-1)$-node-connected in $(V\sm Z,E(V\sm Z))$.

Thus, taking $\gm=\frac{k}{k-1}$, we obtain a feasible solution of cost at most
$O\bigl((k-1)^2\ln r\ln\ln r\bigr)\cdot\OPT$.
\end{theorem}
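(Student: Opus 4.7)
The plan is to follow the template of Theorem~\ref{kmcthm} and its edge-deletion analogue Theorem~\ref{edn2mcthm}, in two parts: feasibility, then an inductive cost bound.

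For feasibility, I would mimic the proof of Lemma~\ref{edn2feas}. Suppose, for contradiction, that some pair $(s_i,t_i)$ remains at least $\gm(k-1)$-node-connected in $\bigl(V\sm Z, E(V\sm Z)\bigr)$, and let $(Y,\T_Y)$ be the deepest recursion node whose induced subgraph $(Y,E(Y))$ has $s_i$ and $t_i$ at least $\gm(k-1)$-node-connected. At some iteration of the loop in step B3 of the invocation $\ndknmcalg(Y,\T_Y)$, either $(s_i,t_i)$ itself is picked, or it is channeled into a child of the recursion node via the chosen region $A$. If it is picked, then since $x_{t_i}=y^i_{t_i}=0$ (from the LP) and $\ell^i(s_i;t_i)\geq 1>\rho_1$, the sink $t_i$ lies outside $B^S_i(s_i,\rho_1)\cup\Gm^S_i(s_i,\rho_1)$, so deleting the nodes in $\Gm^S_i(s_i,\rho_1)$ separates $s_i$ from $t_i$ in $(S,E(S))$, contradicting $(s_i,t_i)$'s supposed high connectivity in the remainder of $Y$. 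If it is not picked but is split across the boundary, the pair is immediately separated by the deleted $\Gm^S_i$ nodes; and if it is wholly passed into the recursion on some $A$, the induction-on-depth (with a straightforward analogue of Claim~\ref{nodecon} for node-cuts) gives the contradiction.

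For the cost bound, I would prove by induction on the depth $d$ of the subtree of the recursion tree rooted at $(U,\T)$ the inequality
\[
c(Z_U)\ \leq\ \frac{2\gm}{(\sqrt{\gm}-1)^2}\bigl(\beta|\T|+\V^x(U)\bigr)\ln\!\Bigl(\frac{e^d r\V^x(U)}{\OPT}\Bigr)\ln\ln\bigl(e(r+1)\bigr).
\]
The base case $d=0$ gives $Z_U=\es$. For the inductive step, let $A_1,\dots,A_h$ be the regions added to $\Sc$ (in order), with $S_p$ the current set $S$ when $A_p$ is added and $E_p=\Gm^{S_p}_i(s_i,\rho_1)$ the node-boundary placed in $Z$ at that iteration; let $Z_p=\ndknmcalg(A_p,\T_p)$. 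Corollary~\ref{ndncor3} bounds
\(c(E_p)\leq\frac{2\gm}{(\sqrt{\gm}-1)^2}\vol_p\ln\!\bigl(e\V^x(U)/\V^x(A_p)\bigr)\ln\ln\bigl(e(r+1)\bigr)\),
where $\vol_p$ is $\V^{S_p,x}_i(s_i,\rho_1)$ or $\ov{\V^{S_p,x}_i}(s_i,\rho_1)$ depending on the choice of $A_p$. Combining this with the induction hypothesis applied to each $(A_p,\T_p)$ and using $\V^x(A_p)\leq\vol_p\leq\V^x(S_p)\leq\V^x(U)$ in the logarithm yields the claim after summation, exactly as in Lemmas~\ref{indnlem} and~\ref{ednindnlem}.

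The main obstacle is the volume-accounting inequality $\sum_{p=1}^h\vol_p\leq\V^x(U)+\beta(h-1)$, which I would verify by a descending induction establishing $\sum_{p=q}^h\vol_p\leq\V^x(S_q)+\beta(h-q)$. The subtle point here is that $\V^{S_p,x}_i$ now accounts for the $x$-mass of nodes in $B^{S_p}_i\cup\Gm^{S_p,y}_i$ together with partial contributions from $\Gm^{S_p,x}_i$; however, because the $A_p$'s are node-disjoint and each $\Gm^{S_p}_i$ is disjoint from subsequent $S_{p+1}$ (as the full boundary is removed from play), no $x$-mass is double-counted, and the $+\beta(h-1)$ slack absorbs the per-region offsets. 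Combined with $\sum_p|\T_p|\leq|\T|-h$ (each region excises at least one new source-sink pair) and the fact that $d=O(\log_2 r)$ (each recursive call halves the number of pairs, by the choice between $B^{S}_i$ and $\ov{B^S_i}$ in step B3.3), plugging $U=V$ gives $\beta r+\V^x(V)=O(\OPT)$ and $\V^x(V)/\OPT=O(1)$, producing the stated $O\bigl(\tfrac{\gm}{(\sqrt{\gm}-1)^2}\ln r\ln\ln r\bigr)$ cost-approximation. Setting $\gm=k/(k-1)$ gives the unicriterion corollary.
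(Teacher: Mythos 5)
Your proposal is correct and follows essentially the same approach as the paper: feasibility is handled directly (since \ndknmcalg deletes the entire boundary $\Gm^S_i$, the argument is in fact simpler than the Lemma~\ref{edn2feas}-style case analysis you invoke — the immediate-separation reasoning from Theorem~\ref{kmcthm} suffices), and the cost bound is obtained by the same depth induction as Lemmas~\ref{indnlem} and~\ref{ednindnlem} with Corollary~\ref{ndncor3} supplying the per-region charge. Your observation that the volume accounting relies on $\Gm^{S_p}_i$ being excluded from later regions is the right reading of the algorithm (the deleted boundary nodes drop out of subsequent $S$), matching what the paper implicitly assumes when it says the induction "easily" carries over.
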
 

\begin{proof} 
Let $Z$ be the output of
$\ndknmcalg\bigl(V,\{(s_1,t_1),\ldots,(s_r,t_r)\}\bigr)$. It is clear that $Z$ is
feasible. 
Mimicking the proof of Lemma~\ref{indnlem}, and using Corollary~\ref{ndncor3} in place of 
Lemma~\ref{cor2} in the proof, one can easily show that if $d$ is the depth of the
recursion tree rooted $(U,\T)$ and $Z_U=\ndknmcalg(U,\T)$, then 
$$ 
c(Z_U)\leq
\frac{2\gm}{(\sqrt{\gm}-1)^2}\Bigl(\beta|\T|+\V^x(U)\Bigr)%
\ln\Bigl(\frac{e^dr\V^x(U)}{\OPT}\Bigr)\ln\ln\bigl(e(r+1)\bigr).
$$
Since the depth of the recursion tree is $O(\log_2 r)$, we obtain that 
$c(Z)\leq O\bigl(\frac{\gm}{(\sqrt{\gm}-1)^2}\ln r\ln\ln r\bigr)\cdot\OPT$. 
\end{proof}


\appendix

\section{\boldmath The $k$-route all-pairs cut problem} \label{append-allpairs}

\begin{theorem} \label{threeroute}
The $3$-route all-pairs cut problem is APX-hard.
\end{theorem}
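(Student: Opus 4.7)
My plan is to reduce from Minimum Vertex Cover on cubic triangle-free graphs, which is APX-hard by~\cite{PY91,Tovey84}. Using the equivalence noted just above the theorem statement, $3$-route all-pairs cut is equivalent to deleting a minimum-cost set of edges so that the remainder graph is a cactus (every edge lies in at most one cycle), equivalently, contains no diamond (that is, $K_4$ minus an edge) as a minor.

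Given a cubic triangle-free $G = (V,E)$ with unit edge costs, I would construct a weighted graph $H$ that uses two kinds of edges: \emph{cheap} unit-cost edges, which are the deletion candidates, and \emph{heavy} edges of cost $M$ for $M$ sufficiently large, whose deletion is never optimal. For each vertex $v\in V$ I introduce a vertex-gadget $X_v$ containing a designated unit-cost \emph{selection edge} $s_v$, whose deletion represents placing $v$ in the cover. For each edge $e=(u,v)\in E$ I introduce an edge-gadget $Y_e$, attached to $X_u$ and $X_v$ through heavy edges, designed so that $Y_e$ together with parts of $X_u, X_v$ induces exactly one cheap diamond minor, hit by deleting either $s_u$ or $s_v$. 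Heavy-edge padding between distinct gadgets is meant to ensure that no diamond minor spans more than one edge-gadget together with its two incident vertex-gadgets.

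I would then prove two directions. First, given a vertex cover $S\subseteq V$ of $G$, the deletion set $\{s_v : v\in S\}$ destroys every diamond minor in $H$, since every $e=(u,v)\in E$ has at least one endpoint in $S$ whose selection edge is a member of the unique cheap diamond associated with $Y_e$. Conversely, given any feasible cheap deletion $F$ for $H$, an exchange argument converts it without cost increase into a set of selection edges $\{s_v : v\in S_F\}$ with $|S_F| \leq |F|$, and the absence of parasitic diamond minors forces $S_F$ to be a vertex cover of $G$. Together these yield an L-reduction with constant parameters, transferring the $(1+\epsilon_0)$-inapproximability of cubic triangle-free Vertex Cover to $3$-route all-pairs cut.

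The main obstacle will be ensuring that $H$ contains \emph{no parasitic diamond minors} beyond the intended ones: unintended diamonds formed by interactions among several vertex- and edge-gadgets would impose extra hitting constraints unrelated to vertex covers, either inflating $\OPT$ above $|V|$-times-VC (breaking the L-reduction constants) or allowing cheap deletions that do not correspond to any cover. Ruling these out requires a careful enumeration of $4$-vertex configurations in $H$ together with their possible contractions --- one must consider minors, not just subgraphs --- critically exploiting both the cubic triangle-free structure of $G$ (to eliminate configurations involving adjacent neighbors of a common vertex) and the heavy-edge insulation between gadgets (so that any non-local candidate diamond is forced to include at least one heavy edge and hence survives every cheap deletion trivially). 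This insulation check is the delicate heart of the proof; once established, the L-reduction and the quantitative comparison of optima are routine.
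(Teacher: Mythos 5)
Your high-level framing is correct and matches the paper's: reduce from bounded-degree vertex cover (the paper uses maximum degree $\alpha=O(1)$; your restriction to cubic triangle-free graphs is a valid choice), and exploit the fact that feasibility for $3$-route all-pairs cut is equivalent to the remainder graph being a cactus, i.e., diamond-minor-free. You also correctly identify parasitic diamond minors as the central technical hazard. However, what you submit is a plan, not a proof, and the plan has a structural obstacle that you have not addressed.

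The gap is concrete. You never give the gadgets $X_v$, $Y_e$, and you ask them to satisfy a property that is genuinely hard to engineer: that each edge gadget contributes exactly one cheap diamond minor, and that this single minor is destroyed by deleting \emph{either} $s_u$ \emph{or} $s_v$. For a single edge deletion to kill a diamond \emph{minor}, the deleted edge must be an essential bridge in every embedding of that minor, and you would need both $s_u$ and $s_v$ to simultaneously have this property for the same minor. You acknowledge that verifying this, together with ruling out all unintended diamonds (which requires reasoning about contractions, not just subgraphs), is ``the delicate heart of the proof,'' and you do not carry it out. An L-reduction requires that machinery to be in place; deferring it leaves the theorem unproved.

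The paper avoids your difficulty by not insisting on a clean $p\leftrightarrow p$ cost correspondence. In the paper's construction, each vertex $v$ contributes a path $P_v$ of infinite-cost links closed into a cycle by a single unit-cost link $(a_v,b_v)$; each edge $e=(u,w)$ contributes a node $\sigma_e$ joined by unit-cost links to one link of $P_u$ and one link of $P_w$. Any finite-cost feasible solution is first massaged (by a short exchange argument) so that $\sigma_e$ keeps its connection to exactly one of the two paths, which already forces a deletion of cost $2|\hat E|$; the remaining degree of freedom is whether $(a_v,b_v)$ is deleted, which encodes vertex selection. Thus the optima relate as $c^*=2|\hat E|+p^*$, and since the source graph has bounded degree, $|\hat E|=O(p^*)$, so the additive term is harmless for an L-reduction. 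This ``pay $2|\hat E|$ to orient, then pay $p$ to cover'' cost structure is precisely what makes the insulation and exchange arguments short and checkable; by aiming for a purer correspondence you have set yourself a harder construction problem than the theorem requires, and left it unsolved.
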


\begin{proof}
We give an $L$-reduction from vertex cover on bounded-degree graphs, which is known to be
\apx-hard \cite{PY91}. 
Given a vertex-cover instance $\hat G=(\hat V, \hat E)$, where $\hat G$ has maximum degree $\alpha=O(1)$, 
we construct an instance $G=(V,E)$ of the $3$-route all-pairs cut problem. In the
following, to avoid confusion, we will refer to the elements $(\hat V, \hat E)$ of the
vertex-cover instance $\hat G$ as \emph{vertices} and \emph{edges}, and to the elements
$(V,E)$  of the constructed $3$-route all-pairs-cut instance as \emph{nodes} and
\emph{links}. 

\begin{figure}[t!]
\centering
\includegraphics[width = 0.55\textwidth]{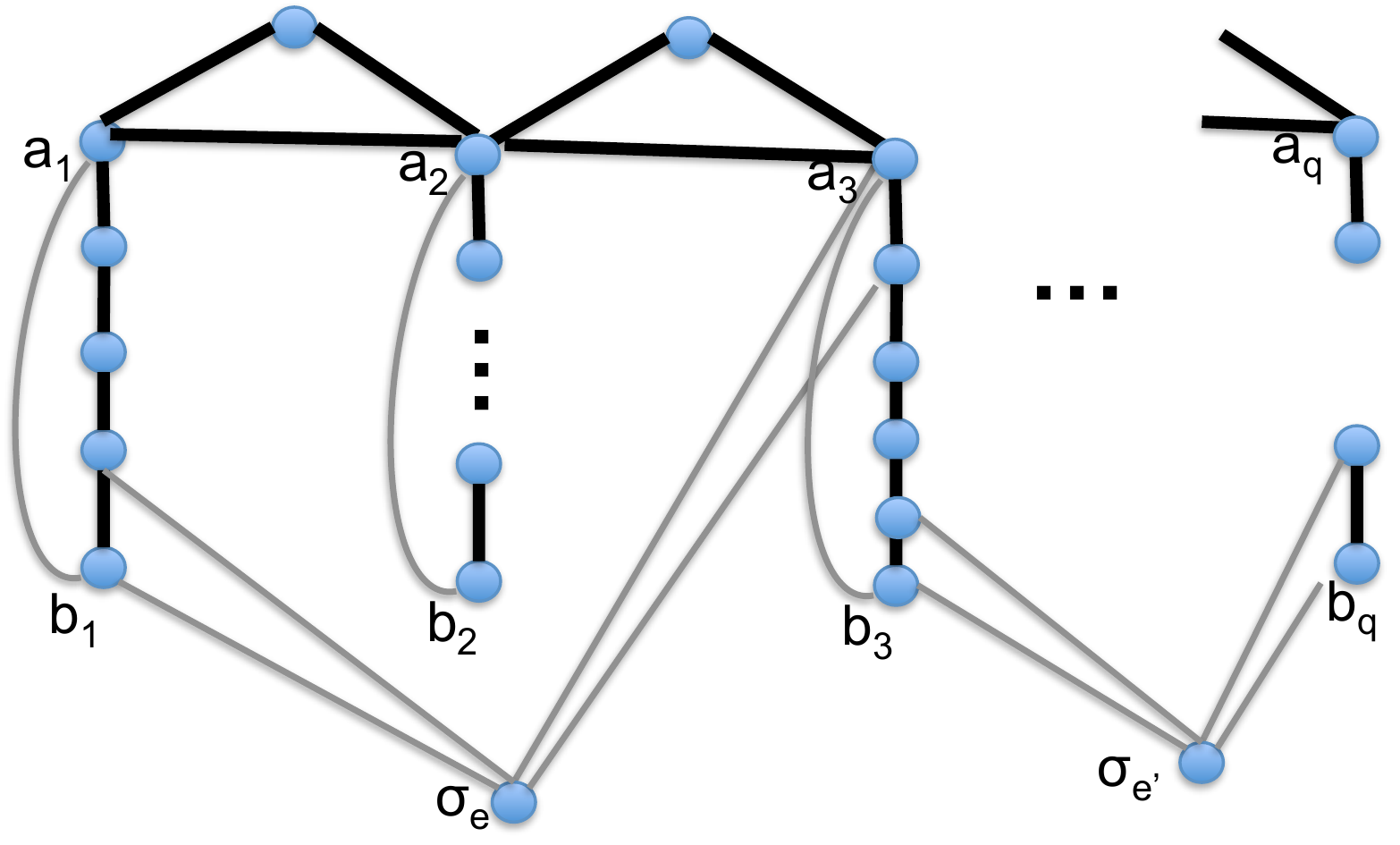} 
\caption{The instance created by our reduction from a vertex-cover instance $\hat G$ on
  $q$ vertices. Black edges have infinite cost and grey edges have unit cost.  
  Node $\cl_e$ represents the edge $(1,3)$ and node $\cl_{e'}$ represents the edge
  $(3,q)$. Vertex $1$ has degree 4 in $\hat G$ and vertex $3$ has degree 5 in $\hat G$. 
}
\label{fig:cactus}
\end{figure}

Let the vertices in $\hat V$ be numbered $1,2, \dots, |\hat V|$. 
For every vertex $v \in \hat V$, we introduce a path $P_v$ in $G$ that contains as many
links as the degree of $v$. That is, $P_v$ has one link $f_{e^v}$ for every edge 
$e \in \hat E$ incident to $v$ in $\hat G$. We give infinite cost to such links. 
Let $a_v$ be the first node of the path $P_v$ and $b_v$ be the last. We add a link
$(a_v,b_v)$ of unit cost in $G$. Note that $P_v$ and $(a_v,b_v)$ yields a cycle in $G$
for every $v \in \hat V$. 
We also connect $a_v$ to $a_{v+1}$ through a cycle formed by 3 links with infinite
cost. That is, we introduce $|\hat V| - 1$ triangles connecting all paths. 
For each edge $e =(u,v)\in \hat E$ we introduce a node $\cl_e$ and we connect $\cl_e$ 
to the endpoints of $f_{e^u}$ and to the endpoints of $f_{e^v}$, with links of unit cost. 
We let $G=(V, E)$ be the resulting graph for our $3$-route all-pairs cut instance (see Fig.\ref{fig:cactus}).

Let $p^*$ and $c^*$ be the cost of an optimal solution for the vertex-cover instance and
the cost of an optimal solution for the $3$-route all-pairs cut instance, respectively.  
We claim that: 
\begin{list}{(\roman{enumi})}{\usecounter{enumi} \topsep=0ex \itemsep=0ex
    \addtolength{\leftmargin}{-0.5ex} \addtolength{\labelwidth}{\widthof{(ii)}}}
\item If there exists a vertex cover in $\hat G$ of size $p$, then there is a solution for
  the $3$-route all-pairs cut instance of cost at most $2|\hat E| +p$. Note that this
  implies that  $c^* \leq 2|\hat E| +p^*  \leq (2\alpha + 1) p^*$.  

\item For any feasible solution for the $3$-route all-pairs cut instance of cost at most
  $2|\hat E| +p$ we can construct a cover of $\hat G$ of size at most $p$. 
\end{list}
This implies that we have an $L$-reduction, and shows that if we have a
$\beta$-approximation for $3$-route all-pairs cut, then we can obtain a vertex-cover
solution of size at most 
$\beta c^*-2|\hat E|=O(\beta)p^*$,
yielding the theorem.

In proving this, a useful observation is that a set $F$ of edges is feasible for $3$-route
all-pairs cut problem iff the remainder graph $\bG=(V,E\sm F)$ has the property that every two
simple cycles meet at most at one vertex. Such a graph is called a {\em cactus} graph.

For (i), suppose there exists a vertex cover of size $p$. For every $v$ in the cover,
we add $(a_v,b_v)$ in $F$. Furthermore, for each edge $e=(u,w) \in \hat E$, we select
one vertex between $u$ and 
$w$ that is in the cover (at least one of them is in the cover by definition), say $u$, and
we add to $F$ the links connecting $\cl_e$ to the endpoints of $f_{e^w}$ for the other vertex $w$.
It is not difficult to check that $F$ yields a feasible solution for the $3$-route all-pairs cut
instance (using the relationship to cactus graphs) of the claimed cost.

For (ii), suppose we have a feasible solution $F$ for the $3$-route all-pairs
cut instance, and consider the remainder graph obtained by removing $F$. 
Clearly, all links of infinite cost are still present.
Note that each node $\cl_e$ can have at most two links incident
to it in the remainder graph, and both these links must be incident to two nodes
of the same path $P_v$ for some $v$. If not, then we would have an
infinite-cost link of some triangle that connects the vertices $\{a_v\}_{v \in \hat V}$
that is contained into another cycle other than the triangle, contradicting feasibility of
our solution. 
We first argue that we may assume 
that each $\cl_e$ has
exactly two links incident to it in the remainder graph (and hence, also in $F$).  
Let $e=(u,w)$ be the edge in $\hat E$ corresponding to $\cl_e$. 
As argued above, $F$ contains at least one pair of links that
connect $\cl_e$ to the nodes of a path, say $P_u$.
Suppose that $F$ also contains some links connecting $\cl_e$ to $P_w$. If we
remove such links from $F$, 
we create one additional cycle, containing the link $f_{e^w}$, in the remainder
graph. Thus, the new remainder graph is not a cactus iff $f_{e^w}$ is already contained in
some cycle in $(V,E\setminus F)$.  
But there is only one possible cycle in $(V,E\setminus F)$ containing 
$f_{e^w}$, namely the cycle formed by $P_w$ and $(a_w,b_w)$. 
This observation implies that 
$F\cup\{(a_w,b_w)\}\setminus\{\text{the two links connecting $\cl_e$ to $P_w$}\}$ 
is a feasible solution to our $3$-route all-pairs cut instance. Furthermore, this solution  
has no greater cost since we are adding at most one link of unit cost, and we removing at
least one link of the same cost. 

Since the cost of our solution is at most $2|\hat E| + p$, it follows that there are at
most $p$ links in $F$ of the form $(a_v, b_v)$. We claim that these vertices $v$ form
a cover in $\hat G$. 
Suppose not. Then there is at least one edge $e \in \hat E$ that is not covered by these
vertices. 
We know that the node $\cl_e$  is connected in $\bG$ to the
endpoints of the link $f_{e^u}$ for one of the endpoints, say $u$, of $e$. The link
$f_{e^u}$ is therefore contained in the cycle formed by $P_u$ and $(a_u,b_u)$, since
$(a_u,b_u)$ is not in $F$, and is also contained in the triangle with the node $\cl_e$,
which contradicts feasibility of $F$. 
\end{proof}

\begin{corollary} \label{allk}
The $k$-route all-pairs cut problem is APX-hard for all $k\geq 3$.
\end{corollary}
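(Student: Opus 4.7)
Theorem~\ref{threeroute} handles the base case $k=3$, so it remains to show APX-hardness for $k\geq 4$. The plan is to reduce the $3$-route all-pairs cut problem to the $k$-route all-pairs cut problem. Starting from the bounded-degree instance $G=(V,E)$ produced by the proof of Theorem~\ref{threeroute}, I would construct a $k$-route instance $G'$ by adding $k-3$ auxiliary nodes $p_1,\dots,p_{k-3}$ and connecting each $p_i$ to every $v\in V$ by a single infinite-cost edge. The intuition is that each $p_i$ contributes exactly one extra edge-disjoint path, passing through $p_i$, between every pair $u,v\in V$, uniformly boosting pairwise edge-connectivity by $k-3$, so that the condition ``no pair in $G\sm F$ has $\geq 3$ edge-disjoint paths'' translates into ``no pair in $G'\sm F$ has $\geq k$ edge-disjoint paths''.

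The forward direction is immediate: any $3$-route feasible $F\subseteq E$ remains $k$-route feasible for $G'$ with the same cost, since each pair $u,v\in V$ has connectivity $\lambda_G(u,v)+(k-3)\leq 2+(k-3)=k-1$, and pairs involving some $p_i$ can be shown to have connectivity at most $k-1$ by bounding them against singleton cuts. Conversely, any $k$-route feasible $F$ for $G'$ must have $F\subseteq E$ (since all added edges are infinite cost) and must satisfy $\lambda_{G\sm F}(u,v)\leq 2$ for every $u,v\in V$; hence $F$ is $3$-route feasible for $G$ with the same cost. Together these claims yield an $L$-reduction transferring APX-hardness from $3$-route all-pairs cut to $k$-route all-pairs cut.

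The main obstacle is ensuring the pairwise edge-connectivity identity $\lambda_{G'}(u,v)=\lambda_G(u,v)+(k-3)$, rather than having it spoiled by some singleton cut $\{u\}$ of value $\deg_G(u)+(k-3)<\lambda_G(u,v)+(k-3)$, which would produce a feasibility condition weaker than needed. To address this, I would preprocess Theorem~\ref{threeroute}'s instance via a vertex-splitting operation: each high-degree node is replaced by a cluster of low-degree nodes joined internally by infinite-cost edges, a transformation that preserves both the cactus characterization of feasibility and the $L$-reduction from vertex cover, and produces an equivalent $3$-route APX-hardness instance in which every vertex's degree is sufficiently small relative to the pairwise connectivities in which it participates. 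An alternative route, avoiding any reduction between $k$-values, is to re-do Theorem~\ref{threeroute}'s construction directly for each $k$: replace the unit-cost edge $(a_v,b_v)$ by $k-1$ parallel unit-cost edges (giving $k$ edge-disjoint $a_v$-$b_v$ paths), leave the $\cl_e$ gadgets and triangles untouched, and re-verify the correspondence with vertex cover by a direct connectivity analysis in place of the cactus argument.
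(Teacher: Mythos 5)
Both routes you sketch have concrete gaps. For the apex construction (add $p_1,\dots,p_{k-3}$ joined to all of $V$ by infinite-cost edges), the assertion that ``pairs involving some $p_i$ can be shown to have connectivity at most $k-1$ by bounding them against singleton cuts'' is false: the singleton cut at $p_i$ has size $|V|$, and for any two apexes $p_i,p_j$ (present once $k\geq 5$) you already have $|V|$ internally disjoint length-$2$ paths $p_i\text{--}v\text{--}p_j$ through infinite-cost edges, so $\lambda(p_i,p_j)\geq|V|\gg k-1$ and the instance has \emph{no} finite-cost feasible solution at all; vertex-splitting inside $G$ does nothing to this. Even with a single apex ($k=4$), $\lambda(p,u)=\deg_{G\sm F}(u)+1$, and the cactus remainder graphs arising from Theorem~\ref{threeroute} routinely contain vertices of degree $\geq 3$ (the nodes $a_v$ and $\cl_e$, among others), again forcing infeasibility. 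Patching this by driving all degrees down to $2$ would force the remainder graph to be a disjoint union of paths and cycles, which is a far stronger condition than cactushood, and would change the problem.

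For the second route (replace the unit-cost edge $(a_v,b_v)$ by $k-1$ parallel unit-cost copies and leave everything else alone), you have parallelized the wrong edges. Once $k\geq 5$, the endpoints of any $f_{e^u}$ have edge-connectivity at most $3$ (one direct link, one around $P_u$, one through $\cl_e$), so the $\cl_e$ gadget imposes no constraint whatsoever; the only binding pairs are $(a_v,b_v)$, and removing a single parallel copy per vertex $v$ (total cost $|\hat V|$, independent of the vertex-cover optimum) is already feasible. The $L$-reduction collapses. The paper instead parallelizes the \emph{infinite-cost} infrastructure: it puts $k-2$ parallel copies of each link along $P_v$ and of each triangle-top link $(a_j,a_{j+1})$, while keeping the unit-cost edges $(a_v,b_v)$ and the $\cl_e$ links single. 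This keeps the cost structure of solutions identical to the $k=3$ case and ensures that an uncovered edge $e=(u,w)$ again witnesses a pair (the endpoints of $f_{e^u}$) with $k-2$ direct parallel paths, one path around $P_u$, and one path through $\cl_e$, i.e.\ connectivity $\geq k$. You should re-derive the construction with that parallelization.
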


\begin{proof}
The reduction is very similar to the one in the proof of Theorem~\ref{threeroute}. The
only change is that in the graph $G$ created from the given vertex-cover instance, we now have:
(a) $k-2$ parallel links between every pair of consecutive nodes of every 
 path $P_v$; and
(b) $k-2$ parallel links between $a_j, a_{j+1}$ for all $j=1,\ldots,|\hat V| -1$.

Suppose there exists a vertex cover of size $p$. As before, for each node $\cl_e$
we remove exactly one pair of links incident to $\cl_e$, and in particular we choose the
pair of links that connect  $\cl_e$ to the endpoints of the edge $f_{e^u}$ if $e=(u,v)$
and $v$ is in the cover, and the other pair otherwise. We also remove all edges of the
form $(a_v,b_v)$ for $v$ in the cover. We remove $2|\hat E|+p$ edges in total.

We claim that for every pair of nodes $z,w$ of the remainder graph, we have at most $k-1$ edge-disjoint paths. 
Every node that is not a node of a path $P_v$ has maximum degree two, and therefore this is clear.  
If $z$ and $w$ belongs to two different paths $P_v$ and $P_u$ with $u>v$, then every path connecting them must use either the link $(a_v, a_{v +1})$
(and there are at most $k-2$ such links) or the two upper links of the triangle formed with $a_v$ and $a_{v+1}$. 
Therefore, we can have at most $k-1$
edge-disjoint such paths. Finally, if  $z$ and $w$ belong to the same path $P_u$, we have $k-2$ paths given by the infinite cost links,
plus at most one additional path that uses either the edge $(a_u,b_u)$ or a sequence of pairs of links incident into the nodes $\{\cl_e\}$ for the edges $e$
that have $u$ as an endpoint in $\hat G$.
Note that, by construction, if   $(a_u,b_u)$ is still in the graph, then all the pairs of links incident into the nodes $\{\cl_e\}$ for the edges $e$
that have $u$ as an endpoint have been removed,
and therefore, once again we get at most $k-1$ edge-disjoint paths.

For the other direction, suppose we have a feasible solution $F$ for the $k$-route all-pairs
cut instance, and consider the remainder graph obtained by removing such set of links.  
Clearly, all links of infinite cost are still present.
Once again, each node $\cl_e$ corresponding to an edge $e=(u,v)$ can have at most two links incident
to it in the remainder graph, and both these links must be incident to two nodes
of the same path. If not, then we would have a pair of nodes ($a_v$ and $a_u$)
that are connected by $k$ edge-disjoint paths: 
$k-1$ given by the infinite-cost links not in the paths $P_u$  and $P_v$, and one which
uses the links in the paths $P_u$ and $P_v$  
and two links incident to $\cl_e$.
Also, as before, we may assume that each $\cl_e$ has exactly two links incident into it in
the remainder graph (and hence, in $F$), because otherwise for one endpoint $u$ of $e$ we
get that $F \cup\{(a_u,b_u)\} \setminus\{\text{the two links connecting $\cl_e$ to $P_u$}\}$
is a feasible solution to our $k$-route all-pairs cut instance of no larger cost.
So if the cost of $F$ is at most $2|\hat E| + p$, it follows that we have at most $p$
links in $F$ of the form $(a_v, b_v)$. We claim that these vertices $v$ form a 
cover in $\hat G$. 

Suppose not. Then there is at least one edge $e  \in \hat E$ that is not covered by such vertices.
We know that the node $\cl_e$  is connected in $\bG$ to the
endpoints of the link $f_{e^u}$ for one of the endpoints, say $u$, of $e$.
The endpoints of $f_{e^u}$ are therefore connected by $k-2$ parallel paths using one
single link, one path formed by the edge $(a_u,b_u)$  and edges of $P_u \setminus\{f_{e^u}\}$, 
and one path contained in the triangle with the node $\cl_e$. This contradicts feasibility
of $F$. 
\end{proof}

On the positive side, the $3$-route all-pairs cut problem admits an
$O(1)$-approximation. This follows from: 
(1) the equivalence of $3$-route all-pairs cut and the problem of removing a
min-cost set of edges so that the remainder graph is a cactus;  
(2) the results of Fiorini et al.~\cite{FioriniJP10}, who gave an $O(1)$-approximation for
the 
problem of removing a minimum-weight node set so that the remaining graph is a cactus; and 
(3) the edge-removal version easily reduces to the node-removal version by subdividing
edges, and setting the cost of the original vertices to $\infty$ and the cost of each 
vertex corresponding to an edge to be the cost of the edge.

Recently, Fomin et al.~\cite{FominLMS12} developed an $O(1)$-approximation algorithm for
the problem of removing the fewest number of nodes so that the remaining graph excludes a
minor from a given list of graphs, at least one of which should be planar. 
While $k$-route all-pairs cut 
can be stated as excluding the planar graph with $k$ parallel edges as a minor of the
remainder graph, the result of~\cite{FominLMS12} does not directly apply here.
This is because our transformation of an edge-weighted instance 
to a node weighted one introduces non-uniform node weights, whereas the algorithm
in~\cite{FominLMS12} is for uniform node weights.

\section{\boldmath Hardness of the edge-deletion $k$-route node-multicut
  problem} \label{knmc} 
Recall that in the edge-deletion $k$-route node-multicut (\edknmc) problem, we have
an undirected graph $G=(V,E)$ with nonnegative edge costs $\{c_e\}_{e\in E}$, and $r$
source-sink pairs $(s_1,t_1),\dots, (s_r,t_r)$ and an integer $k\geq 1$. 
We seek a minimum-cost set $F\sse E$ of edges such that the remainder graph 
$\bG=(V,E\sm F)$ contains at most $k-1$ node-disjoint $s_i$-$t_i$ paths for all
$i=1,\ldots,r$.  

Chuzhoy et al.~\cite{ChuzhoyMVZ12} show that \edknmc 
is hard to approximate within a factor $\Omega(k^\epsilon)$. They present a reduction from 
$3$-SAT$(5)$, which is the variant of $3$-SAT where each variable occurs in at most $5$
clauses, coupled with the parallel-repetition theorem, which is essentially a reduction 
from (the minimization version) of {\em label cover}.
However, Laekhanukit~\cite{Laekhanukit14} pointed out some subtle (but fixable) errors in
their proof and proposed a correction, but his reduction also suffers from some subtle
(again fixable) errors~\cite{Laekhanukit13}. We give a correct proof below 
via a somewhat simpler reduction than the ones in~\cite{ChuzhoyMVZ12,Laekhanukit14}.

Label cover was first introduced by Arora et al.~\cite{AroraBSS97} and has been
subsequently used as a basis for many hardness reductions (see,
e.g.,~\cite{AroraL97}). Kortsarz~\cite{Kortsarz99} presented a minimization 
version of label cover (sometimes known as \minrep) with the same hardness guarantee,
that has since found use in various network-design applications (see, e.g.,~\cite{DinitzKR12}).   

In the \minrep problem, 
we are given a bipartite graph $H=(U \cup W, F)$, two sets of labels $L_1$ (for vertices
in $U$) and $L_2$ (for vertices in $W$), and a 
constraint function for each edge $e$ defined as 
$\pi_e: L_1 \rightarrow L_2$.
A \emph{labeling} is given by specifying a set of labels $f(u) \subseteq L_1$
for every vertex $u \in U$ and a set of labels $f(w) \subseteq L_2$
for every vertex $w \in W$. 
We say that a labeling \emph{covers} an edge $e=uw \in F$ if
there exists  $a \in f(u)$ and  $b \in f(w)$
such that $\pi_e(a) = b$.
\emph{Min-Rep} asks for a labeling that covers
all the edges while minimizing $\sum_{u \in U} |f(u)| + \sum_{w \in W} |f(w)|$.

\begin{theorem}[see, e.g.,~\cite{WilliamsonS10}] \label{lchard}
There are constants $\e_0,\dt_0>0$ such that there is no polytime algorithm for
\minrep with approximation factor:
\begin{list}{--}{\topsep=0.5ex \itemsep=0ex \addtolength{\leftmargin}{-1ex}
    \addtolength{\labelwidth}{\widthof{--}}} 
\item $O\bigl(q^{\e_0}\bigr)$ unless P=NP, where $q=|L_1|+|L_2|$ is the size of the label
set; 
\item $O\bigl(\Dt^{\dt_0}\bigr)$ unless P=NP, where $\Dt$ is the maximum degree of the 
underlying graph;
\item $2^{\log^{1-\epsilon} m}$ for any constant $\e$, unless NP is contained in
  deterministic quasipolynomial time, where $m$ is the number of edges.
\end{list}
\end{theorem}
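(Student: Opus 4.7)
The plan is to derive Theorem~\ref{lchard} by reducing from a hard gap version of Label Cover, obtained via the PCP theorem and Raz's parallel repetition theorem, and then performing a gap-preserving reduction into the minimization form (\minrep). First, I would invoke the PCP theorem (along with expander-based occurrence-reducing tricks as in~\cite{PY91,Tovey84}) to conclude that gap-3SAT(5) is \nphard: distinguishing satisfiable formulas from at-most-$(1-\ve_1)$-satisfiable ones is \nphard for some absolute constant $\ve_1>0$. Converting this to a projection game in the usual way (placing variables on one side, clauses on the other, and using local-assignment labels with the natural projection constraints) produces a bipartite Label Cover instance $(H=(U\cup W,F),L_1,L_2,\{\pi_e\})$ with constant label sets such that YES instances admit a labeling of value $1$ per vertex covering every edge, whereas NO instances have Max-Rep value at most $1-\ve_2$ for some constant $\ve_2>0$.

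Next, I would apply Raz's parallel repetition theorem $t$ times to amplify the gap. This produces a new projection-game Label Cover instance whose label sets have size $q=q_0^t$, whose underlying bipartite graph has $\poly(|F|^t)$ vertices and edges, whose maximum degree is $\Delta_0^t$, and for which the NO-value drops to $2^{-\Omega(t)}$ while the YES-value remains $1$. The projection-game / \minrep\ reduction then goes through a standard averaging argument: if $\bigl(f(u),f(w)\bigr)_{u\in U,w\in W}$ is a feasible \minrep\ solution of total size $s$ covering all edges, then sampling uniformly one label per vertex from the $f$'s yields a (pure) labeling covering in expectation at least $\sum_{(u,w)\in F}\frac{1}{|f(u)||f(w)|}\geq\Omega\bigl(|F|\cdot\frac{|U||W|}{s^2}\bigr)$ edges. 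Hence in the NO case any feasible \minrep\ solution has size $s=\Om\bigl(2^{\Omega(t)/2}\cdot(|U|+|W|)\bigr)$, whereas in the YES case \minrep\ has value $|U|+|W|$, giving a multiplicative gap of $2^{\Omega(t)}$.

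The three bullets of the theorem are then obtained by tuning $t$ appropriately in terms of the parameter we wish to bound the hardness against:
\begin{list}{--}{\topsep=0.25ex \itemsep=0ex \addtolength{\leftmargin}{-1ex}}
\item Taking $t$ to be a sufficiently large constant yields hardness $2^{\Omega(t)} = \Om(q^{\ve_0})$ for an absolute constant $\ve_0>0$ depending only on $t$ and $q_0$.
\item Since the maximum degree after $t$-fold repetition is $\Dt_0^t$, the same gap can be re-expressed as $\Om(\Dt^{\dt_0})$ for an absolute constant $\dt_0>0$.
\item Taking $t=\Theta\bigl(\log^{1-\ve/2} n\bigr)$ makes the reduction run in quasi-polynomial time (instances of size $2^{\polylog n}$) and yields a gap of $2^{\log^{1-\ve}m}$; so this level of approximation would place \np in deterministic quasi-polynomial time.
\end{list}

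The main obstacle is really quantitative parallel repetition: one must invoke Raz's theorem (or its sharper successors) and carefully track the parameters $q$, $\Dt$, $m$ after repetition so that the three inapproximability bounds follow cleanly from a single gap-amplified Label Cover instance. A secondary but more mundane concern is that the Max-Rep to \minrep\ passage preserves bipartiteness, projection structure, and the relevant size parameters; this is handled by the averaging argument above without altering the underlying graph.
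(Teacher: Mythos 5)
The paper does not prove Theorem~\ref{lchard}; it cites it as a known black-box result from the literature (see, e.g.,~\cite{WilliamsonS10,Kortsarz99}), so there is no in-paper proof to compare against. Your reconstruction follows exactly the standard route by which this theorem is established in that literature: PCP theorem gives gap-3SAT(5), convert to a projection game, amplify with Raz's parallel repetition, and pass from MaxRep to MinRep via the random-one-label-per-vertex averaging argument. The overall structure is correct and is the intended proof.

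Two quantitative points in your write-up would need tightening for a careful proof. First, the averaging inequality $\sum_{(u,w)\in F}\frac{1}{|f(u)||f(w)|}\geq\Omega\bigl(|F|\cdot|U||W|/s^2\bigr)$ is not true for arbitrary bipartite graphs; it relies on the bi-regularity of the underlying graph, which you do get for free from parallel repetition but which should be invoked explicitly (otherwise a solution concentrating large $|f(\cdot)|$ on high-degree vertices can defeat a naive Cauchy--Schwarz). Second, your parameter choice $t=\Theta(\log^{1-\e/2}n)$ for the third bullet does not actually yield a gap of $2^{\log^{1-\e}m}$: with $m=n^{O(t)}$ one has $\log m = O(t\log n)$, and balancing $ct\geq(\log m)^{1-\e}$ forces $t=\Theta\bigl((\log n)^{(1-\e)/\e}\bigr)$, which is still quasi-polynomial but is a different exponent from what you wrote. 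Similarly, in the first bullet the constant $\e_0$ should depend only on the base gap and base label size (so that it is a universal constant valid for all $t$), not ``on $t$.'' These are standard bookkeeping fixes and do not change the plan.
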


\begin{figure}[t!]
\centering
\includegraphics[width = 0.7\textwidth]{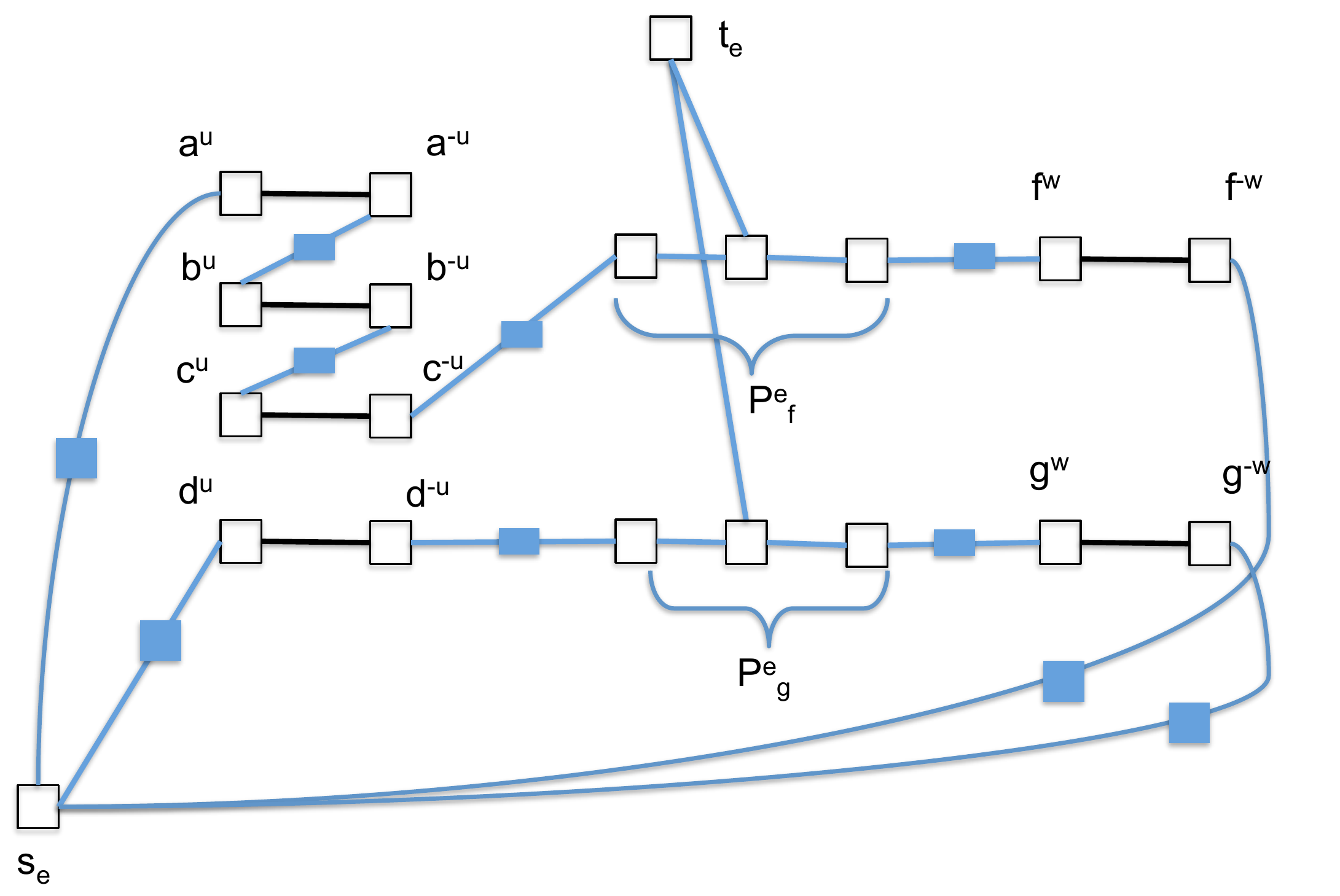} 
\caption{The gadget $(N_e,E(N_e))$ introduced for an edge $e=(u,w)$. 
  Here $L_1=\{a,b,c,d\}, L_2 = \{f,g\}, L^e_f =\{a,b,c\}, L^e_g =\{d \}$. Each black edge
  has unit cost, while all other edges have $\infty$ cost. Blue (grey) rectangles indicate
  the nodes in $V(C^e_f) \cup V(C^e_g)$.  
}
\label{fig:1}
\end{figure}

\begin{theorem} \label{edknmc-hard}
There is a polytime approximation-preserving reduction that given a \minrep-instance
\linebreak 
$(H,\pi,L_1,L_2)$ with label-size $q=|L_1|+|L_2|$ and maximum-degree $\Dt$, 
constructs an \edknmc-instance \linebreak
$\bigl(G,\{c_e\},\{s_1,t_1,\ldots,s_r,t_r\},k\bigr)$ with 
$k=O(\Dt q)$, $r=|E_H|$, and $|E_G|=O(|E_H|q)$. \\
Hence, there is no $O\bigl(k^{\e_0}\bigr)$-approximation for \edknmc, for some constant
$\e_0>0$, unless P=NP, and no $2^{\log^{1-\e}|E_G|}$-approximation for any $\e>0$ unless
NP is contained in deterministic quasipolynomial time.
\end{theorem}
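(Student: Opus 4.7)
The plan is to give an approximation-preserving reduction from \minrep, whose hardness is given by Theorem~\ref{lchard}. Starting with a \minrep-instance $(H=(U\cup W,F),\pi,L_1,L_2)$ of maximum degree $\Dt$ and label-size $q=|L_1|+|L_2|$, I construct a graph $G$ with one source-sink pair $(s_e,t_e)$ per edge $e\in F$. The key ingredients are unit-cost \emph{label edges} $e^u_a$ for each $(u,a)\in U\times L_1$ and $e^w_\ell$ for each $(w,\ell)\in W\times L_2$, whose removal is meant to encode the chosen labeling; all other edges carry infinite cost, so any finite-cost feasible cut consists entirely of label edges. For each $e=(u,w)\in F$, I attach a gadget $N_e$ as sketched in Fig.~\ref{fig:1}, consisting of $|L_2|$ columns $C^e_\ell$, one per $\ell\in L_2$. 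The column $C^e_\ell$ is built so that it contains $|L^e_\ell|=|\pi_e^{-1}(\ell)|$ internally node-disjoint $s_e$-$t_e$ paths, each of which is forced by the infinite-cost structure to traverse both the label edge $e^w_\ell$ and exactly one label edge $e^u_a$ with $a\in L^e_\ell$. By taking $\Dt$-fold parallel copies of each label edge (so the at most $\Dt$ gadgets sharing a label edge can each allocate its own copy to a node-disjoint path) and combining with appropriate auxiliary node splitters, the total node-connectivity of each $(s_e,t_e)$ pair becomes exactly $k=\Theta(\Dt q)$, the total edge count is $O(|E_H|q)$, and the number of commodities is $r=|E_H|$.

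Next I verify completeness: given a labeling $f$ of \minrep-cost $C$, form the cut $F^\star:=\{e^u_a:a\in f(u)\}\cup\{e^w_\ell:\ell\in f(w)\}$, which has cost $C$. For any $e=(u,w)\in F$, since $f$ covers $e$ there exist $a\in f(u),\,\ell\in f(w)$ with $\pi_e(a)=\ell$; removing $e^w_\ell$ knocks out column $C^e_\ell$ entirely, and removing the other $e^{u'}_{a'}$ and $e^{w'}_{\ell'}$ knocks out the corresponding paths in every other column of $N_e$. Summing the contributions column-by-column shows that the surviving $(s_e,t_e)$ node-connectivity drops strictly below $k$. For soundness, any feasible \edknmc-solution $F'$ of cost at most $C$ must consist only of label edges; read off the labeling $f(u)=\{a:e^u_a\in F'\}$ and $f(w)=\{\ell:e^w_\ell\in F'\}$, of cost at most $C$. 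For each $e=(u,w)\in F$, the fact that $(s_e,t_e)$-node-connectivity falls below $k$ must, by the gadget structure and the $\Dt$-fold padding, be witnessed by some column $C^e_\ell$ whose paths are all disabled, and this in turn requires both $e^w_\ell\in F'$ and at least one $e^u_a\in F'$ with $a\in L^e_\ell$. Thus $\ell\in f(w)$, $a\in f(u)$, and $\pi_e(a)=\ell$, so $f$ covers $e$.

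The main obstacle will be calibrating the gadget so that (i) the native $(s_e,t_e)$ node-connectivity is exactly $k=\Theta(\Dt q)$; (ii) cutting a strict subset of the label edges associated with a valid cover-pair $(a,\ell)$ does not spuriously drop the connectivity below $k$; and (iii) the reuse of label edges across the up to $\Dt$ gadgets incident to a given vertex of $H$ does not interfere with node-disjointness within any single gadget. This is precisely the step where the prior proofs in~\cite{ChuzhoyMVZ12,Laekhanukit14} contain the errors flagged in~\cite{Laekhanukit13}; my intended fix is the $\Dt$-fold copying together with intra-column node splitters mentioned above, which cleanly decouples the contribution of a label edge to distinct commodities. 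Once the gadget is calibrated, the parameters $k=O(\Dt q)$, $r=|E_H|$, $|E_G|=O(|E_H|q)$ are read off by inspection, and the two hardness statements follow directly from Theorem~\ref{lchard}: an $O(k^{\e_0})$-approximation for \edknmc would yield an $O(q^{\e_0})$-approximation for \minrep (contradicting P$\ne$NP for small enough $\e_0$), and a $2^{\log^{1-\e}|E_G|}$-approximation would yield a $2^{\log^{1-\e'}m}$-approximation for \minrep, contradicting the quasipolynomial-time bound.
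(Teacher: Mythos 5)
Your gadget design, as written, cannot work: you ask for $|L^e_\ell|$ \emph{internally node-disjoint} $s_e$-$t_e$ paths in column $C^e_\ell$, \emph{each} of which traverses the single label edge $e^w_\ell$. Two internally node-disjoint $s_e$-$t_e$ paths cannot both pass through the same edge (they would share its endpoints as internal vertices), so you get at most one such path per column, not $|L^e_\ell|$. The $\Dt$-fold duplication you introduce is aimed at a different issue (reuse of a label edge across up to $\Dt$ gadgets) and does not repair this. You also flag the ``calibration'' of the gadget as the main obstacle and punt on it; that is precisely where the earlier proofs of~\cite{ChuzhoyMVZ12,Laekhanukit14} went astray, so it cannot be left as a remark.

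The paper's construction makes a structurally different choice that sidesteps the problem. For each edge $e=(u,w)$ and each $b\in L_2$ with $L^e_b\neq\es$, it builds a \emph{single} cycle $C^e_b$ through $s_e$: one arc of the cycle carries all the unit-cost label edges $a^u\bar a^u$ for $a\in L^e_b$, the opposite arc carries the unit-cost label edge $b^w\bar b^w$, and a midpoint of the cycle (on the short path $P^e_b$) is attached to $t_e$. This contributes exactly one $s_e$-$t_e$ path per nonempty $L^e_b$, and to destroy that path one must cut on \emph{both} arcs, i.e.\ cut $b^w\bar b^w$ \emph{and} at least one $a^u\bar a^u$ with $\pi_e(a)=b$ --- which is exactly the covering condition. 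The extra $\Theta(\Dt q)$ term in $k$ is not from parallel copies of label edges but from direct length-2 $s_e$-$t_e$ paths through the boundary vertices $\Gm_{G'}(N_e)$ (vertices shared with adjacent edge-gadgets); these infinite-cost paths can never be cut and serve to make the target connectivity threshold uniform across commodities after padding. Your read-off of the labeling from a finite-cost cut and the use of Theorem~\ref{lchard} at the end are fine, but the core gadget must be replaced by something with the ``AND on opposite arcs of a cycle'' structure (or an equivalent) before the completeness/soundness argument can be made to go through.
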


\begin{proof}
We first describe the construction and then argue the approximation-preservation
property. 

\paragraph{The construction.}
For each vertex $u \in U$ and for each label in $a \in L_1$ we introduce two vertices
$a^u, \bar a^u$ in $G$ connected by an edge of unit cost. 
Intuitively, if we select this edge in an \edknmc solution for the instance we construct,
this implies that we are selecting label $a$ for $u$.
Similarly, for each $w \in W$ and for each label $b \in L_2$ we introduce two vertices
$b^w, \bar b^w$ connected by an edge of unit cost. 
The above edges will be the only ones having unit cost. All subsequent edges added to this
construction will have infinite cost.  

Consider an edge $e=(u,w)$ of $H$. For each such edge, we construct the following
gadget. We introduce two nodes $s_e,t_e$ that will form a terminal pair in our new
instance. $s_e$ and $t_e$ are connected as follows. 
For each $b \in L_2$, let $L^e_b \subseteq L_1$ be the labels of $L_1$ such that
$a \in L^e_b$ implies $\pi_e(a) = b$. 
Clearly, the sets $L^e_b$, for all $b \in L_2$, form a partition of the labels $L_1$.
For each non empty set $L^e_b$ we add a path $P^e_b$ of length 2 with the middle vertex
connected to $t_e$. We then add edges to form a cycle $C^e_b$ starting and ending at $s_e$,
containing all the edges in $L^e_b$, the edges in the path $P^e_b$ and the edge $b^w \bar b^w$.
Finally we split each of these added edges into 2 by introducing a middle vertex. We
let $V(C^e_b)$ we the the set of new vertices introduced by this operation, and let
$N_e$ be all the vertices participating in this gadget; see Fig.~\ref{fig:1}.

Let $G'=\bigl(\bigcup_{e'\in E_H}N_{e'},\bigcup_{e'\in E_H}E(N_{e'})\bigr)$ be the graph formed
by the vertices and edges of all the edge gadgets. 
For all $e$, we are going to add other edges forming paths (of length 2) between
$s_e$ and $t_e$. We add edges $(s_e,v)$, $(v,t_e)$ for all vertices 
$v\in\Gm_{G'}(N_e)$, that is, for all $v\notin N_e$ that are adjacent in some edge-gadget
to some node in $N_e$.
Note that $N_e\cap N_{e'}=\es$ 
unless $e$ and $e'$ share an endpoint in $H$, say $u$, in which case, the two gadgets
share the vertices $\{a^u, \bar a^u\}$ for all labels $a$ of $u$.
Thus, $|\Gm_{G'}(N_e)|=O(\Delta q)$.

Define $k_e:=|\{b\in L_2: L^e_b\neq\es\}|+|\Gm_{G'}(N_e)|$.
Finally, set $k:= \max_{e} k_e$.
For all edges $e$ with $k_e<k$, we add $k-k_e$ new vertices and connect these to $s_e$ and
$t_e$ (via $\infty$-cost edges). Let $G$ be the resulting graph. This concludes our
construction. 

\paragraph{Approximation preservation.}
We now argue that any feasible solution to the 
\edknmc-instance of finite cost yields a feasible solution to the \minrep-instance of no
greater cost, and vice versa. This will complete the proof. 

\smallskip

($\Rightarrow$) Let $Z$ be a solution of finite cost for our \edknmc instance. 
Consider a node $u\in V_H$ and let $L_u\in\{L_1,L_2\}$ be the label-set of $u$.
Set $f(u):= \{a \in L_u: (a^u, \bar a^u)\in Z\}$.
Clearly, the cost of the two solutions are the same. We now claim that the resulting
labeling is feasible for the label-cover instance. 
Suppose not, then there is an edge $e\in E_H$ that is not covered. By our construction,
this means that for each $b \in L_2$ with $L^e_b\neq\es$, the subgraph of the remainder
subgraph $\bG=(V_G,E_G\setminus Z)$ induced by the nodes of the cycle $C^e_b$
and $t_e$ is connected. Each such cycle $C^e_b$, yields therefore one vertex-disjoint path
in $\bG$ between $s_e$ and $t_e$. Also, all edges 
incident to $s_e$ and $t_e$ are present in $\bG$ (since they have $\infty$ cost), so all
length-2 paths in $G$ between $s_e$ and $t_e$ are still present in $\bG$. 
It follows that the vertex connectivity of $s_e$ and $t_e$ is at least $k$, a
contradiction.  

($\Leftarrow$) For the other direction, given a labeling for the label-cover instance, we
construct $Z:=\{(a^u,\bar a^u): u\in V_H,\ a \in f(u)\}$. 
Clearly, the cost of the two solutions is the same.
We claim that $Z$ is a feasible solution to our \edknmc instance.
Suppose not. Then, for some $e=(u,w)\in E_H$, the $s_e$-$t_e$ vertex connectivity in
the remainder graph $\bG=(V_G,E_G\sm Z)$ is at least $k$. Therefore we can find a set of
vertex-disjoint paths $\mathcal P$ between these vertices of size $|\mathcal P|\geq k$. 
Without loss of generality, we may assume that all the $k-|\{b\in L_2:L^e_b\neq\es\}|$
length-2 paths between $s_e$, $t_e$ are in $\mathcal P$. 
If we remove the internal nodes on these length-2 paths from $\bG$, the connected
component containing $s_e$ in the remaining portion of $\bG$ is a subgraph of the gadget
$(N_e,E(N_e))$ for edge $e$ (shown in Fig.~\ref{fig:1}). This means that this subgraph
contains at least $|\{b\in L_2:L^e_b\neq\es\}|$ $s_e$-$t_e$ vertex-disjoint paths. 
Clearly, this is only possible if, for every label $b$ such that $L^e_b\neq\es$,
either $(b^w,\bar b^w)\notin Z$ or $(a^u,\bar a^u)\notin Z$ for every $a\in L^e_b$.
But that means that the edge $e$ is not covered by our labeling, a contradiction.
\end{proof}


\begin{thebibliography}{10}

\bibitem{AroraL97}
S.~Arora and C.~Lund.
\newblock Hardness of approximations.
\newblock In D.~Hochbaum, editor, {\em Approximation Algorithms for NP-hard Problems}, PWS
Publishing co., Boston, MA, USA, 1997.  

\bibitem{AhujaMO93}
R.~Ahuja, T.~Magnanti, J.~Orlin,
\newblock  Network Flows: Theory, Algorithms, and Applications.
\newblock {\em Prentice-Hall, Inc.} 1993.

\bibitem{Applebaum11}
B.~Applebaum,
\newblock Pseudorandom generators with long stretch and low locality from
random local one-way functions. 
\newblock In {\em  Electronic Colloquium on Computational Complexity} (ECCC), 18:7, 2011.

\bibitem{AroraBSS97}
S.~Arora, L.~Babai, J.~Stern, and Z.~Swedyk, 
\newblock The hardness of approximate optima in lattices, codes, and systems of linear equations. 
\newblock {\em Journal of  Computer System Science}, 54:317Ð331, 1997

\bibitem{BarmanC10}
S.~Barman and S.~Chawla,
\newblock Region growing for multi-route cuts. 
\newblock In {\em Proceedings of the 21st SODA}, pages 404--418, 2010.

\bibitem{BhattL84}
S.~Bhatt, and F.~T.~Leighton, 
\newblock A framework for solving VLSI graph layout problems.
\newblock{\em Journal of Computer and System Sciences} 28.2 300-343, 1984.

\bibitem{BruhnCHKS08}
H.~Bruhn, J.~Cerny, A.~Hall, P.~Kolman, and J.~Sgall. 
\newblock Single source multiroute flows and cuts on uniform capacity networks. 
\newblock {\em Theory of Computing}, 4(1):1--20, 2008.

\bibitem{ChakrabartyKLN13}
D.~Chakrabarty, R.~Krishnaswamy, S.~Li, and S.~Narayanan,
\newblock Capacitated Network Design on Undirected Graphs. 
\newblock In {\em Proceedings of the 16th APPROX}, pages 71--80, 2013. 

\bibitem{ChawlaGR08}
S.~Chawla, A.~Gupta, and H.~R\"{a}cke. 
\newblock Embeddings of negative-type metrics and an improved approximation to generalized
sparsest cut. 
\newblock {\em ACM Transactions on Algorithms}, 4(2), 2008. 

\bibitem{ChekuriK08}
C.~Chekuri and S.~Khanna,
\newblock Algorithms for 2-Route Cut Problems.
\newblock{\em Automata, Languages and Programming}, 472-484, 2008.

\bibitem{ChuzhoyMVZ12}
J.~Chuzhoy, Y.~Makarychev, A.~Viajayaraghavan, and Y.~Zhou,
\newblock Approximation algorithms and hardness of the $k$-route cut problem.
\newblock In {\em Proceedings of the 23rd SODA}, pages 780--799, 2012.

\bibitem{DahlhausJPSY92}
D.~Dahlhaus, D.~S.~Johnson, C.~H.~Papadimitriou, P.~D.~Seymour, and M.~Yannakakis,
\newblock  The complexity of multiway cuts.
\newblock{\em Proceedings of the 24th STOC}, 241-251, 1992

\bibitem{DinitzG13}
M.~Dinitz and A.~Gupta.
\newblock Packing interdiction and partial covering problems.
\newblock In {\em Proceedings of the 16th IPCO}, pages 157--168, 2013.

\bibitem{DinitzKR12}
M.~Dinitz, G.~Kortsarz, and R.~Raz. 
\newblock Label cover instances with large girth and the hardness of approximating basic k-spanner,
\newblock{\em Proceedings of the 39th international colloquium conference on Automata,
  Languages, and Programming} 290-301, 2012. 

\bibitem{EvenNRS00}
G.~Even, J.~Naor, S.~Rao, and B.~Schieber,
\newblock Divide-and-conquer approximation algorithms via spreading metrics.
\newblock {\em Journal of the ACM}, 47(4):585--616, 2000.

\bibitem{EvenNSS98}
G.~Even, J.~Naor, B.~Schieber, and M.~Sudan,
\newblock Approximating minimum feedback sets and multicuts in directed graphs.
\newblock {\em Algorithmica}, 20:151--174, 1998.

\bibitem{FioriniJP10}
S.~Fiorini, G.~Joret, U.~Pietropaoli,
\newblock  Hitting Diamonds and Growing Cacti.
\newblock {\em 14th Conference on Integer Programming and Combinatorial Optimization}, pages 191-204, 2010.

\bibitem{FominLMS12}
F.~Fomin, D.~Lokshtanov, N.~Misra, and S.~Saurabh, 
\newblock Planar $F$-Deletion: Approximation, Kernelization and Optimal FPT Algorithms. 
\newblock In {\em Proceedings of the 53rd FOCS}, pages 470--479, 2012.

\bibitem{GargVY96}
N.~Garg, V.~Vazirani, and M.~Yannakakis, 
\newblock Approximate max-flow min-(multi)-cut theorems and their applications. 
\newblock {\em SIAM Journal on Computing}, 25:235--251, 1996.

\bibitem{HayrapetyanKPS05}
A.~Hayrapetyan, D.~Kempe, M.~P‡l, and Z.~Svitkina,
\newblock Unbalanced graph cuts.
\newblock{\em In Proceedings of the 13th ESA}, pages 191--202, 2005.

\bibitem{Kishimoto96}
W.~Kishimoto,
\newblock A method for obtaining the maximum multi-route flow in a network. 
\newblock{\em Networks}, 27(4):279-291, 1996.

\bibitem{KolmanS12}
P.~Kolman, and C.~Scheideler, 
\newblock Approximate Duality of Multicommodity Multiroute Flows and Cuts:
Single Source Case.
\newblock In {\em Proceedings of the 23rd SODA}, pages 800--810, 2012.

\bibitem{KolmanS11}
P.~Kolman, and C.~Scheideler, 
\newblock Towards Duality of Multicommodity Multiroute Cuts and Flows: Multilevel
Ball-Growing.  
\newblock In {\em Proceedings of the 28th STACS}, pages 129--140, 2011.

\bibitem{Kortsarz99}
G.~Kortsarz,
\newblock On the Hardness of Approximating Spanners.
\newblock{\em Algorithmica},  30, 1999.

\bibitem{Laekhanukit13}
B.~Laekhanukit,
\newblock {\em Personal communication}, October 2013.

\bibitem{Laekhanukit14}
B.~Laekhanukit,
\newblock Parameters of two-prover-one-round game and the hardness of connectivity
problems.
\newblock In {\em Proceedings of the 25th SODA}, pages 1626--1643, 2014. 

\bibitem{LeightonR99}
F.~Leighton and S.~Rao.,
\newblock Multicommodity max-flow min-cut theorems and their use in designing
approximation algorithms. 
\newblock {\em Journal of the ACM}, 46:787-832, 1999.

\bibitem{LiZ13}
A.~Li, and P.~Zhang,
\newblock Unbalanced graph partitioning.
\newblock{\em Theory of Computing Systems}, 53:454-466, 2013.

\bibitem{MalikS00}
J.~Shi, and J.~Malik,
\newblock Normalized cuts and image segmentation.
\newblock{\em IEEE Transactions on Pattern Analysis and Machine Intelligence}, 888--905, 2000.

\bibitem{Phillips93}
C. A. Phillips. 
\newblock The network inhibition problem. 
\newblock In {\em Proceedings of the 25th STOC}, pages 776--785, 1993.

\bibitem{PY91}
C.~Papadimitriou and M.~Yannakakis.
\newblock Optimization, approximation, and complexity classes.
\newblock{\em Journal of Computer and System Sciences}, 43:425--440, 1991.

\bibitem{Racke02}
H.~R\"{a}cke. 
\newblock Minimizing congestion in general networks. 
\newblock In {\em Proceedings of the 43rd FOCS}, pages 43--52, 2002.

\bibitem{Seymour95}
P.~Seymour.
\newblock Packing directed circuits fractionally.
\newblock {\em Combinatorica}, 15:281--288, 1995.

\bibitem{Shmoys97}
D.~Shmoys
\newblock Cut problems and their application to divide-and-conquer.
\newblock In D.~Hochbaum, editor, {\em Approximation Algorithms for NP-hard Problems}, PWS
Publishing co., Boston, MA, USA, 1997.  

\bibitem{Tovey84}
C.~Tovey,
\newblock A simplified NP-complete satisfiability problem.
\newblock{\em Discrete Applied Mathematics } 8.1: 85-89, 1984.

\bibitem{Vazirani01}
V.~Vazirani,
\newblock {\em Approximation Algorithms}. Springer Verlag, 2001.

\bibitem{WilliamsonS10}
D.~Williamson and D.~Shmoys.
\newblock {\em The Design of Approximation Algorithms}, Cambridge University Press, 2010.  

\bibitem{Wood93}
R. K. Wood. 
\newblock Deterministic network interdiction. 
\newblock {\em Mathematical and Computer Modeling}, 17(2):1--18, 1993.

\bibitem{Zenklusen10}
R.~Zenklusen.
\newblock Network flow interdiction on planar graphs.
\newblock {\em Discrete Applied Mathematics}, 158:1441--1455, 2010.

\end{thebibliography}
\end{document}